\def\BibTeX{{\rm B\kern-.05em{\sc i\kern-.025em b}\kern-.08em
    T\kern-.1667em\lower.7ex\hbox{E}\kern-.125emX}}
\newif\ifcomment
\newcommand{\cmahsa}[1]{\todo[color=green!40,inline]{{\bf Mahsa:} #1}}
\newcommand{\cben}[1]{\todo[color=blue!40,inline]{{\bf Ben:} #1}}
\newcommand{\crida}[2][]{\todo[color=teal!40,#1]{{\bf Alex:} #2}}
\newcommand{\cvincent}[2][]{\todo[color=red!40,#1]{{\bf Vincent:} #2}}
\newcommand{\mahsa}[1]{\color{green!70!black} #1 \color{black}\xspace}
\newcommand{\ben}[1]{\color{blue} #1 \color{black}\xspace}
\newcommand{\rida}[1]{\color{teal} #1 \color{black}\xspace}
\newcommand{\vincent}[1]{\color{red} #1 \color{black}\xspace}
\newcommand{\cmahsa}[2][]{}
\newcommand{\cben}[2][]{}
\newcommand{\crida}[2][]{}
\newcommand{\cvincent}[2][]{}
\newcommand{\mahsa}[1]{}
\newcommand{\ben}[1]{}
\newcommand{\rida}[1]{}
\newcommand{\vincent}[1]{}
\newcommand{\vbeta}{\boldsymbol{\beta}} 
\newcommand{\ve}{\boldsymbol{e}}
\newcommand{\va}{\boldsymbol{a}}
\newcommand{\vb}{\boldsymbol{b}}
\newcommand{\vc}{\boldsymbol{c}}
\newcommand{\vf}{\boldsymbol{f}}
\newcommand{\vzero}{\boldsymbol{0}}
\newcommand{\vone}{\boldsymbol{1}}
\newcommand{\A}{{\mathcal A}}                   
\newcommand{\B}{{\mathcal B}}                   
\newcommand{\hmu}{\boldsymbol{\mu}}
\newcommand{\sem}[2]{[\![#1]\!](#2)}
\newcommand{\genS}[1]{f_{#1}}
\newcommand{\LgenS}[1]{\tilde{f}_{#1}}
\newcommand{\N}{\mathbb{N}}
\newcommand{\K}{\mathbb{K}}
\newcommand{\Z}{\mathbb{Z}}
\newcommand{\Q}{\mathbb{Q}}
\newcommand{\KK}{\K_{u}}
\newcommand{\size}[1]{\|#1\|}
\newcommand{\SDerive}[1]{\Theta #1}
\newcommand{\SDeriveN}[2]{\Theta^{(#2)} #1}
\newcommand{\sequence}{\mathsf{sequence}}
\newcommand{\set}{\mathsf{set}}
\newcommand{\cycle}{\mathsf{cycle}}
\newtheorem{theorem}{Theorem}
\newtheorem{corollary}{Corollary}
\newtheorem{proposition}{Proposition}
\newtheorem{example}{Example}
\begin{document}

\title{Differential Tree Automata}

\author{\IEEEauthorblockN{Rida Ait El Manssour\,\orcidlink{0000-0001-6228-9071}}
\IEEEauthorblockA{\textit{University of Oxford, UK}}
\and
\IEEEauthorblockN{Vincent Cheval\,\orcidlink{0000-0002-3622-2129}}
\IEEEauthorblockA{\textit{University of Oxford, UK}}
\and
\IEEEauthorblockN{Mahsa Shirmohammadi\,\orcidlink{0000-0002-7779-2339}}
\IEEEauthorblockA{\textit{CNRS, IRIF, France}}
\and
\IEEEauthorblockN{James Worrell\,\orcidlink{0000-0001-8151-2443}}
\IEEEauthorblockA{\textit{University of Oxford, UK}}
}

\maketitle

\begin{abstract}
A rationally dynamically algebraic (RDA) power series is one that 
arises as (a component of) the solution of 
a system of differential equations of the form
$\boldsymbol{y}' = F(\boldsymbol y)$, where $F$ is a vector of rational functions
that is defined at $\boldsymbol y(0)$.
RDA power series subsume algebraic power series and are a proper subclass of differentially algebraic power series (those that satisfy a univariate polynomial-differential equation).
We give a combinatorial characterisation of RDA power series in terms of 
exponential generating functions of regular languages of labelled trees.  
Motivated by this connection, we define the notion of a differential tree automaton.  Differential tree automata generalise weighted tree automata by allowing the transition weights to be rational functions of the tree size.  Our main result is that the ordinary generating functions of the formal tree series recognised by differential tree automata are exactly the differentially algebraic power series.   The proof of this result establishes a general form of recurrence satisfied by the
sequence of coefficients of a differentially algebraic power series,
generalising Reutenauer's matrix representation of polynomially recursive sequences.
As a corollary we obtain a procedure for determining equality of differential tree automata.

\end{abstract}

\begin{IEEEkeywords}
Power series, Rationally Dynamically Algebraic, Differentially Algebraic Series, Weighted Tree Automata, Combinatorial Species, Recurrences.
\end{IEEEkeywords}


\section{Introduction}


\subsection{Combinatorial Classes, Automata, and Power Series}
Many combinatorial classes admit natural representations as languages of words and trees, which can be specified using automata or grammars.  This provides a method for systematically extracting generating functions for combinatorial classes from their specifications (as is also done in  the theory of combinatorial species~\cite{bergeron1998combinatorial}) and leads to a natural correspondence between classes of automata and classes of power series.  For example, 
regular languages of words have rational generating functions~\cite[Section I.4]{flajolet2009analytic}, while unambiguous context-free languages of words have algebraic generating functions~\cite[Sections I.5]{flajolet2009analytic}.
Here, recall that for a field $\K$ a formal power series $f \in \K[\![x]\!]$ is \emph{rational} if there are polynomials $P,Q \in \K[x]$ such that $f=\frac{P}{Q}$ and
$f$ is \emph{algebraic} if there are polynomials 
$P_0,\ldots,P_d \in \K[x]$ such that $\sum_{i=0}^d P_i f^i =0$.

In this paper we develop the correspondence between automata, power series, and combinatorial classes to relate tree automata and
differentially algebraic power series.  Our main contribution is to define the class of differential tree automata and to show that their generating functions are precisely the differentially algebraic power series.
Recall  here that a formal power series $f \in \K[\![x]\!]$ is \emph{differentially algebraic} if it satisfies a differential equation 
$P(f,f',\ldots,f^{(k)})=0$
for some polynomial $P(y_1,\ldots,y_k)$ with coefficients in $\K[x]$.
We also give a combinatorial characterisation of 
a proper subclass of differentially algebraic power series---the so-called \emph{rationally dynamically algebraic} (RDA) series~\cite{FORSMAN1992341,ovchinnikov2022bounds}.
These arise as (components of) solutions of
systems of differential equations
\begin{gather}y_1'=F_1(y_1,\ldots,y_k), \, \ldots, \, y'_k=F_k(y_1,\ldots,y_k),
\label{eq:RDA}
\end{gather}
where $F_1,\ldots,F_k \in \K(x,y_1,\ldots,y_k)$ are rational functions that are defined at $(0,y_1(0),\ldots,y_k(0))$. 
 In Section~\ref{sec:SPECIES}
we give a collection of rational
dynamical systems that characterise the exponential generation
functions of standard constructions of combinatorial
species.

\subsection{Tree Automata and their Generating Functions}
 Our main results concern the notion of 
\emph{differential tree automata}: 
a generalisation of weighted tree automata~\cite{fülöp2024weightedtreeautomata}.  Let $\Sigma$ be a finite ranked alphabet and write 
$T_\Sigma$ for the set of $\Sigma$-trees.  
As with a $\K$-weighted automaton,
a differential tree automaton~$\mathcal A$ computes a function 
$[\![\mathcal A]\!] : T_\Sigma \rightarrow \K$.
Recall that a transition in a $\K$-weighted tree automaton has the form
\[
\sigma(q_1,\ldots,q_\ell) \stackrel{a}{\longrightarrow} q \, ,
\]
where $\sigma$ is a $\ell$-ary alphabet symbol, $q_1,\ldots,q_\ell$ and $q$ are states of the automaton,
and $a \in \K$  is the weight of the transition. 
Intuitively, the automaton
reads the symbol $\sigma$ and makes a transition from the tuple of states $q_1,\ldots,q_\ell$ to $q$ with weight~$a$.
Generalising this pattern, in a differential tree automaton the weight of a transition is a rational function $R(x)$ that has no positive integer poles.  
When instantiated on a tree~$t$, the transition has weight $R(\size{t})$, where~$\size{t}$ denotes the size of~$t$ (only counting  the number of internal nodes in $t$).

We consider two different generating functions associated with (weighted and differential) tree automata.
The \emph{ordinary generating function} of an automaton $\mathcal A$
is the power series $f_{\mathcal A} \in \K[\![x]\!]$ defined by
\begin{gather}
    f_{\mathcal A} (x) = \sum_{n \in \mathbb N} \, \Bigg( \sum_{\substack{t \in T_\Sigma\\ \size{t}=n}} \sem{\A}{t} \Bigg) x^n.
\label{eq:ordinary}
\end{gather}
A formulation of the Chomsky-Schutzenberger theorem is that
the class of algebraic power series in $\K[\![x]\!]$ coincides with the class of ordinary generating functions of tree series recognized by
$\K$-weighted tree automata.   

We also introduce the notion of a \emph{labelled generating function}.
A \emph{monotonic labelling} of a tree $t\in T_\Sigma$ of size $n$ is a bijective mapping from the set of internal nodes of $t$ to
the set $\{1,\ldots,n\}$ such that the label of any node is greater than the label of any of its children.
Define
$\lambda :T_\Sigma\rightarrow\mathbb Q$ such that $\lambda(t)$ is the proportion of labellings of $t \in T_\Sigma$ that are monotonic. That is, 
the number of monotonic labellings of $t$ is 
$\lambda(t) n!$ .
The labelled generating function of $\mathcal A$ is the formal power series
$\widetilde{f}_{\mathcal A} \in \K[\![x]\!]$ defined by
\begin{equation} 
    \widetilde{f}_{\mathcal A}(x) := \sum_{n\in\mathbb N} \Bigg(  \sum_{\substack{t\in T_\Sigma\\ \size{t}=n} }
\lambda(t) \sem{\A}{t} \Bigg) x^n.\label{eq:labelled}
\end{equation}

\subsection{Main Results} 
Our first result is as follows:

\begin{restatable}{theorem}{thmainone}
    The following are equivalent for a power series $g(x) = \sum_{n=0}^\infty a_n x^n$ in $\K[\![x]\!]$:
\begin{enumerate}
    \item $g$ is rationally dynamically algebraic; \label{enum:thm:MAIN1-RDA}
    
    \item $g$ is equal to the ordinary generating function of a differential tree automaton in which the rational weight of 
 every transition  has the form $\frac{ax+b}{x}$ for some $a,b\in \K$;\label{enum:thm:MAIN1-ordinary}
 \item $g$ is equal to the labelled generating function of a $\K$-weighted tree automaton.\label{enum:thm:MAIN1-labelled}
 \end{enumerate}
\label{thm:MAIN1}
\end{restatable}
From Theorem~\ref{thm:MAIN1} it follows that every RDA power series $g \in \mathbb Z[\![x]\!]$ 
can be written $g=\widetilde{f}_{\mathcal A}-\widetilde{f}_{\mathcal B}$, where $\mathcal A$ and $\mathcal B$ are
deterministic bottom-up tree automata.

Removing the restriction on the form of rational function in Item~\ref{enum:thm:MAIN1-ordinary} of~\Cref{thm:MAIN1}, we show that the ordinary generating functions of differential tree automata 
are precisely the differentially algebraic series.


\begin{restatable}{theorem}{thmaintwo}
\label{thm:MAIN2}
The following are equivalent for a power series $g(x) = \sum_{n=0}^\infty a_n x^n$ in $\K[\![x]\!]$:
\begin{enumerate}
    \item $g$ is differentially algebraic;\label{enum:thm:MAIN2-D-algebraic}
    \item $g$ is equal to the ordinary generating function
    of a differential tree automaton;\label{enum:thm:MAIN2-ordinary}
\item  $(a_n)_{n=0}^\infty$ is a component of a sequence 
$(\boldsymbol b_n)_{n=0}^\infty \in \K^d$ of vectors that satisfies a bilinear recurrence 
with polynomial coefficients, of the form
\[
\boldsymbol b_{n} = \vb_{n-1} \, P(n) + 
 \sum_{k=0}^{n-1}
 (\boldsymbol{b}_{k} \otimes \boldsymbol{b}_{n-k-1})Q(n) \, ,
\]
where $P \in \K(x)^{d\times d}$ and $Q \in \K(x)^{d^2\times d}$ have entries defined on positive integers.
\label{item:thm:MAIN2}\label{enum:thm:MAIN2-recurrence}
\end{enumerate}
\end{restatable}

The proof of~\Cref{thm:MAIN2} builds on a construction of Denef and Lipschitz~\cite{denef1984power}, who showed that 
the sequence of coefficients $(a_n)_{n=0}^\infty$ of a
differentially algebraic power series satisfies a recurrence of the form
\[P(n) \,  a_{n} = Q_n(a_0,\ldots,a_{n-1}) \, , \]
where $P_n$ is a polynomial whose form depends on $n$.  Building on the construction of~\cite{denef1984power}
we show that $(a_n)_{n=0}^\infty$ can be recovered as a component of vectors satisfying a recurrence of the form shown in \Cref{enum:thm:MAIN2-recurrence} of Theorem~\ref{thm:MAIN2}.

As a consequence of \Cref{enum:thm:MAIN2-ordinary} of~\Cref{thm:MAIN2} we obtain a decision procedure for equivalence of differential tree automata (\Cref{th:equivalence}).
We further show that the representation of differentially algebraic power series by recurrences in \Cref{enum:thm:MAIN2-ordinary,enum:thm:MAIN2-recurrence} 
is convenient for 
for computing the sum, Cauchy product,
inverse, derivative, integration, and forward and backward shift of such series (\Cref{prop:closure} in Appendix~\ref{sec:operations}).

\subsection{Related Work}
The special case of RDA series in which the functions $F_1,\ldots,F_k$ in~\eqref{eq:RDA} are polynomials 
is studied in Bergeron and Reutenauer~\cite{Bergeron1990CombinatorialRO} under the name \emph{constructibly differentiably algebraic (CDA)} series.
We observe below that the classes of RDA and CDA series coincide. 
 A combinatorial interpretation of CDA power series via so-called \emph{tree labelling tables}, which are an automata-like 
notion on labelled non-plane trees, is given in~\cite[Theorem 1]{Bergeron1990CombinatorialRO}.
This result is closely related to Item~\ref{enum:thm:MAIN1-labelled} of Theorem~\ref{thm:MAIN1}. 

A power series $f \in \K[\![x]\!]$
 is \emph{differentially finite} (D-finite) if it satisfies a relation of the form
\[\sum_{i=0}^d P_i(x)f^{(i)}(x) =0\] for polynomials $P_0,\ldots,P_d \in \K[x]$.  Thus differentially finite power series 
are a subclass of differentially algebraic series.  
It is well-known~\cite{kauers2023d} that $f(x)=\sum_{n=0}^\infty a_nx^n$ is differentially finite
if and only if its sequence 
$(a_n)_{n=0}^\infty$ of coefficients satisfies a linear recurrence with polynomial coefficients.
Reutenauer~\cite{Reutenauer12} shows that such sequences can equivalently be characterised by vector 
recurrences corresponding  to a special case of the recurrence in \Cref{enum:thm:MAIN2-recurrence} of Theorem~\ref{thm:MAIN2} in which $R$ is identically zero.
See~\cite{BunaMargineanCSW24} for a development of these ideas, including an exact learning algorithm for such recurrences and their automata extension.  

Clemente~\cite{Clemente24} uses techniques from differential algebra to show that multiplicity equivalence of \emph{basic parallel processes} can be decided in doubly exponential space.  He also shows that commutative weighted basic parallel processes (viewed as power series) coincide with the class of CDA series,  thereby obtaining a doubly exponential time procedure for deciding zeroness of CDA series.  

Boreale~\cite{Boreale19} defines a notion of bisimulation for systems of ordinary differential equations, based on the Lie derivative.  He gives an algorithm to compute the largest bisimulation on a given system, and shows how this can be used for minimisation and determining equivalence.  As observed in~\cite{Clemente24}, this procedure decides zeroness of CDA power series; however no complexity analysis is given in~\cite{Boreale19}.

Resolving a conjecture of Castiglione and Massazza~\cite{CASTIGLIONE201774},
Bostan \emph{et al.}~\cite{BostanCKN20} show that the multivariate power series arising as the generating functions  of languages accepted by weakly-unambiguous Parikh automata are $D$-finite.  This result is used to prove the inherent unambiguity of certain languages based on analytic properties of univariate $D$-finite series considered as functions.

We focus on automata in this paper.  Other works consider the relationship between recurrences and generating functions on the one hand, and combinatorial classes 
defined by logical formulas on the other hand~\cite{BellBY12,fischer2011application}.


\section{Overview}

\subsection{Tree Automata}

Let $\Sigma$ be a ranked alphabet, that is, $\Sigma$ is a finite set of function symbols, each having a non-negative integer arity.
For $k \in \N$, we denote by $\Sigma_k$ the set of symbols of arity $k$ in $\Sigma$.
A  (non-deterministic) \emph{tree automaton} $\A$ is a tuple \((Q, \Delta, F)\), where
$Q$ is a finite set of states, $\Delta$ is a set of transition rules of the form
  \[
  \sigma(q_1, q_2, \ldots, q_k) \longrightarrow q,
  \]
  where \(\sigma \in \Sigma\) is a \(k\)-ary symbol and \(q_1, q_2, \ldots, q_k,q \in Q\) are 
  states, and $F$ is a set of accepting states~\cite{Comon1997TreeAT}.
  
The automaton processes a tree bottom-up, starting at the leaves. 
Leaf nodes are assigned states according to the rules in \(\Delta\) for nullary symbols. States are then propagated upwards using the transition rules, combining the states of child nodes to determine the state of their parent. The value $\sem{\A}{t}$ of a tree $t\in T_\Sigma$ is the number of runs on $t$ in which the root is labelled with state in $F$.
The $n$-th coefficient of the \emph{ordinary generating function} power series  of an automaton $\A$
is the sum of $\sem{\A}{t}$ over all trees of size $n$ (see~\eqref{eq:ordinary}).
Similarly, the $n$-th coefficient of the \emph{labelled generating function} power series is the number of monotonic labelled trees accepted by the automaton divided by $n!$ (see~\eqref{eq:labelled}).

\begin{example}[Partitions]
\label{ex:bell}
Recall that the sequence 
\[(b_n)_{n=0}^\infty=(1,1,2,5,15,52,203,\ldots)\]
of Bell numbers, where $b_n$ is the number  of partitions of
the set $\{1,\ldots,n\}$.  
As noted in~\cite{Bergeron1990CombinatorialRO}, the exponential generating function
\[f_B(x):=\sum_{n=0}^\infty \frac{b_n}{n!}x^n\] is (the $y_1$-component of) a solution of the system of differential equations 
\[ y_1'=y_1y_2, \quad y_2'=y_2.\]  Here we show that $f_B$
arises as the labelled generating function of a tree automaton.

Let $\Sigma=\{\sigma_0,\sigma_1,\sigma_2\}$ be the signature for ranked trees of degree at most 2. Here we consider $\sigma_0, \sigma_1$ and $\sigma_2$ to have respectively arities $0$, $1$ and $2$. 
Let the regular language $L\subseteq T_\Sigma$ comprise all trees $t$ accepted by the automaton 
$\mathcal A$ with set of states $\{q_1,q_2\}$, where $q_1$ is accepting, with transitions
\begin{align*}
    \sigma_0& \longrightarrow q_1 \quad
   &\sigma_0 &\longrightarrow q_2 \\
   \sigma_1&(q_2) \longrightarrow q_2 \quad
   &\sigma_2&(q_1,q_2) \longrightarrow q_1
\end{align*} 
We note that a tree $t\in L$ can be considered as a list of chains, where each chain (depicted as a coloured block in 
Figure~\ref{fig:EX}),
consists of a set of nodes all having a common ancestor under the right-child relation.

Each labelling of a tree $t \in L$ thus determines a unique partition of $\{1,\ldots,n\}$: the components of the partition are 
the sets of labels of each chain in $t$.  Moreover every partition arises as the labelling of a unique tree $t \in L$.
We thus have
\[ \sum_{{t\in L,\, \size{t}=n}} \lambda(t) = \frac{b_n}{n!}\]
and we recover the series $f_B$ as the labelled generating function $\widetilde{f}_{\mathcal A}$
of the automaton $\mathcal A$
\end{example}

\begin{figure*}
    
    \subfloat[Ranked binary tree (on left) encoding partition (on right)\label{fig:EX}]{
    \begin{tikzpicture}[
        nodet/.style = {draw,circle,font=\footnotesize,inner sep=0.1cm},
        nodeta/.style = {draw,circle,inner sep=0.05cm,fill}
    ]
        \node[nodet] (a) {$8$};
        \node[nodet] (b) [below left=0.5cm and 0.5cm of a] {$7$};
        \node[nodet] (c) [below right=0.5cm and 0.5cm of a] {$3$};
        \node[nodet] (d) [below left=0.5cm and 0.5cm of b] {$4$};
        \node[nodet] (e) [below right=0.5cm and 0.5cm of b] {$6$};
        \node[nodet] (f) [below right=0.5cm and 0.5cm of d] {$2$};
        \node[nodet] (g) [below =0.5cm of e] {$5$};
        \node[nodet] (h) [below =0.5cm of g] {$1$};

        \node[nodet] (i1) [below =0.5cm of h] {};
        \node[nodet] (i2) [below =0.5cm of f] {};
        \node[nodet] (i3) [below left=0.5cm and 0.5cm of d] {};
        \node[nodet] (i4) [below =0.5cm of c] {};

        \path[-]
            (a) edge (c)
            (a) edge (b)
            (b) edge (d)
            (b) edge (e)
            (d) edge (f)
            (e) edge (g)
            (g) edge (h)
            (h) edge (i1)
            (f) edge (i2)
            (d) edge (i3)
            (c) edge (i4)
        ;
        \draw[cyan,opacity=0.2,line width=0.7cm,line join=round,line cap=round] plot [smooth, tension=0.6] coordinates { (a) (c) (i4) };
        \draw[red,opacity=0.2,line width=0.7cm,line join=round,line cap=round] plot [smooth, tension=0.6] coordinates { (b) (e) (g) (h) (i1) };
        \draw[green,opacity=0.2,line width=0.7cm,line join=round,line cap=round] plot [smooth, tension=0.6] coordinates { (d) (f) (i2) };

        \node[nodeta] (d) [right = 2cm of a,label=right:{$4$}]{};
        \node[nodeta] (b) [right = 1cm of d,label=right:{$7$}]{};
        \node[nodeta] (a) [right = 1cm of b,label=right:{$8$}]{};
        \node[nodeta] (f) [below=0.7cm of d,label=right:{$2$}]{};
        \node[nodeta] (e) [below=0.7cm of b,label=right:{$6$}]{};
        \node[nodeta] (c) [below=0.7cm of a,label=right:{$3$}]{};
        \node[nodeta] (g) [below=0.7cm of e,label=right:{$5$}]{};
        \node[nodeta] (h) [below=0.7cm of g,label=right:{$1$}]{};

        \path[-]
            (a) edge (c)
            (b) edge (e)
            (d) edge (f)
            (e) edge (g)
            (g) edge (h)
        ;
        \draw[cyan,opacity=0.2,line width=0.4cm,line join=round,line cap=round] plot [smooth, tension=0.6] coordinates { (a) (c)  };
        \draw[red,opacity=0.2,line width=0.4cm,line join=round,line cap=round] plot [smooth, tension=0.6] coordinates { (b) (e) (g) (h) };
        \draw[green,opacity=0.2,line width=0.4cm,line join=round,line cap=round] plot [smooth, tension=0.6] coordinates { (d) (f)  };
    \end{tikzpicture}}
\subfloat[Ranked binary tree (on left), encoding rooted non-plane trees (on right).\label{fig:EX2}]{
    \begin{tikzpicture}[
        nodet/.style = {draw,circle,font=\footnotesize,inner sep=0.1cm},
        nodeta/.style = {draw,circle,inner sep=0.05cm,fill}
    ]
        \node[nodet] (a) {$2$};
        \node[nodet] (b) [below =0.5cm of a] {$8$};
        \node[nodet] (c) [below left=0.5cm and 1cm of b] {$6$};
        \node[nodet] (d) [below left=0.5cm and 0.5cm of c] {$5$};
        \node[nodet] (e) [below right=0.5cm and 0.5cm of c] {$7$};
        \node[nodet] (f) [below left=0.5cm and 0.2cm of e] {$1$};
        \node[nodet] (g) [below right=0.5cm and 1cm of b] {$3$};
        \node[nodet] (h) [below right=0.5cm and 0.5cm of g] {$9$};
        \node[nodet] (i) [below left=0.5cm and 0.2cm of h] {$4$};

        \node[nodet] (i1) [below left=0.5cm and 0.2cm of d] {};
        \node[nodet] (i2) [below right=0.5cm and 0.2cm of d] {};
        \node[nodet] (i3) [below left=0.5cm and 0.2cm of f] {};
        \node[nodet] (i4) [below right=0.5cm and 0.2cm of f] {};
        \node[nodet] (i5) [below right=0.5cm and 0.2cm of e] {};
        \node[nodet] (i6) [below left=0.5cm and 0.5cm of g] {};
        \node[nodet] (i7) [below left=0.5cm and 0.2cm of i] {};
        \node[nodet] (i8) [below right=0.5cm and 0.2cm of i] {};
        \node[nodet] (i9) [below right=0.5cm and 0.2cm of h] {};

        \path[-]
            (a) edge (b)
            (b) edge (c)
            (b) edge (g)
            (c) edge (d)
            (c) edge (e)
            (d) edge (i1)
            (d) edge (i2)
            (e) edge (f)
            (e) edge (i5)
            (f) edge (i3)
            (f) edge (i4)
            (g) edge (i6)
            (g) edge (h)
            (h) edge (i)
            (h) edge (i9)
            (i) edge (i7)
            (i) edge (i8)
        ;

        \coordinate[left=0.1cm of g]  (g1) ;
        \coordinate[above right=0.1cm and 0.1cm of g]  (g2) ;
        \coordinate[right=0.2cm of h]  (h2) ;

        \coordinate[left=0.1cm of c]  (c1) ;
        \coordinate[above right=0.1cm and 0.1cm of c]  (c2) ;
        \coordinate[right=0.2cm of e]  (e2) ;

        \draw[cyan,opacity=0.2,line join=round,line cap=round,fill] plot [smooth cycle, tension=0.6] coordinates { (g1) (g2) (h2) (i9.east) (i8.south east) (i7.south west) (i.west) };
        \draw[red,opacity=0.2,line join=round,line cap=round,fill] plot [smooth cycle, tension=0.6] coordinates { (c1) (c2) (e2) (i5.east) (i4.south east) (i3.south west) (f.west)};

        \def\d{1cm}
        \def\dbelow{0.95cm}

        \node[nodeta] (a) [right = 3.5cm of a,label=right:{$2$}]{};
        \node[nodeta] (d) [below left = \d and 1.2cm of a,label=right:{$5$}]{};
        \node[nodeta] (c) [below = \dbelow of a,label=right:{$6$}]{};
        \node[nodeta] (b) [below right = \d and 1.2cm of a,label=right:{$8$}]{};
        \node[nodeta] (f) [below left = \d and 0.3cm of c,label=right:{$1$}]{};
        \node[nodeta] (e) [below right = \d and 0.3cm of c,label=right:{$7$}]{};
        \node[nodeta] (g) [below = \d of b,label=right:{$3$}]{};
        \node[nodeta] (h) [below right = \d and 0.3cm of g,label=right:{$9$}]{};
        \node[nodeta] (i) [below left = \d and 0.3cm of g,label=right:{$4$}]{};

        \path[-]
            (a) edge (c)
            (a) edge (b)
            (a) edge (d)
            (c) edge (e)
            (c) edge (f)
            (b) edge (g)
            (g) edge (h)
            (g) edge (i)
        ;
        \draw[cyan,opacity=0.2,fill,line join=round,line cap=round] plot [smooth cycle, tension=0.6] coordinates { (i.south west) (g.north) (h.south east) };
        \draw[red,opacity=0.2,fill,line join=round,line cap=round] plot [smooth cycle, tension=0.6] coordinates { (f.south west) (c.north) (e.south east) };
    \end{tikzpicture}
}
    \caption{Examples of Combinatorial encodings}
\end{figure*}


\subsection{Weighted Tree Automata}

A \emph{$\K$-weighted tree automaton} $\A$ (also called a multilinear representation~\cite{BerstelR82}) is a bottom-up tree automaton $(Q,\Delta,q_f)$, with a single final state $q_f \in Q$, where the transition rules in $\Delta$ are equipped with a weight in $\K$.  Transitions are denoted \[ \sigma(q_1,q_2,\ldots,q_k) \xrightarrow{a} q , \] where $\sigma \in \Sigma_k$, $a \in \K$ and $q_1,\ldots,q_k,q \in Q$. 
A run of such an automaton is a run of the underlying non-deterministic tree automaton.  The weight of the run is the product of the weights
of the transitions comprising the run.  The value $\sem{\A}{t}$ of word $t\in T_\Sigma$ is the sum of the weights of all accepting runs on $t$.

When reasoning with weighted tree automata it is convenient to work with a matricial presentation in which a $\K$-weighted tree automaton over  alphabet $\Sigma$ is a pair $(d,\mu)$, where $d\in \mathbb N$ is the dimension and $\mu$ is a function with domain $\Sigma$ such that $\mu(\sigma) \in \K^{d^k \times d}$ is a matrix for $\sigma \in \Sigma_k$. The dimension corresponds to the number of states in the automaton, whereas $\mu$ is the matrix representation of the set of transition rules $\Delta$ with the set of states $Q = \{1,\ldots,d\}$:
\[
\sigma(q_1,\ldots,q_k) \xrightarrow{a} q \in \Delta \quad \text{iff} \quad \mu(\sigma)_{(q_1,\ldots,q_k),q} = a
\]
The value $\sem{\A}{t}$ of $t\in T_\Sigma$ is defined using the Kronecker product.
Recall that the Kronecker product of two row vectors $\boldsymbol{u} = \begin{bmatrix}
    u_1 & \cdots & u_m 
 \end{bmatrix}$ and $\boldsymbol{v} = \begin{bmatrix}
    v_1 & \cdots & v_n
\end{bmatrix}$, denoted $\boldsymbol{u} \otimes \boldsymbol{v}$, 
is defined by 
\[ 
\begin{bmatrix}
    u_1v_1 & u_1v_2 & \cdots & u_mv_{n-1} & u_mv_n
\end{bmatrix} \,  . \]
The operation is associative and the nullary Kronecker product is the $1\times 1$ identity matrix.

The map $\mu$ induces a function $\hmu:T_\Sigma \rightarrow \K^{1\times d}$ that is defined inductively by
\begin{gather*} \hmu(\sigma(t_1,\ldots,t_k)) :=
(\hmu(t_1)\otimes \cdots \otimes \hmu(t_k)) \cdot \mu(\sigma) 
\label{eq:sum-prod}
\end{gather*}
for a  $k$-ary symbol $\sigma$. In other words, $\hmu(t)$ is given by an iterated matrix product determined by parsing the tree $t$ from the leaves to the root. From the properties of the Kronecker product, $\hmu(t)$ is a vector $\begin{bmatrix} w_1 & \ldots & w_d \end{bmatrix}$ where $w_i$ is the sum of the weights of all runs over $t$ that label the root with state~$i$. Without loss of generality, we will consider that the final state $q_f$ corresponds to index 1, meaning that $\sem{\A}{t} = \mu(t)_1$. By convention, in this paper, a vector will be represented by a bold symbol, say $\boldsymbol{v}$, and its elements will be denoted $v_1, v_2, \ldots$.

The generating function of a weighted tree automaton, given by \Cref{eq:ordinary}, is an algebraic power series~\cite[Proposition 7.2]{BerstelR82}.
Note that two automata that represent different tree series can have the same generating function, since the latter aggregates all trees of the same size.


\subsection{Differential Tree Automata}
The notion of differential tree automata generalises that of weighted tree automata.  
A \emph{$\K$-differential tree automaton} over alphabet $\Sigma$ is 
a pair $\mathcal A=(d,\mu)$,  where $d\in \mathbb N$ is the dimension and 
$\mu$ is a map with domain $\Sigma$ such that 
\begin{itemize}
    \item $\mu(\sigma)\in \K^{1\times d}$ for $\sigma \in \Sigma_0$ and,
    \item $\mu(\sigma)\in \K(x)^{d^\ell\times d}$ is a matrix of rational functions, having no positive integer poles,
 for $\ell \geq 1$ and $\sigma \in \Sigma_\ell$.
\end{itemize}

The map $\mu$ induces a function  $\hmu:T_\Sigma \rightarrow \K^{1\times d}$, inductively defined by specifying that for all
$\sigma \in \Sigma_0$, 
$\hmu(\sigma) := \mu(\sigma)$; and for all $k\geq 1$ and $\sigma \in \Sigma_k$ and $t_0=\sigma(t_1,\ldots,t_k)$,
\begin{gather*} \hmu(t_0) :=
(\hmu(t_1)\otimes \cdots \otimes \hmu(t_k)) \cdot \mu(\sigma) (\|t_0\|) \, .
\label{eq:sum-prod2}
\end{gather*}
We define the formal tree series represented by $\mathcal A$ to be the function $[\![\mathcal A]\!] : T_\Sigma \rightarrow \K$ defined by $\sem{\A}{t} := \hmu(t)_1$.

We also introduce the notion of a generalised differential tree automaton, we will see later that this does not change the expressiveness.
To set up this definition, we first define
 $\KK(x_0,\ldots,x_k)$ to be the subring of $\K(x_0,\ldots,x_k)$
consisting of rational functions of the form
\[ \frac{P(x_0,\ldots,x_k)}{Q_0(x_0) \cdots Q_k(x_k)}\]
where $P \in \K[x_0,\ldots,x_k]$ is a multivariate polynomial and $Q_i \in \K[x_i]$ for $i \in \{0,\ldots,k\}$ are univariate polynomials such that $Q_0$ has no positive integer root and $Q_1,\ldots,Q_k$
have no nonnegative integer root.

A  generalised $\K$-differential tree automaton is defined as above except that for $\sigma \in \Sigma_{\ell}$ with $\ell\geq 1$ we now take  
 $\mu(\sigma)$ to be an element of $ \KK(x_0,\ldots,x_k)^{d^k\times d}$.
Such an automaton likewise induces a function $\hmu:T_\Sigma \rightarrow \K^{1\times d}$, which is defined as above except that now we have 
\begin{gather*} \hmu(t_0) :=
(\hmu(t_1)\otimes \cdots \otimes \hmu(t_k)) \cdot \mu(\sigma) (\|t_0\|,\ldots,\|t_k\|) \, .
\label{eq:sum-prod2g}
\end{gather*}

\begin{restatable}{proposition}{generalised}
\label{prop:generalised}
For every generalised differential tree automaton~$\mathcal A$, there exists a 
differential tree automaton~$\mathcal B$ such that~$[\![\mathcal A]\!]=[\![\mathcal B]\!]$. 
\end{restatable}
The proof is provided in Appendix~\ref{app:generalised}. Henceforth we drop the term \emph{generalised} when referring to differential tree automata.

The definition of the \emph{generating function} $\genS{\A}$ and \emph{labelled generating function} $\LgenS{\A}$ of a differential tree automaton $\A$ is exactly as for a weighted tree automaton, namely via \Cref{eq:ordinary,eq:labelled}. 
In general $\genS{\A}$ need not be algebraic, unlike for weighted tree automata, but we will show that it is differentially algebraic.


\begin{example}[Labelled trees]
\label{ex:labelled-trees}
Let \[(t_n)_{n=0}^\infty=(0,1,2,9,\allowbreak 64,625,\ldots)\] be the sequence whose general term counts the number of labelled rooted 
non-plane trees with $n$ nodes and write  
\[f_T(x)=\sum_{n=0}^\infty \frac{t_n}{n!}x^n\] for its exponential generating function.
From the recursive definition of labelled rooted trees as a combinatorial species we have that $f_T = x \exp(f_T)$.  
Differentiating, we see that $f_T$ is a solution of the differential equation
\begin{equation}
    x(1-y)y' = y\,.
    \label{eq:ex-labelled trees}
\end{equation}
From there, the sequence $(a_n)_{n=0}^\infty = (\frac{t_n}{n!})_{n=0}^\infty$ satisfies the non-linear recurrence relation
\begin{equation}
\label{eq:recurrence labelled tree}
a_{n+1} = \frac{1}{n} \sum_{k=0}^{n-1} (n-k)a_{k+1}a_{n-k}
\end{equation}

From this recurrence relation we obtain a differential tree automaton $\mathcal A$
over alphabet $\Sigma=\{\sigma_0,\sigma_1,\sigma_2\}$, 
with states $q_1,q_2$ and transitions
\[ \sigma_0\stackrel{1}{\longrightarrow} q_2,\qquad \sigma_1(q_2) \stackrel{1}{\longrightarrow}  q_1, \qquad 
\sigma_2(q_2,q_2) \stackrel{\frac{x_2+1}{x_0}}{\longrightarrow} q_2 
\]
such that $f_{\mathcal A}=f_T$. 

In its algebraic representation, the automaton $\A$ has dimension 2 with the transition function $\mu$ defined as $\mu(\sigma_0) = \begin{bmatrix} 0 & 1 \end{bmatrix}$ and
\[
\mu(\sigma_1) = 
\begin{bmatrix}
0 & 0 \\
1 & 0 \\
\end{bmatrix}
\quad \text{and} \quad
\mu(\sigma_2) = 
\begin{bmatrix}
\vzero_{3 \times 1} & \vzero_{3 \times 1} \\
0  & \frac{x_2+1}{x_0}\\
\end{bmatrix}
\]
With a simple induction, we notice that for all $t \in T_\Sigma$, \[\hmu(t) = \begin{bmatrix} \frac{t_n}{n!} & \frac{t_{n+1}}{(n+1)!} \end{bmatrix}\,.\]

The equality $f_{\mathcal A}=f_T$ can also be established using combinatorial reasoning.
The set \[\{ t \in T_\Sigma : \sem{\A}{t}\neq 0\}\] comprises those trees $t\in T_\Sigma$
in which the root has arity one and all other internal nodes have arity two.
Every such tree $t$ encodes a rooted plane tree~$\tau$ (see Figure~\ref{fig:EX2}) as follows: 
\begin{itemize}
    \item the nodes of $\tau$ are the non-leaf nodes of $t$,
    \item the root of $\tau$ is the root of $t$, 
    \item the two children of a binary node in $t$ are respectively the next sibling and first child of that node in $\tau$.
\end{itemize}  

Given a rooted non-plane tree $\tau$ with $n$ nodes, let $S \subseteq T_\Sigma$ be the set of all binary trees that encode
some plane tree obtained by imposing an order on the children of each node in $\tau$.
If $G$ is the group of automorphisms of $\tau$, then each tree $t\in S$ encodes 
$|G|$ different
plane trees.  
By construction of $\mathcal A$
we have
\[|G| \cdot \sum_{t\in S} \sem{\A}{t} = 1 \, .\] 
Thus the number of labellings of $\tau$ is 
$\frac{n!}{|G|}=n!\sum_{t\in S} \sem{\A}{t}$.  Summing over all $\tau$  of size $n$,
it follows that \[\sum_{\substack{t\in T_\Sigma\\\size{t}=n}} \sem{\A}{t}=\frac{t_n}{n!}\] and hence 
$f_{\mathcal A}=f_T$.
\end{example}

In the previous example the power series $f_T$ satisfied the equation $y' = \frac{y}{x(1-y)}$ (see \Cref{eq:ex-labelled trees}).  This does
not yet imply that $f_T$ is RDA since 
$\frac{y}{x(1-y)}$ is not defined at $(0,y(0))$. However, by introducing a second power series variable $z$ such that $y = x\,z$, we can see that \Cref{eq:ex-labelled trees} is equivalent to 
\[
y' = z + \, \frac{x\, z^2}{1- x\,z} \qquad z' =  \frac{z^2}{1- x\,z}
\]
which shows that $f_T$ is RDA.

One approach to show that a power series $f$ is RDA consists of differentiating once the polynomial 
differential equation satisfied by $f$ to obtain a new polynomial equation where the degree of highest order is always one. 


\begin{example}
Consider the solution $y_1(x)$ 
of the differential equation \[(y_1')^3 + y_1^3 = 1\] such that $y_1(0) = 0$ and $y'_1(0) = 1$.  Differentiating the 
equation we deduce that \[3(y'_1)^2y''_1 + 3(y_1)^2y'_1 = 0\,,\] which leads to the following rational dynamical system:
\[
y'_1 = y_2 \qquad \qquad y'_2 = - \frac{y^2_1}{y_2}
\]
As $y_2(0) = y'_1(0) \neq 0$, the rational functions in the system are well defined at $(0,y_1(0),y_2(0))$, implying that $y_1(x)$ is RDA. The proof of  \Cref{prop:RDA implies CDA,prop:CDA implies multi-holonomic degree 0} constructs a differential tree automaton $\A$ over alphabet 
\[\Sigma=\{\sigma_0,\sigma_1,\sigma_3,\sigma_5\}\] with states $q_1,q_2,q_3$ and transitions
\begin{align*}
\sigma_0 \xrightarrow{1} q_2 &\qquad \qquad \sigma_0 \xrightarrow{1} q_3\\
\sigma_1&(q_2) \xrightarrow{\frac{1}{x_0}} q_1\\
\sigma_3&(q_1,q_1,q_3) \xrightarrow{-\frac{1}{x_0}} q_2\\
 \sigma_5&(q_1,q_1,q_3,q_3,q_3) \xrightarrow{\frac{1}{x_0}} q_3
\end{align*}
such that $f_\A(x) = y_1(x)$.
Running the automaton, we compute its generating  function $f_{\mathcal A}$ as:
\[
x - \frac{2}{4!}x^4 - \frac{20}{7!}x^7 - \frac{3320}{10!}x^{10} - \frac{1598960}{13!}x^{13} - \ldots\,.
\]
\end{example}

Thus far, all power series presented in this section were RDA.  
However, not all differentially algebraic power series are RDA. Indeed, we show in \Cref{prop:RDA implies CDA} that RDA power series actually coincide with the class of constructibly differentiably algebraic (CDA) series, the latter being incomparable with the class of differentially finite power series. (It was shown in \cite[Example 2.5]{Stanley80} that $\frac{1}{\cos(x)}$ is not D-finite but CDA, whereas $\sum_{n=0}^\infty n! x^x$ is D-finite but not CDA.) Our automata allow us to go strictly beyond these two classes.


\begin{example}
Consider the power series $f(x) = \sum_{n=0}^\infty b_n x^n$ satisfying the equation 
\begin{equation}
\label{eq:example diff}
y - 1 = x^2 y' y + xy^2 \, .
\end{equation}
Below we show that $f(x)$ is neither D-finite nor CDA.

From~\eqref{eq:example diff} the sequence $(b_n)_{n=0}^\infty$ satisfies recurrence
\begin{equation}
\label{eq:example not CDA}
b_n = \sum_{k=0}^{n-1} (n-k) b_k b_{n-k-1}
\end{equation}
with $b_0 = 1$.   This ensures that $b_n\geq n!$ for all $n$.

If $f$ were CDA then we would have $b_n \leq \alpha^n$~ for  some constant~$\alpha>0$~\cite[Theorem 3]{Bergeron1990CombinatorialRO}.  Hence $f$ is not CDA.
Moreover, differentiating \Cref{eq:example diff}, we have that for each positive integer $k$, $f$ satisfies an equation of the form $y^{(k)} = \sum_{i=0}^{2k} P_{k,i}(x)y^{i+1-2k}$ 
with $P_{k,0}(x)$ not identically 
Suppose for a contradiction that
$f(x)$ is differentially finite. Then it satisfies an equation of the form:
\[
\sum_{k=0}^d Q_k(x)y^{(k)} =0\,.
\]
which, together with the above expression for each $y^{(k)}$, entails that $f(x)$ is the root of a non-zero polynomial $P(x,y)$, i.e., $f$ is algebraic.  But the growth of an algebraic power series is $O(c^n n^d)$ for some constants $c,d$, which contradicts $a_n \geq n!$.

Notice that the similarity between the recurrences
\Cref{eq:recurrence labelled tree,eq:example not CDA}.
This is explained by the fact that the corresponding power series are respectively ordinary and labelled generating functions of a single common automaton. 
Specifically, define 
automaton $\A$ over the alphabet $\Sigma = \{ \sigma_0,\sigma_2\}$ with states $q_1$ and transitions
\[ \sigma_0\stackrel{1}{\longrightarrow} q_1, \qquad 
\sigma_2(q_1,q_1) \stackrel{x_2+1}{\longrightarrow} q_1
\]
The power series $f(x)$ is the ordinary generating function of~$\A$.  On the other hand,  the following power series \[\sum_{n=0}^\infty \frac{t_{n+1}}{(n+1)!} x^n\] with $(t_n)_n^\infty$ as defined in \Cref{ex:labelled-trees} is the labelled generating function of $\A$.
This observation is generalized in~\Cref{prop:automata to recurrence}.
\end{example}


\subsection{Multi-holonomic sequences}

Motivated by the relation between differentially algebraic power series and recurrence relations in the previous examples, we introduce the notion of a \emph{multi-holonomic recurrence} for a sequence of vectors $(\boldsymbol a_n)_{n=0}^\infty$ in $\K^d$.


Fix $d \in \mathbb N$.  A sequence $(\boldsymbol a_n)_{n=0}^\infty$ of elements of $\K^d$ is \emph{multi-holonomic}
if there exist $m \in \mathbb N$, and for all $\ell\in\{1,\ldots,m\}$ a matrix of polynomials
\[\mu_\ell \in \KK(x)^{d^{\ell} \times d} \, , \]  such that for all $n \geq 1$ we have
\begin{equation}
\label{eq:multiholonomic}
  \boldsymbol a_n =  \, \sum_{\substack{\ell \in \{1,\ldots,m\}\\\bm{j} \in J_n(\ell)}} (\va_{j_1}\otimes \ldots \otimes \va_{j_\ell})  \cdot\mu_\ell(n) \,  ,
\end{equation}
where $J_n(\ell)$ is the set of tuples $(j_1,\ldots,j_\ell) \in \N^{\ell}$ such that $j_1 + \ldots + j_\ell = n-1$.
We call~$d \in \mathbb N$ the \emph{order} 
of the recurrence and $m$ its arity. The \emph{degree} of the recurrence is the maximum degree of the numerator and  denominator  over all entries of the matrices $\mu_\ell$.

Syntactically speaking, a multi-holonomic recurrence can be regarded a differential tree automaton over an alphabet with exactly one symbol of each arity.
The difference is that we view the recurrence as defining a sequence of vectors rather than a formal tree series.
As with automata, we introduce a generalisation of multi-holonomic recurrences that permits additional predicates on the summation indices.
Formally, a sequence $(\boldsymbol a_n)_{n=0}^\infty$ of elements of $\K^d$ is \emph{generalised multi-holonomic} if there exist a set $I \subseteq  \bigcup_{\ell\geq 0} \Z \times  \N^\ell$ and for all tuples $\bm{s} = (s_0,\ldots,s_\ell)\in I$ a matrix of polynomials \[\mu_{\bm{s}} \in \KK(x_0,\ldots,x_\ell)^{d^\ell \times d}\] such that for all $n \geq 1$,
\[
\boldsymbol a_n = \sum_{\substack{\bm{s}  \in I \\ \bm{j} \in G_{n}(\bm{s})}} (\va_{j_1} \otimes \ldots \otimes \va_{j_\ell}) \cdot \mu_{\bm{s}}(n,\bm{j} ) \, ,
\]
where $G_n(s_0,\ldots,s_\ell)$ is the set of tuples $(j_1,\ldots, j_\ell) \in \N^\ell$ such that $j_1 + \ldots + j_\ell = n - 1 + s_0$ and for all $k \in \{1, \ldots, \ell\}$, $j_k \leq n-1-s_k$.

A multi-holonomic sequence of arity $m$ defined by matrices $\mu_1, \ldots, \mu_m$ can be seen as a generalised multi-holonomic sequence by taking the set $I = \{ \bm{0}_{2}, \ldots,\bm{0}_{(m+1)} \}$, with $\bm{0}_i$ being the tuple composed of $i$ integers $0$, and for all $\ell \in \{1,\ldots,m\}$, \[\mu_{\bm{0}_{(\ell+1)}}(x_0,\ldots,x_\ell) = \mu_{\ell}(x_0)\,.\]

The third main result of our paper compares these notions of multi-holonomic sequences. For this purpose,
given $d_1\geq d_2$ 
we say that a sequence $(\vb_n)_{n=0}^\infty$ of elements of $\K^{d_1}$ \emph{recognises} a sequence $(\va_n)_{n=0}^\infty$ of elements of $\K^{d_2}$ 
if $\pi(\vb_n)=\va_n$ for all $n$, where 
$\pi : \K^{d_1}\rightarrow \K^{d_2}$ is the projection onto the first $d_2$ coordinates.
The following result is a useful step in the proof of \Cref{thm:MAIN2}.
Detailed proofs can be found in Appendix~\ref{app:generalised-multi-holonomic}.

\begin{restatable}{theorem}{thequivalencerecurrence}
\label{thm:MAIN3}
Let $(\va_n)_{n=0}^\infty$ be a sequence of elements of $\K^d$. The following statements are equivalent:
\begin{enumerate}
\item a multi-holonomic sequence of arity 2 recognises $(\va_n)_{n=0}^\infty$;
\item a multi-holonomic sequence recognises $(\va_n)_{n=0}^\infty$;
\item a generalised multi-holonomic sequence recognises $(\va_n)_{n=0}^\infty$.
\end{enumerate}
\end{restatable}


\section{Proof of \texorpdfstring{\Cref{thm:MAIN1}}{Theorem~\ref{thm:MAIN1}}}

In this section we provide  an overview of the proof of \Cref{thm:MAIN1}. Each main step 
is formulated as a separate proposition, accompanied by a brief proof sketch. Complete proofs are presented in detail in Appendix~\ref{app:theorem1}.

The first step of the proof of \Cref{thm:MAIN1} consists of showing that constructibly differentiably algebraic (CDA) power series coincide with rationally dynamically algebraic (RDA) power series. Recall that a RDA power series $f(x)$ arises as the first component of a solution $(y_1(x),\ldots,y_k(x))$ of a system of differential equations
\begin{gather}
y_1'=\frac{P_1(y_1,\ldots,y_k)}{Q_1(y_1,\ldots,y_k)}, \, \ldots, \, y'_k=\frac{P_k(y_1,\ldots,y_k)}{Q_k(y_1,\ldots,y_k)},
\label{eq:RDS}
\end{gather}
where $P_1, Q_1,\ldots,P_k,Q_k \in \K[x,y_1,\ldots,y_k]$ are polynomial functions with $Q_1,\ldots,Q_k$ defined at $(0,y_1(0),\ldots,y_k(0))$.
 A power series is CDA when $Q_1 = \ldots = Q_k = 1$.
The CDA power series are thus RDA power series by definition. The converse is given by the following proposition.


\begin{restatable}{proposition}{propRDAimpliesCDA}
\label{prop:RDA implies CDA}
The class of  RDA and CDA power series coincide.
\end{restatable}

\begin{proof}[Proof sketch]
Our aim is to  transform the system shown in~\eqref{eq:RDS} into an equivalent one in which all denominators are constant.  To this end 
we introduce  new variables 
$z_i$, for all $i \in \{1, \ldots, k\}$,  with the equations \[ z_i = \frac{1}{Q_i(y_1,\ldots,y_k)} \quad\text{and}\quad y'_i = P_i(y_1,\ldots,y_k) z_i\,.\]  Differentiating these equations leads to \[ z'_i = -z_i^2 \sum_{j=1}^k \frac{\partial Q_i}{\partial y_j} y'_j = -z_i^2 \sum_{j=1}^k \frac{\partial Q_i}{\partial y_j} P_j(y_1,\ldots,y_k) z_j\,.\]
But $\frac{\partial Q_i}{\partial y_j} $ is also a polynomial $\K[x,y_1,\ldots,y_k]$,  which concludes the proof.
\end{proof}

In combination with~\Cref{prop:RDA implies CDA}, the following result establishes
the implication \ref{enum:thm:MAIN1-RDA} $\Rightarrow$ \ref{enum:thm:MAIN1-ordinary} in \Cref{thm:MAIN1}.


\begin{restatable}{proposition}{propCDAtoholonomic}
\label{prop:CDA implies multi-holonomic degree 0}
Let $f(x)$ be a CDA power series defined by the system of  differential equations 
\begin{equation}
\label{eq:prop:CDS implies multi}
y_1'= P_1(y_1,\ldots,y_k),\ldots, y'_k = P_k(y_1,\ldots,y_k)\,.
\end{equation}
Then 
 $f(x)$ is the ordinary generating function of a
differential tree automaton in which the rational weight of 
 every transition  has the form $\frac{a}{x}$ for some $a\in \K$.
\end{restatable}
    
\begin{proof}[Proof sketch]
We can assume that 
the polynomials $P_1,\ldots,P_k$ lie in  $\K[y_1,\ldots,y_k]$ instead of $\K[x][y_1,\ldots,y_k]$. This assumption is  without loss of generality as one can rewrite 
 all polynomials  $P_1,\ldots,P_k$ 
by  replacing every occurrence of $x$ in these polynomials by a new power series variable $z$, and
by adding the equation 
 $z' = 1$. 
 
Denote by $m$  the maximal total degree of the polynomials~$P_i$. Let the tuple 
 $(y_1(x),\ldots,y_k(x))$ of power series be a solution of the system and write $y_j(x) = \sum_{n=0}^\infty a_{n,j} x^n$ for all~$j \in \{1, \ldots, k\}$. Substituting  this solution in  the defining system of equations in \eqref{eq:prop:CDS implies multi} entails that  for all $n \geq 1$, $n \, a_{n,j}$  equals the $(n-1)$-th coefficient of the power series $P_j(y_1(x),\ldots,y_k(x))$. 

To represent the coefficients of the $P_j$s, we decompose each polynomial $P_j$ into the sum of its monomials as follows:
\[
    P_j(y_1,\ldots,y_k) = \sum_{\ell=0}^m \sum_{\bm{i} \in \{1, \ldots,k\}^\ell} \alpha_{\bm{i},j} y_{i_1} \ldots y_{i_\ell} ,
\]
with all $\alpha_{\bm{i},j}$s being in $\K$. 
Hence, the $(n-1)$-th coefficient of each monomial $\alpha_{\bm{i},j} y_{i_1}(x) \ldots y_{i_\ell}(x)$ is given by 
\[
\sum_{\bm{n} \in J_n(\ell)} \alpha_{\bm{i},j} a_{n_1,i_1}\ldots a_{n_\ell,i_\ell}\,,
\]
where $J_n(\ell)$ is the set of tuples $(j_1,\ldots,j_\ell) \in \N^{\ell}$ such that $j_1 + \ldots + j_\ell = n-1$.

Following the above observation, we build a differential tree automaton $\A = (k+1,\mu)$ over an alphabet $\Sigma$  such that the $n$-th coefficient of $\vf_\A(x)$ is the vector $\begin{bmatrix} a_{n,1} & \ldots & a_{n,k} & 1 \end{bmatrix}$. For this, when $\ell \geq 1$, for each monomial $\alpha_{\bm{i},j} a_{n_1,i_1}\ldots a_{n_\ell,i_\ell}$, we introduce a symbol $\sigma_{\bm{i},j}$ in the alphabet $\Sigma$ with the transition rule
\[
\sigma_{\bm{i},j}(q_{i_1},\ldots,q_{i_\ell}) \xlongrightarrow{\alpha_{\bm{i},j}} q_j
\]
The $k+1$ component in the vector $\begin{bmatrix} a_{n,1} & \ldots & a_{n,j} & 1 \end{bmatrix}$ is in fact used to handle the case $\ell = 0$ by introducing an additional unary symbol. 
\end{proof}

To establish the remaining implications in~\Cref{thm:MAIN1} we use~\Cref{prop:automata to recurrence}, which shows  that the labelled and ordinary generating function of an automaton induces multi-linear recurrence relations that solely differ by a factor of $\frac{1}{n}$.

Given a differential tree automaton $\A = (d,\mu)$, 
 define  
$\vf_\A(x) = \sum_{n=0}^\infty \va_n x^n$ and $\widetilde{\vf}_\A(x) = \sum_{n=0}^\infty \vb_n x^n$, where
\[
\va_n = \sum_{\substack{t\in T_\Sigma\\ \size{t}=n}} \hmu(t) \quad \text{and} \quad \vb_n = \sum_{\substack{t\in T_\Sigma\\ \size{t}=n}} \lambda(t)\hmu(t)\,.
\]
Observe that $f_{\A,1}(x)$ is the ordinary generating function~$f_\A(x)$, and similarly $\widetilde{f}_{\A,1}(x)$ is the labelled generating function~$\widetilde{f}_{\A}(x)$.  More generally,
the power series $f_{\A,i}(x)$ and $\widetilde{f}_{\A,i}(x)$ can be seen respectively as the ordinary and labelled generating functions of $\A$ when $q_i$ is considered the final state.


\begin{restatable}{proposition}{propautomatatorecurrence}
\label{prop:automata to recurrence}
Let $\A = (d,\mu)$ be a differential tree automaton over $\Sigma$ of  maximal arity $m$. Write
$\vf_\A(x) = \sum_{n=0}^\infty \va_n x^n$ and $\widetilde{\vf}_\A(x) = \sum_{n=0}^\infty \vb_n x^n$ (as above).
Then 
\[\vb_0 = \va_0 = \sum_{\sigma \in \Sigma_0} \mu(\sigma)\,,\] 
and for all $n \geq 1$
\begin{align*}
\va_n &= \sum_{\substack{\ell \in \{1, \ldots,m\}\\\bm{j} \in J_n(\ell)}} (\va_{j_1} \otimes \ldots \otimes \va_{j_\ell}) \cdot \sum_{\sigma \in \Sigma_\ell}\mu(\sigma)(n)\\
\vb_n &= \frac{1}{n}\sum_{\substack{\ell \in \{1, \ldots,m\}\\\bm{j} \in J_n(\ell)}} (\vb_{j_1} \otimes \ldots \otimes \vb_{j_\ell}) \cdot \sum_{\sigma  \in \Sigma_\ell} \mu(\sigma)(n) \,
\end{align*}
where  $J_n(\ell)$ is the set of tuples $(j_1,\ldots,j_\ell) \in \N^{\ell}$ such that $j_1 + \ldots + j_\ell = n-1$.

\end{restatable}

\begin{proof}[Proof sketch]
The key idea of the proof is to rewrite the sum~$\sum_{\substack{t\in T_\Sigma\\ \size{t}=n}} \hmu(t)$, for trees~$t\in T_\Sigma$ with $\size{t}\geq 1$, as 
a summation over possible root symbol and subtrees. 
Since $t$ has size larger than one, it is of the form $\sigma(t_1,\ldots,t_\ell)$ for some $\sigma \in \Sigma_\ell$. Hence
\[
\va_n = \sum_{\ell=1}^m \sum_{\substack{\sigma \in \Sigma_\ell\\\bm{j} \in J_n(\ell)}} \sum_{\substack{t_1\in T_\Sigma\\ \size{t}=j_1}} \ldots \sum_{\substack{t_\ell\in T_\Sigma\\ \size{t}=j_\ell}} \hmu(\sigma(t_1,\ldots,t_\ell))\,.
\]
Unfolding the definition of $\hmu(\sigma(t_1,\ldots,t_\ell))$ and using the bilinearity of the Kronecker product we derive the claimed  recurrence for~$\va_n$.

The claimed recurrence relation for $\vb_n$ follows 
by similar reasoning combined with the following  additional observation.
Since  the multinomial coefficient
$\binom{n-1}{j_1,\ldots,j_\ell}$  represents the number of partitions of $\{1,\ldots,n-1\}$ into subsets of respective sizes 
$j_1,\ldots,j_\ell$,
the number of labellings of the tree~$t$ equals the product of
$\binom{n-1}{j_1,\ldots,j_\ell}$ with the respective numbers of labellings of~$t_1,\ldots,t_\ell$.
It follows that
\[ \lambda(\sigma(t_1,\ldots,t_\ell))= \frac{1}{n} \lambda(t_1) \cdots \lambda(t_\ell) \,. \qedhere\]
\end{proof}

The sequences $(\va_n)_{n=0}^\infty$ and $(\vb_n)_{n=0}^\infty$ defined in~\Cref{prop:automata to recurrence} 
are multi-holonomic  with defining matrices $\mu_\ell = \sum_{\sigma \in \Sigma_\ell} \mu(\sigma)$ for all $\ell \in \{0, \ldots, m\}$. 

The implication \ref{enum:thm:MAIN1-ordinary} $\Rightarrow$ \ref{enum:thm:MAIN1-RDA}  in~\Cref{thm:MAIN1} follows
 directly from~\Cref{prop:automata to recurrence} and~\Cref{prop:multi-holonomic degree 1 to RDA}, given below. 

\begin{restatable}{proposition}{propHolonomicOneToRDA}
\label{prop:multi-holonomic degree 1 to RDA}
Let $(\va_n)_{n=0}^\infty$ be a multi-holonomic sequence of order $d$ and degree at most $1$ with common denominator $x$. The  power series of coordinates of~$\vf(x) = \sum_{n=0}^\infty \va_n\ x^n$ are all RDA.
\end{restatable}

\begin{proof}[Sketch proof]
Denote by $\SDerive{f}$  the differential operator defined by $\SDerive{f}(x) := x f'(x)$. 

We show that the vector $\vf(x)$ of power series is in fact a solution of a system of differential equations of the form 
\[
\begin{cases}
\SDerive{y_1} = x Q_1(y_1,\ldots,y_d) + \sum_{i = 1}^d \SDerive{y_i }\cdot xP_{1,i}(y_1,\ldots,y_d) \\
 \vdots\\
\SDerive{y_d}  = x Q_d(y_1,\ldots,y_d) + \sum_{i = 1}^d \SDerive{y_d}\cdot xP_{d,i}(y_1,\ldots,y_d) \\
\end{cases}
\]
where $Q_i,P_{i,j}$s are polynomials in $\K[y_1,\ldots,y_d]$. This can be seen as a linear system of equations in $\SDerive{f_1},\ldots,\SDerive{f_d}$ with coefficients in the field of rational functions $\K(x,f_1,\ldots,f_d)$. As such, we can compute the determinant $\det(M)$, where $M$ represents the coefficient matrix of the system. Crucially we show  that $\det(M)$  is  a non-zero polynomial in $\K[x,f_1,\ldots,f_d]$. Hence, we have \[M^{-1} = \frac{1}{\det(M)}\,\mathrm{adj}(M)\, .\]
Furthermore, we prove that  $\det(M)$ does not vanish at $(0,f_1(0),\ldots,f_d(0))$, which in turn implies that
\[
\begin{bmatrix}
\SDerive{f_1}\\
\vdots\\
\SDerive{f_d}\\
\end{bmatrix}
= 
M^{-1}
\begin{bmatrix}
xQ_1\\
\vdots\\
xQ_d\\
\end{bmatrix}
= x \begin{bmatrix}
    U_1\\
    \vdots\\
    U_d\\
    \end{bmatrix}
\]
where the rational functions $U_1,\ldots,U_d \in \K(x,f_1,\ldots,f_d)$ are defined at $(0,f_1(0),\ldots,f_d(0))$. This concludes the proof.
\end{proof}

We complete the proof of \Cref{thm:MAIN1} by showing the equivalence \ref{enum:thm:MAIN1-ordinary}  $\Leftrightarrow$ \ref{enum:thm:MAIN1-labelled} in the theorem.
Given any multi-holonomic sequence $(\va_n)_{n=0}^\infty$  of order $d$ and arity $m$, defined by matrices $\mu_1$, \ldots, $\mu_m$, \Cref{prop:automata to recurrence} implies  that the 
sequence arises as 
 the ordinary  generating functions~$\vf_\A(x)$  of some differential tree automaton~$\A$. 
More precisely, define $\A= (d,\mu)$ over the alphabet $\Sigma = \{ \sigma_0, \ldots, \sigma_m\}$ such that $\sigma_i$ has arity $i$, $\mu(\sigma_0) = \va_0$, and
\[
\mu(\sigma_\ell)(x) = \mu_\ell(x)\quad 
\]
for all $\ell \in \{1, \ldots, m\}$.
Replacing  transition matrix  $\mu(\sigma_\ell)(x)$
in~$\A$ with $\mu(\sigma_\ell)(x) = x \cdot \mu_\ell(x)$ gives another automaton~$\B$ whose  
labelled generating functions~$\widetilde{\vf}_\B(x)$  is  induced by~$(\va_n)_{n=0}^\infty$. 

Since the proof of the equivalence \Cref{enum:thm:MAIN1-RDA} $\Leftrightarrow$ \Cref{enum:thm:MAIN1-ordinary} in the theorem relies on~\Cref{prop:CDA implies multi-holonomic degree 0}, we can now establish \ref{enum:thm:MAIN1-ordinary}  $\Rightarrow$ \ref{enum:thm:MAIN1-labelled} by restricting our attention to  differential tree automata in which the rational weight of 
 every transition  has the form $\frac{a}{x}$ for some~$a\in \K$.  
Given such a differential tree automaton $\A$
we apply \Cref{prop:automata to recurrence} to show that the vector of power series $\vf_{\A}(x) = \sum_{n=0}^\infty \va_n x^n$ induces a multi-holonomic sequence $(\va_n)_{n=0}^\infty$ in which every entry of the defining matrices also has the form $\frac{a}{x}$ for some~$a\in \K$. We can thus apply the above observation to show the existence of a $\K$-weighted automaton $\B$ whose labelled generating function $\widetilde{\vf}_\B(x)$ equals $\vf_{\A}(x)$. 
The converse direction \ref{enum:thm:MAIN1-labelled} $\Rightarrow$ \ref{enum:thm:MAIN1-ordinary} is analogous (application of \Cref{prop:automata to recurrence} followed by above observation).


\section{Proof of \texorpdfstring{\Cref{thm:MAIN2}}{Theorem~\ref{thm:MAIN2}}}

This theorem shows the equivalence of differentially algebraic power series, ordinary generating function of differential tree automata, and multi-holonomic sequences of arity~2. 

The most challenging development in this theorem is to establish  \ref{enum:thm:MAIN2-D-algebraic} $\Leftrightarrow$ \ref{enum:thm:MAIN2-recurrence}. 
Before discussing this equivalence, we outline the key    
 ingredients required to establish   \ref{enum:thm:MAIN2-ordinary} $\Leftrightarrow$ \ref{enum:thm:MAIN2-recurrence} in \Cref{thm:MAIN2}. 
 Detailed proofs can be found in Appendix~\ref{app:theorem2}.
 
By \Cref{prop:automata to recurrence}, the ordinary generating function of any differential tree automaton $\A$ is the first component of the vector of power series $\vf_\A(x) = \sum_{n=0}^\infty \va_n x^n$. The latter induces a multi-holonomic sequence $(\va_n)_{n=0}^\infty$. 
 By \Cref{thm:MAIN3}, this sequence is recognised by a multi-holonomic sequence of arity 2 which concludes the proof of \ref{enum:thm:MAIN2-ordinary} $\Rightarrow$ \ref{enum:thm:MAIN2-recurrence} in \Cref{thm:MAIN2}. 
We note in passing that the main milestone of the above argument is addressed in~\Cref{thm:MAIN3}. 

Conversely, consider the sequence $(\vb_n)_{n=0}^\infty$ of elements in~$\K^d$ satisfying for all $n \geq 1$ the recurrence 
\begin{equation}
\label{eq:th3:recurrence}
\vb_{n} = \vb_{n-1} \cdot P(n) + \sum_{k=0}^{n-1} (\vb_k \otimes \vb_{n-k-1})\cdot Q(n)
\end{equation}
with $P$ and $Q$ having entries defined on positive integers. We define the alphabet $\Sigma = \{ \sigma_0, \sigma_1, \sigma_2 \}$ with $\sigma_0$, $\sigma_1$ and $\sigma_2$ having, respectively, arity $0$, $1$ and $2$. Finally, we define the differential tree automaton $\A = (d,\mu)$ over $\Sigma$ such that $\mu(\sigma_0) = \vb_0$, and
\[
    \mu(\sigma_1)(x) = P(x)\quad \text{and}\quad \mu(\sigma_2)(x) = Q(x)\,.
\]
From \Cref{prop:automata to recurrence}, writing $\vf_\A(x)$ as $\sum_{n=0}^\infty \va_n x^n$, we directly obtain that the sequence $(\va_n)_{n=0}^\infty$ satisfies the recurrence relation for $(\vb_n)_{n=0}^\infty$ stated in \Cref{eq:th3:recurrence}. Moreover, the two sequences have the initial value, that is $\va_0 = \vb_0$. Thus $(\vb_n)_{n=0}^\infty = (\va_n)_{n=0}^\infty$, which concludes the proof of  the direction \ref{enum:thm:MAIN2-recurrence} $\Rightarrow$ \ref{enum:thm:MAIN2-ordinary} in \Cref{thm:MAIN2}. 


\subsection{From Multi-Holonomic Sequences to Differentially Algebraic Power Series}

This section focuses on proving~\ref{enum:thm:MAIN2-recurrence} $\Rightarrow$ \ref{enum:thm:MAIN2-D-algebraic} in \Cref{thm:MAIN2}. In other words, we show that  given a multi-holonomic sequence~$(\va_n)_{n=0}^\infty$, 
all induced 
coordinate power series~$\sum_{n=0}^\infty a_{n,i} x^n$, fpr $i \in \{1, \ldots, d\}$, are differentially algebraic. For this, we rely once again on the differential operator $\SDerive{f}$  defined by $\SDerive{f}(x) := x f'(x)$.
For all $i \in \N$ we write $\SDeriveN{f}{i+1}(x)$ for the $i+1$-th  iteration of this operator:
\[\SDeriveN{f}{i+1}(x) := \SDerive{(\SDeriveN{f}{i})}(x)\, .\]

Let $(\va_n)_{n=0}^\infty$ be a multi-holonomic sequence of arity~$m$ and degree~$r$,   defined by  matrices $\mu_1,\ldots,\mu_m$ that, without loss of generality, are assumed to share a common denominator $A(x) \in \K[x]$. Furthermore, as $(\va_n)_{n=0}^\infty$ has degree $r$, we may assume that
\[
A(x) = \sum_{i=0}^r c_i x^i
\]
for some $c_0,\ldots,c_r \in r$ and for all $\ell \in \{1, \ldots, m\}$, \[\mu_\ell(x) = \sum_{i=0}^r x^i \cdot \mu_{\ell,i}\]
for some $\mu_{\ell,i} \in \K^{d^\ell\times d}$. Recall that $J_n(\ell)$ is the set of tuples $(j_1,\ldots,j_\ell) \in \N^{\ell}$ such that $j_1 + \ldots + j_\ell = n-1$.

\begin{restatable}{proposition}{propsystemofequations}
\label{prop:system equations}
Let $(\va_n)_{n=0}^\infty$ be a multi-holonomic sequence, defined as above. The vector of power series $\vf(x) = \sum_{n=0}^\infty \va_n \,x^n$ satisfies the following system of differential equations:
\begin{align}
\sum_{i=0}^r c_i &\SDeriveN{\vf}{i} - c_0\va_0 \notag\\
=\ & x \sum_{\substack{\ell \in \{1, \ldots,m\}\\i \in \{0, \ldots,r\}\\\bm{j} \in J_{i+1}(\ell+1)}} \binom{i}{\bm{j}}  (\SDeriveN{\vf}{j_1} \otimes \ldots \otimes \SDeriveN{\vf}{j_\ell}) \cdot \mu_{\ell,i}\label{eq:system1}
\end{align}
\end{restatable}

\begin{proof}[Proof Sketch.]
The argument relies on the observation that for any power series $g(x) = \sum_{n=0}^\infty b_n x^n$, we have 
\[
\SDerive{g}(x) = \sum_{n=0}^\infty n b_n x^n\, \]
and more generally $\SDeriveN{g}{k}(x) = \sum_{n=0}^\infty n^k b_n x^n$. Thus, 
\[
\sum_{i=0}^r c_i \SDeriveN{f}{i}(x) = \sum_{n=0}^\infty A(n) \va_n x^n\, 
\]
Moreover, for $\bm{j} \in J_{i+1}(\ell+1)$,
\begin{align*}
(\SDeriveN{\vf}{j_1} \otimes &\ldots \otimes \SDeriveN{\vf}{j_\ell}) \cdot \mu_{k,i} \\
=\ & \sum_{\bm{n} \in \N^\ell} (\va_{n_1} \otimes \ldots \otimes \va_{n_\ell}) n_1^{j_1} \ldots n_\ell^{j_\ell} x^{n_1+\ldots+n_\ell}\cdot \mu_{k,i}
\end{align*}
By using the multinomial theorem, instantiated as follows,
\[
(n_1 + \ldots + n_\ell + 1)^i = \sum_{\substack{\bm{j} \in J_{i+1}(\ell+1)}} \binom{i}{\bm{j}} n_1^{j_1}\ldots n_\ell^{j_\ell}
\]
and by unfolding \Cref{eq:multiholonomic}, we conclude the proof.
\end{proof}

In \Cref{prop:system equations}, we showed that given a multi-holonomic sequence $(\va_n)_{n=0}^\infty$, the vector of power series $\vf(x) = \sum_{n=0}^\infty \va_n \,x^n$ is a solution of the system as in \Cref{eq:system1}. In the next proposition, we show that this vector is the unique solution of the system,  given the initial value $\va_0$.


\begin{restatable}{proposition}{propunicity}\label{pro: Unicity}
Let $(\va_n)_{n=0}^\infty$ be a multi-holonomic sequence and  $S$ be the system of differential equations derived from $(\va_n)_{n=0}^\infty$ as given in \Cref{eq:system1} of \Cref{prop:system equations}.

Given any solution $\overline{\vf}(x) = \sum_{n=0}^\infty \overline{\va}_n\,x^n$ of $S$, it is the unique power series solution with the initial value~$\overline{\va}_0$. 
Moreover, if $A(0) \neq 0$ then $\vf(x) = \sum_{n=0}^\infty \va_n \,x^n$ is the unique power series solution of~$S$.
\end{restatable}

\begin{proof}[Proof sketch]
The proof involves a fairly straightforward analysis of the $n$-th coefficients (for $n \geq 1$) of the two sides of \Cref{eq:system1} evaluated at the solution $\overline{\vf}(x)$. 
In particular, the $n$-th coefficient of the left-hand side can be written as 
\[
\sum_{i=0}^r c_i\, n^i\, \overline{\va}_n - c_0 \va_0 = A(n) \overline{\va}_n - c_0 \va_0
\]
As the right-hand side of~\Cref{eq:system1} is guarded by $x$, so for all  $n \geq 1$, the $n$-th coefficient of the right-hand side evaluated at $\overline{\vf}(x)$ is an expression in 
values $\overline{\va}_{n'}$ with $n' < n$.
Hence,  the term $\overline{\va}_n$ is uniquely defined by the values of its predecessors, which in turn implies that $\overline{\vf}(x)$ is uniquely defined by $\overline{\va}_0$.

Notice that if $c_0 \neq 0$ then the constant coefficient of the left-hand side is $c_0 \overline{\va}_{0} - c_0 \va_0$ whereas the constant coefficient of the right side is always $0$. Hence when $c_0 \neq 0$, the equality $\overline{\va}_{0} = \va_0$ holds, which in turn implies that $\vf(x) = \sum_{n=0}^\infty \va_n \,x^n$ is the unique power series solution of~$S$.
\end{proof}

We  use the uniqueness of the power series solution, given an initial value,  to prove that all induced coordinate power series of the solution are differentially algebraic. This is based on a differential version of the Artin approximation theorem given in \cite{denef1984power}.

\begin{theorem}\cite[Theorem 2.1]{denef1984power}\label{thm: approximation}
    Let $S$ be a system of differential polynomials in the differential variables $y_1, \ldots, y_k$ with coefficients in $\K[x]$. Let $(y_1(x), \ldots, y_k(x)) \in \K[\![x]\!]^k$ be a solution of $S$ and $m \in \N$.
    Then there exists another solution $(\Bar{y}_1(x), \ldots, \Bar{y}_k(x)) \in \K[\![x]\!]^k$ such that all $\Bar{y}_i(x)$'s are  differentially algebraic and  \[(y_1(x), \ldots, y_k(x)) \equiv (\Bar{y}_1(x), \ldots, \Bar{y}_k(x)) \pmod{x^m}\,.\]
\end{theorem}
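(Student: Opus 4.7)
Since this theorem is quoted from Denef--Lipshitz rather than proved in the paper, my plan is to sketch from scratch how a differential Artin approximation might be established. The overall strategy is to reduce it to the classical Artin approximation theorem: any formal power-series solution of a polynomial system over $K[\![x]\!]$ admits an algebraic-power-series approximant agreeing with it modulo $x^m$.

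First I would algebraize the system. Introduce fresh formal variables $z_{i,j}$ intended to stand for $y_i^{(j)}$, so that every differential polynomial in $S$, being of finite order, becomes an ordinary polynomial in finitely many $z_{i,j}$ over $K[x]$. By the Ritt--Raudenbush basis theorem the radical differential ideal generated by $S$ is captured, after enough prolongations, by a finite polynomial system $S^{\mathrm{alg}}$ in the $z_{i,j}$. The original formal solution supplies a $K[\![x]\!]$-point of $S^{\mathrm{alg}}$, and classical Artin approximation then produces an algebraic $K[\![x]\!]$-point $(\bar z_{i,j}(x))_{i,j}$ agreeing with $(y_i^{(j)}(x))_{i,j}$ modulo $x^m$. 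Setting $\bar y_i(x) := \bar z_{i,0}(x)$ yields algebraic, hence D-algebraic, approximants.

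The main obstacle is that ordinary Artin approximation preserves only the polynomial equations of $S^{\mathrm{alg}}$, not the analytic constraint $z_{i,j+1} = \partial_x z_{i,j}$ which is what actually identifies $z_{i,j}$ with the $j$-th derivative of $z_{i,0}$. Without it the lifts $\bar z_{i,j}$ need not be the derivatives of $\bar y_i$, and $\bar y_i$ may fail to satisfy the original differential system $S$. Overcoming this is the real content of the theorem: one needs an Artin-style approximation that is faithful to the derivation $\partial_x$. I would pursue this via a Popescu-style N\'eron desingularization over the Henselization of $K[x]_{(x)}$, exploiting that $\partial_x$ extends compatibly to the Henselization; this isolates the differential content into a single structural fact about derivations on Henselian local rings and lets the approximation step proceed as a black box. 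A more hands-on alternative is a direct Weierstrass-preparation argument inside the differential polynomial ring $K[\![x]\!]\{y_1,\ldots,y_k\}$, adapted from the classical Artin proof but carrying the derivation through each preparation step; I expect the differential Weierstrass lemma itself to be the technical bottleneck on that route.
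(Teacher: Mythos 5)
This statement is not proved in the paper: it is imported verbatim as \cite[Theorem 2.1]{denef1984power}, so there is no internal argument to compare yours against. Judged on its own terms, your proposal is a plan rather than a proof, and the plan stops exactly at the point where the theorem's content lies. You correctly observe that after algebraizing (replacing $y_i^{(j)}$ by indeterminates $z_{i,j}$ and prolonging finitely often), classical Artin approximation yields $\bar z_{i,j}$ solving the prolonged \emph{polynomial} system but carrying no relation $\bar z_{i,j+1}=\partial_x \bar z_{i,j}$; hence $\bar y_i:=\bar z_{i,0}$ need not solve $S$ at all. That is the whole difficulty, and neither of your two proposed remedies is carried out. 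The first (a Popescu/N\'eron desingularization ``compatible with $\partial_x$'') cannot be treated as a black box: constrained and nested variants of Artin approximation are known to fail in general, and the fact that the differential analogue is delicate is underscored by Denef and Lipshitz themselves, who show in the same paper that the corresponding statement for \emph{partial} differential equations (two or more derivations) is false. So the result is genuinely not a formal consequence of a general approximation principle, and an argument that reduces to one must say precisely where the single-derivation hypothesis enters. The second route (a ``differential Weierstrass preparation'') is named but not even sketched; you yourself flag it as the expected bottleneck.

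A further small point: even granting an approximant, you would still need to justify that each $\bar y_i$ is D-algebraic. In your reduction this comes for free because Artin approximation outputs \emph{algebraic} power series, which are D-algebraic; but once you modify the approximation scheme to respect $\partial_x$, you can no longer assume the output is algebraic, and D-algebraicity has to be re-established (in Denef--Lipshitz it comes from the approximant being constructed as a zero of a characteristic set of a prime differential ideal of finite differential type, not from algebraicity). In summary: the obstruction you identify is the right one, but the proposal does not overcome it, so as a proof it has a genuine gap at its central step. Since the paper only cites the theorem, the appropriate course here is to keep the citation rather than attempt a self-contained proof.
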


This theorem states that any solution (not necessarily differentially algebraic) of a system of a differential equations can be approximated by a differentially algebraic solution, such that both solutions agree on the first $m$ coefficients. 
This differential version of the Artin approximation theorem is used in the following proposition to complete  
 the proof of the implication \ref{enum:thm:MAIN2-recurrence} $\Rightarrow$ \ref{enum:thm:MAIN2-D-algebraic} in \Cref{thm:MAIN2}. 


\begin{proposition}
\label{th:differentially algebraic}
Let $(\va_n)_{n=0}^\infty$ be a multi-holonomic sequence. All components of  $\vf(x) = \sum_{n=0}^\infty \va_n\,x^n$ are differentially algebraic power series.
\end{proposition}

\begin{proof}
From \Cref{prop:system equations}, $\vf(x) = \sum_{n=0}^\infty \va_n\,x^n$ is a solution of the system of differential equations given in \Cref{eq:system1}. From \Cref{thm: approximation}, we know that there exists another power series solution $\overline{\vf}(x)$  such that all the power series are differentially algebraic and they coincide in particular with the first coefficients of $\vf(x)$, i.e., $\overline{\va}_0 = \va_0$. From \Cref{pro: Unicity}, we deduce that $\overline{\vf}(x) = \vf(x)$ which allows us to conclude that all power series in $\vf(x)$ are differentially algebraic.
\end{proof}


\subsection{From Differentially Algebraic Power Series to Multi-Holonomic Sequences}

To complete the proof of \Cref{thm:MAIN2}, it remains to show that any differentially algebraic power series is the component of a multi-holonomic sequence of arity 2, that is, the implication \ref{enum:thm:MAIN2-D-algebraic} $\Rightarrow$ \ref{enum:thm:MAIN2-recurrence} in \Cref{thm:MAIN2}. Our proof builds on a construction of Denef and Lipschitz~\cite{denef1984power}, who showed that 
the sequence of coefficients of any differentially algebraic power series $f(x) = \sum_{n=0}^\infty a_n x^n$ satisfies a recurrence of the form:
\[P(n) \,  a_{n} = Q_n(a_0,\ldots,a_{n-1}) \, , \]
where $Q_n$ is a polynomial whose form depends on $n$. Our proof makes explicit this recurrence and shows that how it can be transformed into a generalised multi-holonomic sequence. The proof will then conclude by applying \Cref{thm:MAIN3}.

\medskip

Observe that $n$-th coefficient of  a power series $f(x) = \sum_{n=0}^\infty a_n x^n$ satisfies 
 $a_n = n!\,f^{(n)}(0)$.   
  Hence, to obtain a recurrence relation between the coefficients, we only need to obtain a recurrence relation between the derivatives of~$f$ evaluated at $0$. 
  One way to do so is to differentiate the differential polynomial $P$ on which the differentially algebraic power series $f(x)$, that is, the polynomial
  $P(y,y^{(1)},\ldots, y^{(d)})$ with coefficients in $\K[x]$ such that 
  $P(f(x),f'(x), \ldots, f^{(d)}(x)) = 0$.
In particular, notice that
\begin{equation}
\label{eq:denef}
P' = \frac{\partial P}{\partial y^{(d)}} y^{(d+1)} + R ',  ,
\end{equation}
with both $R$ and $\frac{\partial P}{\partial y^{(d)}}$ being polynomials in $\K[x][y,\ldots,y^{(d)}]$, thus of order strictly smaller than $d+1$. The partial derivative of $P$ by its highest-order variable, namely $\frac{\partial P}{\partial y^{(d)}}$, is called the \emph{separant of $P$}. Since $f(x)$ and its derivatives form a root of $P$ then they are also a root of $P'$. Hence, instantiating \Cref{eq:denef} with $f$ and its derivates evaluated at $0$  leads to the following equation.
\[
\frac{\partial P}{\partial y^{(d)}}(f(0),\ldots,f^{(d)}(0)) f^{(d+1)}(0) = - R(f(0),\ldots,f^{(d)}(0))
\]
We can see here how the recurrence takes shape as $f^{(d+1)}(0)$ is expressed using an equation that depends solely on $f^{(i)}(0)$ for $i \leq d$. However, the main hurdle comes from the fact that the separant can vanish at $0$.

As a first step in addressing this issue, 
in the next proposition, we show that 
one can always assume that 
 the separant of $P$ is not identically zero at the solution. In other words, 
\[
\frac{\partial P(y,y^{(1)},\ldots,y^{(d)})}{y^{(d)}}(f(x),f'(x), \ldots,f^{(d)}(x)) \neq 0 \,.
\]
This properties holds if $P$ has minimal order and degree among differential polynomials that annihilate $f$.


\begin{restatable}{proposition}{propvanishingorder}
\label{prop:vanishing order}
Let $f(x) \in \K[\![x]\!]$ be a differentially algebraic power series. There exists a differential polynomial $P \in \K[x][y,\ldots,y^{(d)}]$ such that
\begin{align*}
   P(f(x),\dots,f^{(d)}(x))=0, \text{ and }\\
   \frac{\partial P}{\partial y^{(d)}} (f(x),\ldots,f^{(d)}(x)) \neq 0 \,.
\end{align*}
\end{restatable}

\begin{proof}
By definition of differentially algebraic power series, there exists a non-zero polynomial $P(y^{(0)},\ldots,y^{(d)})$ with coefficients in $\K[x]$ such that $P(f(x),\ldots,f^{(d)}(x)) = 0$. We consider such polynomial $P$ with the smallest $(d,\mathrm{deg}(y^{(d)}))$ (using the lexicographic ordering) where $d$ is the highest order of $y$ in $P$ and  $\deg(y^{(d)})$ is the degree of $y^{(d)}$ in $P$. 
As the polynomial $\frac{\partial P}{\partial y^{(d)}}$ is not identically zero and has either the same order and strictly smaller degree than $y^{(d)}$, or a strictly small order,  by minimality of $P$ in the defined lexicographic ordering, we deduce  that $\frac{\partial P}{\partial y^{(d)}}(f(x),\ldots,f^{(d)}(x)) \neq 0$. 
\end{proof}

We may thus assume that  the separant of $P$ not identically zero at the solution $f$.
The core idea of the Denef and Lipschitz~\cite{denef1984power} construction consists of iteratively differentiating the polynomial $P$
until the resulting differential polynomial meets certain condition (requiring that the coefficient of highest order term 
never vanishes at~$0$).
The number of successive differentiations  depends on the separant.

 Given $k \in \N$ and differential polynomials $A \in \K[x][y^{(\infty)}]$,  denote by $A \mod y^{(k)}$ the polynomials obtained from $A$ by removing all monomials of order at least~$k$ (all monomials that are multiple of $y^{(o)}$ for some $o\geq k$).


\begin{restatable}{proposition}{lemRecursion}
\label{lem:RECUR1}
Let $f \in \K[\![x]\!]$ be a power series and $P \in \K[x][y,\ldots,y^{(d)}]$ a differential polynomial  such that
\[P(x,f,\dots,f^{(d)})=0 \quad \text{ and } \quad \frac{\partial P}{\partial y^{(d)}} (x,f,\ldots,f^{(d)}) \neq 0\,.\]
Then there exist $m,s,N\in \mathbb N$
and polynomial $A \in \K[x]$ such that for all~$n \geq N$ we have $A(n) \neq 0$ and
\begin{gather}
f^{(n+s+1)}(0)  = \frac{1}{A(n)} \, Q_n(0,f(0),\ldots,f^{(n+s)}(0))
\label{eq:KEY}
\end{gather}
where $Q_n = P^{(m+n)} \bmod y^{(n+s+1)}$.
\end{restatable}

\begin{proof}[Proof sketch]
Since the separant is not identically zero, after evaluating it at $f$ and its derivates, we get
\[
\frac{\partial P}{\partial y^{(d)}}\left(f,\ldots,f^{(d)}\right)=x^k \cdot \sum_{i=0}^\infty c_i x^i
\] 
for some $k$ with $c_0 \neq 0$.
In the sequel we refer to $k$ as the \emph{vanishing order} of the separant. 
 
By differentiating $P$ to order $2k+2$, we can show that
\[
    P^{(2k+2)} =  y^{(d+2k+2)}f_d + \cdots + y^{(d+k+2)}f_{d+k}+f_{d+k+1}
\]
where each $f_j$ has order at most $j$ and $f_d = \frac{\partial P}{\partial y^{(d)}}$. Continuing to differentiate, we obtain that for all $n \in \N$, $P^{(2k+2+n)}$ becomes 
\begin{equation}
\label{eq:proof-denef2}
y^{(d+2k+2+n)}S_0(n) + \cdots + y^{(d+k+2+n)}S_k(n)+ H
\end{equation}
where each $S_j(n)$ has order at most $d+j$ and $H$ has order at most $d+k+1+n$. Moreover, each \[S_j(n)(0,f(0),\ldots, f^{(d+j)}(0))\]
can be seen as a univariate polynomial in $n$.
In particular, as we  assumed that the separant has vanishing order $k$, there is at least one of these polynomials that are not identically zero. We take the smallest index $r \in \{0, \ldots, k\}$ such that \[- S_r(n)(0,f(0),\ldots, f^{(d+r)}(0))\] is not identically zero and we define $A(n)$ to be this polynomial.

By taking $m = 2k+2$ and $s = d + 2k + 1 - r$, the restriction of \Cref{eq:proof-denef2} to monomials of order strictly lower than $s+n+1$ corresponds to the polynomials \[y^{(d+k+2+n-r-1)}S_{r+1}(n) + \ldots + y^{(d+k+2+n)}S_k(n)+ H\,.\] Therefore, defining,  for all~$n \geq N$, 
\begin{align}
\label{def-Qn}
    Q_n := P^{(m+n)} \mod y^{(s+n+1)}\,,
\end{align} we obtain the desired result.
\end{proof}

To complete the proof of the implication \ref{enum:thm:MAIN2-D-algebraic} $\Rightarrow$ \ref{enum:thm:MAIN2-recurrence} in \Cref{thm:MAIN2}, we show that the polynomial $Q_n$ defined in~\eqref{def-Qn} naturally induces a generalised multi-holonomic sequence. Indeed, using Pascal's identity, we have for every monomial $M = y^{(i_1)}\ldots y^{(i_\ell)}$ in $P$ and for all~$n \in \N$, 
\[ M^{(n)} = \sum_{\bm{j} \in J_{n+1}(\ell)} \binom{n}{\bm{j}} \, y^{(i_1+j_1)} \cdots y^{(i_\ell+j_\ell)} \,.  \]
Computing  $M^{(m+n)}$ modulo $y^{(s+n+1)}$ intuitively corresponds to adding guard conditions on the indices $j_1,\ldots,j_\ell$. As such, $M^{(m+n)} \mod y^{(s+n+1)}$ becomes 
\[
    \sum_{\bm{j} \in G_{n+1}(\bm{m})} \binom{n+m}{\bm{j}} \, y^{(i_1+j_1)} \cdots y^{(i_\ell+j_\ell)}
\]
with $\bm{m} = (m,i_1-1-s,\ldots,i_\ell-1-s)$. The final generalised multi-holonomic sequence is obtained by summing over all monomials and applying an additional shift by $N$ since \Cref{eq:KEY} holds only when $n \geq N$. The proof concludes by invoking \Cref{thm:MAIN3}.  

\subsection{Equivalence of differential tree automata and their generating function}

We can use
\Cref{thm:MAIN2}  to decide the zeroness and equivalence of the 
generating functions and formal tree series associated to a differential tree automaton.


\begin{theorem}
\label{th:equivalence}
The problems of zeroness and equivalence of formal tree series for $\Q$-differential tree automata are decidable. 
The  problems of zeroness and equivalence of ordinary and labelled generating functions for $\Q$-differential tree automata are decidable.
\end{theorem}

\begin{proof}
Let $\A$ be a differential tree automaton. From \Cref{prop:system equations,pro: Unicity}, we know that the vector of $d$ power series $\vf_\A(x) = \sum_{n=0}^\infty \va_n x^n$ is the unique solution of a system of differential equations $S$ with $\va_0$ as initial value. Recall that the ordinary generating function of $\A$ is the first coordinate of $\vf_\A(x)$. We augment the system $S$ with
\begin{itemize}
\item an equation $f_1 = 0$ to test if the first coordinate is $0$;
\item $d$ equations $f_1 = a_{0,1} + x g_1$, \ldots, $f_d = a_{0,d} + x g_d$ with $g_1,\ldots,g_d$ fresh variables to enforce the initial values.
\end{itemize} 
Thus, deciding zeroness of the ordinary generating function 
is equivalent to determining solubility of a system of differential equations of $\Q[x]$, which is decidable~\cite[Theorem 3.1]{denef1984power}.

All other problems mentioned in this theorem can be reduced to the zeroness of ordinary generating functions: For zeroness of labelled generating functions, we rely on \Cref{prop:automata to recurrence}. The equivalence of generating functions can be reduced to the the zeroness of their difference thanks to their closure properties (see \Cref{prop:closure} in Appendix~\ref{sec:operations}). 
The same closure properties allow us to decide the zeroness of formal tree series $\sem{\A}$ by building an automaton $\A'$ such that $\sem{\A'}{t} = \sem{\A}{t}^2$ for all $t\in T_\Sigma$ and by testing the zeroness of the ordinary generating function of $\A'$.
\end{proof}


\section{Application to counting combinatorial structures}
\label{sec:SPECIES}
Combinatorial species are a formalism for defining classes 
of combinatorial structures~\cite{FlajoletZC94}.
 A species $A$ is a mapping (technically, a functor) that takes a finite set of labels $U$ and returns a set, denoted $A[U]$, of combinatorial structures labelled by $U$. For example, if $B$ is the species of plane rooted binary trees then $B[\{1, 4, 5\}]$ is the set of trees shown below:

\begin{center}
\begin{tikzpicture}[
    nodet/.style = {draw,circle,font=\footnotesize,inner sep=0.1cm},
]
    \node[nodet,initial,initial text={},initial above,initial distance=0.25cm] (q_0) {$1$};
    \node[nodet] (q_2) [below right = 0.25cm and 0.2cm of q_0] {$4$};
    \node[nodet] (q_3) [below left = 0.25cm and 0.2cm of q_0] {$5$};

    \path[->] 
        (q_0) edge (q_2)
        (q_0) edge (q_3)
    ;

    \node[nodet,initial,initial text={},initial above,initial distance=0.25cm] (q_1) [right = 1.7cm of q_0]{$1$};
    \node[nodet] (q_2) [below right = 0.25 and 0.2cm of q_1] {$5$};
    \node[nodet] (q_3) [below left = 0.25 and 0.2cm of q_1] {$4$};

    \path[->] 
        (q_1) edge (q_2)
        (q_1) edge (q_3)
    ;

    \node[nodet,initial,initial text={},initial above,initial distance=0.25cm] (q_1') [right = 1.7cm of q_1]{$4$};
    \node[nodet] (q_2) [below right = 0.25 and 0.2cm of q_1'] {$5$};
    \node[nodet] (q_3) [below left = 0.25 and 0.2cm of q_1'] {$1$};

    \path[->] 
        (q_1') edge (q_2)
        (q_1') edge (q_3)
    ;

    \node[nodet,initial,initial text={},initial above,initial distance=0.25cm] (q_1'') [right = 1.7cm of q_1']{$5$};
    \node[nodet] (q_2) [below right = 0.25 and 0.2cm of q_1''] {$4$};
    \node[nodet] (q_3) [below left = 0.25 and 0.2cm of q_1''] {$1$};

    \path[->] 
        (q_1'') edge (q_2)
        (q_1'') edge (q_3)
    ;

    \node[nodet,initial,initial text={},initial above,initial distance=0.25cm] (q_1) [below = 1cm of q_1]{$5$};
    \node[nodet] (q_2) [below right = 0.25 and 0.2cm of q_1] {$1$};
    \node[nodet] (q_3) [below left = 0.25 and 0.2cm of q_1] {$4$};

    \path[->] 
        (q_1) edge (q_2)
        (q_1) edge (q_3)
    ;

    \node[nodet,initial,initial text={},initial above,initial distance=0.25cm] (q_1) [right = 1.7cm of q_1]{$4$};
    \node[nodet] (q_2) [below right = 0.25 and 0.2cm of q_1] {$1$};
    \node[nodet] (q_3) [below left = 0.25 and 0.2cm of q_1] {$5$};

    \path[->] 
        (q_1) edge (q_2)
        (q_1) edge (q_3)
    ;

\end{tikzpicture}
\end{center}
Note that all nodes do not have to be labelled. If $C$ is the species of plane rooted binary trees with labels only on external nodes then $C[\{1, 2\}]$ contains only two trees:
\begin{center}
    \begin{tikzpicture}[
        nodet/.style = {draw,circle,font=\footnotesize,inner sep=0.1cm},
    ]
        \node[nodet,initial,initial text={},initial above,initial distance=0.25cm] (q_0) {};
        \node[nodet] (q_2) [below right = 0.25 and 0.2cm of q_0] {$1$};
        \node[nodet] (q_3) [below left = 0.25 and 0.2cm of q_0] {$2$};
    
        \path[->] 
            (q_0) edge (q_2)
            (q_0) edge (q_3)
        ;
    
        \node[nodet,initial,initial text={},initial above,initial distance=0.25cm] (q_1) [right = 4cm of q_0]{};
        \node[nodet] (q_2) [below right = 0.25 and 0.2cm of q_1] {$2$};
        \node[nodet] (q_3) [below left = 0.25 and 0.2cm of q_1] {$1$};
    
        \path[->] 
            (q_1) edge (q_2)
            (q_1) edge (q_3)
        ;
    \end{tikzpicture}
\end{center}

To define species \cite{FlajoletZC94} relies on a collection of constructions:
\begin{inparaenum}[(i)]
\item The initial object $\vone$, which represents the empty structure without label.
\item The variable $X$, which represents a single labelled node.
\item The addition $A + B$, which represents the disjoints union of structures from $A$ and $B$. 
\item The product $A \cdot B$, which represents all the pairs of elements from $A$ and $B$ over the input set of labels.
\item The  $\sequence(A)$, which generates the sequences of elements of $A$.
\item The  $\set(A)$, which generates the sets of elements of $A$.
\item The  $\cycle(A)$, which generates the cycles of elements of $A$.
\end{inparaenum}


\begin{figure}[ht]
\small
\begin{center}
\begin{TAB}[2pt]{|c|c|c|c|}{|c|c|c|c|c|c|c|c|c|c|c|c|}
System & RDA Systems & \begin{tabular}{c}Initial\\values\end{tabular} & Spec\\
$\left\{
\begin{array}[1]{l}
y_a = x\, z\\
z' = z\, y_a'
\end{array}
\right.$ &
$\left\{
\begin{array}[1]{l}
y_a' = z + \frac{x\,z^2}{1-x\,z}\\
z' = \frac{z^2}{1-x\,z}
\end{array}
\right.$
&
---
&
A\\
$y_b = x + y_b^2$ &
$y_b' = \frac{1}{1-2\,y_b}$
&
$y_b(0)=0$
&
B
\\
$y_c = \frac{x}{1-y_c}$ &
$y_c' = \frac{(1-y_c)^3}{(1-y_c)^2-x}$
&
$y_c(0) = 0$
&
C\\
$\left\{
\begin{array}{l}
y'_d = y_d\, z'\\
z' = \frac{1}{1-x}
\end{array}
\right.$ &
$y_d' = \frac{y_d}{1-x}$
&
---
&
D\\
$\left\{\begin{array}{l}
y'_e = y_e\, z_2'\\
z_2' = \frac{y'_a}{1-y_a}\\
y_a = x\, z\\
z' = z\, y_a'
\end{array}\right.$ &
$\left\{
\begin{array}{l}
y'_e = y_e\, \frac{z}{1-y_a} +\\
\quad \frac{x\, z^2}{(1-x\,z)(1-y_a)}\\[2mm]
y_A' = z + \frac{x\,z^2}{1-x\,z}\\[1mm]
z' = \frac{z^2}{1-x\,z}
\end{array}
\right.$
&
$y_a(0) = 0$
&
E\\
$\left\{
\begin{array}{l}
y'_f = y_f\, z\\
z' = z
\end{array}
\right.$ &
$\left\{
\begin{array}{l}
y'_f = y_f\, z\\
z' = z
\end{array}
\right.$
&
---
&
F
\\
$\left\{
\begin{array}{l}
y_g = x + x\, z_2\\
z_2' = z_1 \, y'_g\\
z_1' = y_g\, y'_g 
\end{array}
\right.$ &
$\left\{
\begin{array}{l}
y'_g = \frac{1+z_2}{1-x\,z_1}\\[1mm]
z'_2 = \frac{y_1(1+z_2)}{1-x\,z_1}\\[1mm]
z_1' = y_g\frac{1+z_2}{1-x\,z_1}\\
\end{array}
\right.$
&
---
&
G
    \\
$\left\{
\begin{array}{l}
y_h = x + z_2\\
z_2' = z_1 \, y'_h\\
z_1' = z_0 \, y'_h\\
z_0' = z_0 \, y'_h
\end{array}
\right.$ &
$\left\{
\begin{array}{l}
y'_h = 1 + \frac{z_1}{1-z_1}\\[1mm]
z_2' = \frac{z_1}{1-z_1}\\[1mm]
z'_1 = z_0 + \frac{z_0\,z_1}{1-z_1}\\[1mm]
z_0' = z_0 + \frac{z_0\,z_1}{1-z_1}\\
\end{array}
\right.$
&
$
\begin{array}{@{}l@{}}
y_h(0) = 0\\
z_1(0) = 0\\
\end{array}
$
&
H\\
$\left\{
\begin{array}{l}
y'_k = y_k\, z'_1\\
z_1' = \frac{z'_2}{1-z_2}\\
z_2 = x \, z_3\\
z_3' = y_g \, y'_g\\
y_g = x + x\, z_4\\
z_4' = z_5 \, y'_g\\
z_5' = y_g \, y'_g 
\end{array}
\right.$ &
$\left\{
\begin{array}{l}
y'_k = y_k\frac{z_3}{1-x\,z_3} +\\[1mm]
\quad y_k\frac{x\,y_g(1+z_4)}{(1-x\,z_5)(1-x\,z_3)}\\[2mm]
z'_1 = \frac{z_3}{1-x\,z_3} + \\[1mm]
\quad \frac{x\,y_g(1+z_4)}{(1-x\,z_5)(1-x\,z_3)}\\[2mm]
z'_2 = z_3 + x\,\frac{y_g(1+z_4)}{1-x\,z_5}\\[1mm]
z'_3 = \frac{y_g(1+z_4)}{1-x\,z_5}\\[1mm]
y'_g = \frac{1+z_4}{1-x\,z_5}\\[1mm]
z'_4 = \frac{z_5(1+z_4)}{1-x\,z_5}\\[1mm]
z_5' = y_g\frac{1+z_4}{1-x\,z_5}\\
\end{array}
\right.$
&
---
&
K
\\
$\left\{
    \begin{array}{l}
    y'_\ell = y_\ell\, z_1 \,z'_2\\
    z'_1 = z_1 \, z'_2\\
    z_2' = z_3\\
    z_3' = y_3\\
    \end{array}
    \right.$ &
$\left\{
    \begin{array}{l}
    y'_\ell = y_\ell\, z_1\, z_3\\
    z'_1 = z_1 \,z_3\\
    z_3' = z_3\\
    \end{array}
    \right.$
&
---
&
L
\\
$\left\{
    \begin{array}{l}
    y_m = \frac{1}{1-z_1}\\
    z_1' = z_0\\
    z_0' = z_0\\
    \end{array}
    \right.$ &
$\left\{
    \begin{array}{l}
    y'_m = \frac{z_0}{(1-z_1)^2}\\
    z_1' = z_0\\
    z_0' = z_0\\
    \end{array}
    \right.$
&
$y_m(0) = 1$
&
M
\end{TAB}
\end{center}
\caption{Example of Species with RDA exponential generating series. Unspecified initial values indicate that all power series solutions are Rationally Dynamically Algebraic, no matter the initial values.}
\label{fig:RDS examples}
\vspace{-.5cm}
\end{figure}

The constructions $\set$, $\sequence$ and $\cycle$ also allow constraints on the cardinality. For example, $\set(A,card \geq 3)$ represents the sets of at least 3 elements of $A$.
A specification of species is then a (set of) equations that use these constructions. In this setting, \cite{FlajoletZC94} is interested in computing the exponential generating series of a species $A$, corresponding to the  series 
\[\sum_{n=0}^\infty \frac{\size{A[\{1,\ldots,n\}]}}{n!} x^n\,.\] 
In other words, the $n$-th coefficient of the power corresponds to the numbers of combinatorial structures in the species $A$ labelled by~$\{1, \ldots, n\}$.
For instance, \cite{FlajoletZC94} provides the following examples of specifications:
\begin{center}
\small
\begin{tabular}{@{}l@{\,}|@{\,}l}
Specifications & Objects\\
\hline
$A = X \cdot \set(A)$ & Non-plane trees\\
$B = X + B \cdot B$ & Plane binary trees with\\
& only external labels\\
$C = X \cdot \sequence(C)$ & Plane general trees \\
$D = \set(\cycle(X))$ & Permutations\\
$E = \set(\cycle(A))$ & Functional graph\\
$F = \set(\set(X,card \geq 1))$ & Set partitions\\
$G = X + X\cdot \set(G,card = 3)$ & Non-plane ternary trees\\
$H = X + \set(H, card \geq 2)$ & Hierarchies\\
$K = \set(\cycle(X \cdot \set(G,card = 2)))$ & 3-constrained functional\\
& graphs\\
$L = \set(\set(\set(X,card\geq 1),card \geq 1))$ & 3-balanced hierarchies\\
$M = \sequence(\set(X,card \geq 1))$ & Surjections
\end{tabular}
\end{center}

These specifications represents standard objects in the literature. Here, non-plane trees are trees in which the children of a node
are unordered, whereas in plane trees siblings are ordered. Functional graphs are directed graphs with every node having outdegree 1 and 3-functional graphs are function graphs where each node has indegree being 0 or 3. 

Fortunately, \cite{FlajoletZC94} also provides a way to translate every construct of the specification into a system of differential equations on the associated exponential generating functions. For instance, (i)~$B = \set(A)$ is translated to the equation $y_B' = y_B \cdot y_A'$; (ii) Z$B = \sequence(A)$ is translated into $y_B = \frac{1}{1-y_A}$; (iii)~$B = \cycle(A)$ is translated into $y_B' = \frac{y_A'}{1-y_A}$. We refer the reader to \cite[Thereom 2]{FlajoletZC94} for a more detailed description of this translation. 

We show in \Cref{fig:RDS examples} that each of the specifications presented above translates into a system of differential equations and in addition we show that their exponential generating series are all RDA power series.
Once the system of differential equations is transformed into a system as in \Cref{eq:RDA}, checking that the exponential generating series is RDA only requires checking that the rational functions are defined on the initial values. In some cases, all initial values yield RDA power series: for example for the systems of non-plane trees, permutations, 3-balanced hierarchies.  In the other cases, only specific initial values yield  RDA power series (e.g. Functional graphs).  We present in \Cref{fig:RDS examples} the admissible initial value of the exponential generating series of the species. 

We highlight the cases of the Hierarchies and Surjections.
For Hierarchies, the initial value $y_h(0) = 0$ is given by the specification. However, understanding the initial value of $z_1(x)$ is less evident. For this, we rely on the expression satisfied by $y_h(x)$ given in~\cite{FlajoletZC94}: $y_h(x) = x + e^{y_h(x)} - 1 - y_h(x)$. From this we have 
\[y'_h(x) = \frac{1}{2-e^{y_h(x)}} \qquad \text{ and } \qquad z'_2(x) = y'_h(x) - 1.\] Since $y_h(0) = 0$, we deduce that $y'_h(0) = 1$ and $z'_2(0) = 0$. Hence, $z_2'(x) = z_1(x) y'_h(x)$ yields $y_1(0) = 0$.
For Surjections, the initial value $y_m(0) = 1$ is given by the specification. Therefore, as $(1 - z_1(x)) y_m(x) = 1$, we deduce that $(1 - z_1(0)) \neq 0$.

\onecolumn 

\bibliographystyle{IEEEtran}
\bibliography{IEEEabrv,references}

\appendices


\section{Detailed proofs of Proposition~\ref{prop:generalised}}
\label{app:generalised}

The proof follows from the following result. 
\begin{proposition}
Let $\A = (d,\mu)$ be a generalised differential tree automaton over an alphabet $\Sigma$ with maximum arity~$m$. There exists a differential tree automaton $\B = (d',\mu')$ over $\Sigma$ such that  $[\![\mathcal A]\!]=[\![\mathcal B]\!]$. 
\end{proposition}

\begin{proof}
By definition, for all $\ell \in \{1,\ldots,m\}$ and $\sigma \in \Sigma_\ell$, each entry of $\mu(\sigma)$ lies in $\KK(x_0,\ldots,x_\ell)$.
Without loss of generality, we can assume that there are polynomials $Q_i \in \K[x_i]$ for $i \in \{0,\ldots,m\}$ such that
for all $\ell \in \{1,\ldots,m\}$ and $\sigma \in \Sigma_\ell$ we have 
\[
\mu(\sigma) = \frac{1}{Q_0(x_0) \cdots Q_\ell(x_\ell)} M(\sigma)
\]
where $M(\sigma) \in \K[x_0,\ldots,x_\ell]^{d^\ell \times d}$. 
Let $r$ be the maximum degree of any variable appearing in an entry of one of matrices $M(\sigma)$ and
define $I := \{1,\ldots,d\}\times \{0,\ldots,m\}\times \{0,\ldots,r\} \cup \{ 1 \}$.
Automaton $\B$ has dimension $d':=|I|$.  
The transition map $\mu'$ is defined to ensure that for all $t\in T_\Sigma$ 
and $(i,j,k) \in I$ we have 
\begin{gather}
\mu'_{(i,j,k)}(t) = \mu(t)_i \cdot \frac{\size{t}^k}{Q_j(\size{t})} \qquad\text{and}\qquad
\mu'_{1}(t) = \mu(t)_1 \, .
\label{eq:INV}
\end{gather}
Note that the second condition implies that $[\![\mathcal A]\!]=[\![\mathcal B]\!]$.
We maintain the invariant~\eqref{eq:INV} inductively.  To establish the base case, for $\sigma\in \Sigma_0$  we define
\[ \mu'(\sigma)_{(i,j,k)} := \left\{ \begin{array}{ll} \frac{\mu(\sigma)_i}{Q_j(0)} & \text{if $k=0$,}\\
0 & \text{otherwise}\end{array}\right.
\]
and $\mu'(\sigma)_{1} := \mu(\sigma)_1$.
For the inductive step, given tuples
$(i_1,1,k_1),\ldots,(i_\ell,\ell,k_\ell),(i,j,k) \in I$ and $k_0 \in \{1,\ldots,r\}$,
we define
\begin{align}
\mu'(\sigma)_{(i_1,1,k_1),\ldots,(i_\ell,\ell,k_\ell),(i,j,k)} &\, := \, \frac{P(x_0)}{Q_0(x_0)} \frac{x_0^{k}}{Q_j(x_0)}, \\
\intertext{and}
\mu'(\sigma)_{(i_1,1,k_1),\ldots,(i_\ell,\ell,k_\ell),1} &\, := \, \frac{R(x_0)}{Q_0(x_0)} \, ,
\end{align}
where $P(x_0) , R(x_0) \in \K[x_0]$ are the coefficients of the monomial $x_1^{k_1} \cdots x_\ell^{k_\ell}$ in $M(\sigma)_{i_1,\ldots,i_\ell,i}$
and $M(\sigma)_{i_1,\ldots,i_\ell,1}$ respectively.
All remaining entries of $\mu(\sigma)$ are zero.  
The invariant~\eqref{eq:INV} may now be established by an easy induction on $t \in T_\Sigma$.  
\end{proof}

\section{Detailed proofs of Theorem~\ref{thm:MAIN3}}
\label{app:generalised-multi-holonomic}

In the definition of multi-holonomic sequences, the matrices defining the sequences are all univariate, that is with elements in $\KK(x_0)$. On the other hand, in the generalised multi-holonomic, not only we added guards but the matrices are also multivariate, that is in $\KK(x_0,\ldots,x_\ell)$ for different values of $\ell$.  In the section, we show that these definition are equivalent. For this purpose, we introduce a subclass of generalised multi-holonomic sequences corresponding to the multivariate and unguarded sequences. Formally, a sequence $(\va_n)_{n=0}^\infty$ of elements of $\K^d$ is \emph{unguarded multi-holonomic} if there exists $m \in \N$ and for all $\ell \in \{1, \ldots, m\}$ a matrix of polynomials 
$\mu_\ell \in \KK(x_0,\ldots,x_\ell)^{d^{\ell} \times d}$ 
such that the following holds for all $n \geq 1$.
\[
  \boldsymbol a_n =  \, \sum_{\substack{\ell \in \{1,\ldots,m\}\\\bm{j} \in J_n(\ell)}} (\va_{j_1}\otimes \ldots \otimes \va_{j_\ell})  \cdot\mu_\ell(n,\bm{j})
\]

A generalised multi-holonomic of arity $m$ is defined by matrices whose elements belong to $\KK(x_0,\ldots,x_\ell)$ for some different values of $\ell$. However, the functions in the matrices do not necessarily rely on all variables. For example, we consider the case where we only rely on $x_0$, i.e. with elements in $\KK(x_0)$. In that case, we say that the sequence is univariate generalised multi-holonomic.


\begin{proposition}
\label{prop:unguarded to multi}
Let $(\va_n)_{n=0}^\infty$ be a unguarded  multi-holonomic sequence of order $d$ and arity $m$ and degree $r$. There exists a multi-holonomic sequence $(\vb_n)_{n=0}^\infty$ of order $D = d + \frac{d(r+1)m(m+1)}{2}$ such that for all $n \in \N$, for all $i \in \{1, \ldots, d\}$, $a_{n,i} = b_{n,i}$.
\end{proposition}

\begin{proof}
By definition there exists $m \in \N$ and for all $\ell \in \{1, \ldots, m\}$ a matrix of polynomials 
$\mu_\ell \in \KK(x_0,\ldots,x_\ell)^{d^{\ell} \times d}$ 
and the following holds for all $n \geq 1$.
\[
  \boldsymbol a_n =  \, \sum_{\substack{\ell \in \{1,\ldots,m\}\\\bm{j} \in J_n(\ell)}} (\va_{j_1}\otimes \ldots \otimes \va_{j_\ell})  \cdot\mu_\ell(n,\bm{j})
\]
Without loss of generality, we can assume that all elements in the matrices $\mu_1,\ldots, \mu_m$ share a common denominator in $x_0$. That is there exist $Q \in \K[x]$ with no positive zero and for all $\ell \in \{1, \ldots, m\}$ a matrix $M_\ell \in \K[x_1,\ldots,x_\ell]^{d^\ell\times d}$ and polynomials $Q_{\ell,i} \in \Q[x_i]$ for $i \in \{1, \ldots, m\}$ with no non-negative zero such that
\begin{align}
\mu_\ell(x_0,\ldots,x_\ell) = \frac{1}{Q(x_0)} \frac{1}{Q_{\ell,1}(x_1)\ldots Q_{\ell,\ell}(x_\ell)} M_\ell(x_0,\ldots,x_\ell)
\end{align}
As $r$ is the degree of $(\va_n)_{n=0}^\infty$, by unfolding the definition of the Kronecker product, we deduce that there exists univariate matrices $M_{\ell,(i_1,\ldots,i_\ell)}(x) \in \K[x]^{d^\ell \times d}$ for all $i_1,\ldots, i_\ell \in \{0, \ldots, r\}$ such that
\begin{align}
\mu_\ell(x_0,\ldots,x_\ell) = \sum_{i_1,\ldots,i_\ell \in \{0, \ldots, r\}} \frac{1}{Q(x_0)} \frac{x_1^{i_1}\ldots x_\ell^{i_\ell} }{Q_{\ell,1}(x_1)\ldots Q_{\ell,\ell}(x_\ell)}  M_{\ell,(i_1,\ldots,i_\ell)}(x_0)
\end{align}
Therefore, we obtain for all $n \geq 1$, 
\begin{align*}
\va_n &= \sum_{\substack{\ell \in \{1,\ldots,m\}\\\bm{j} \in J_n(\ell)}} (\va_{j_1}\otimes \ldots \otimes \va_{j_\ell})  \cdot\mu_\ell(n,\bm{j})\\
&= \frac{1}{Q(n)} \sum_{\substack{\ell \in \{1,\ldots,m\}\\\bm{j} \in J_n(\ell)}} \sum_{i_1,\ldots,i_\ell \in \{0,\ldots,r\}} (\frac{j_1^{i_1}}{Q_{\ell,1}(j_1)}\va_{j_1}\otimes \ldots \otimes \frac{j_\ell^{i_\ell}}{Q_{\ell,\ell}(j_\ell)}\va_{j_\ell}) \cdot M_{\ell,(i_1,\ldots,i_\ell)}(n)
\end{align*}
We introduce for all $i \in \{0, \ldots, r\}$ and for all $\ell \in \{1, \ldots, m\}$, for all $k \in \{1, \ldots, j\}$ a new sequences $(\bm{c}_{n,(i,\ell,k)})_{n=0}^\infty$ of order $d$ such that for all $n \in \N$,
\[
\bm{c}_{n,(i,\ell,k)} = \frac{n^{i}}{Q_{\ell,k}(n)}\va_n
\]
Notice that they are well defined since the polynomials $Q_{\ell,k}$ do not have non-negative zeros.
Hence the previous equation becomes:
\begin{align*}
\va_n &= \frac{1}{Q(n)} \sum_{\substack{\ell \in \{1,\ldots,m\}\\\bm{j} \in J_n(\ell)}} \sum_{i_1,\ldots,i_\ell \in \{0,\ldots,r\}} (\bm{c}_{j_1,(i_1,\ell,1)} \otimes \ldots \otimes \bm{c}_{j_\ell,(i_\ell,\ell,\ell)}) \cdot M_{\ell,(i_1,\ldots,i_\ell)}(n)
\end{align*}
Let us define a new sequence $(\vb_n)_{n=0}^\infty$ of order $D = d + \frac{d(r+1)m(m+1)}{2}$ such that
\[
\vb_n = \begin{bmatrix} \va_n & \bm{c}_{n,(0,1,1)} & \bm{c}_{n,(0,2,1)} & \bm{c}_{n,(0,2,2)} & \ldots & \bm{c}_{n,(r,m,m)} \end{bmatrix}
\]
To simplify access to this vector, let us define $\alpha : \{1, \ldots, d\} \rightarrow \{1, \ldots, D\}$, $\beta_\ell : \{0,\ldots,r\} \times \{1, \ldots, \ell\} \times \{1, \ldots, d\} \rightarrow \{1, \ldots, D\}$ for all $\ell \in \{1, \ldots, m\}$ such that 
\[
b_{n,\alpha(j)} = a_{n,j} \qquad b_{n,\beta_\ell(i,k,j)} = c_{n,(i,\ell,k),j}
\]
Thus, for all $\ell \in \{1, \ldots, m\}$, let us define a new matrix $\mu'_\ell \in \KK(x_0)^{D^\ell \times D}$ such that:
\begin{itemize}
\item $(\mu'_\ell)_{(\beta_\ell(i_1,1,j_1),\ldots,\beta_\ell(i_\ell,\ell,j_\ell)),\alpha(j)} = \frac{1}{Q(x)} M_{\ell,(i_1,\ldots,i_\ell)}(x)_{(j_1,\ldots,j_\ell),j}$
\item $(\mu'_\ell)_{(\beta_\ell(i_1,1,j_1),\ldots,\beta_\ell(i_\ell,\ell,j_\ell)),\beta_{\ell'}(i',k',j)} = \frac{x^{i'}}{Q_{\ell',k'}(x)} \frac{1}{Q(x)} M_{\ell,(i_1,\ldots,i_\ell)}(x)_{(j_1,\ldots,j_\ell),j}$
\item $0$ in all other cases.
\end{itemize}
By construction, we have that for all $n \geq 1$, 
\[
\vb_n = \sum_{\substack{\ell \in \{1,\ldots,m\}\\\bm{j} \in J_n(\ell)}} (\vb_{j_1} \otimes \ldots \otimes \vb_{j_\ell}) \cdot \mu'_\ell(n)
\]
which allows us to conclude.
\end{proof}


\begin{proposition}
\label{prop:linear map}
Let $(\va_n)_{n=0}^\infty$ an unguarded multi-holonomic sequence of elements of $\K^d$. Let $d' \in \N^+$. Let $M$ be a matrix in $\K^{d \times d'}$, i.e. a linear map from $\K^d$ to $\K^{d'}$. There exists a unguarded multi-holonomic sequence $(\vb_n)_{n=0}^\infty$ such that for all $n \in \N$, 
\[
\vb_n = \begin{bmatrix} \va_nM & \va_n \end{bmatrix}
\]
\end{proposition}

\begin{proof}
By definition, we know that $(\va_n)_{n=0}^\infty$ satisfies for all $n \geq 1$, 
\[
\va_n = \sum_{\substack{\ell \in \{1,\ldots,m\}\\\bm{j} \in J_n(\ell)}}
(\boldsymbol a_{j_1} \otimes \cdots \otimes \boldsymbol a_{j_\ell} ) \cdot \mu_\ell(n,\bm{j})
\]
Consider for all $\ell \in \{1, \ldots, m\}$, the linear maps $\overline{\mu}_\ell \in \KK(x_0,\ldots,x_\ell)^{(d+d')^\ell \times (d+d')}$ such that:
\[
\overline{\mu}_\ell = (\begin{bmatrix} \boldsymbol{u}_1 & \boldsymbol{v}_1 \end{bmatrix} \otimes \ldots \otimes \begin{bmatrix} \boldsymbol{u}_\ell & \boldsymbol{v}_\ell \end{bmatrix}) = \begin{bmatrix} (\boldsymbol{v}_1 \otimes \ldots \otimes \boldsymbol{v}_\ell)\, \mu_\ell\, M & (\boldsymbol{v}_1 \otimes \ldots \otimes \boldsymbol{v}_\ell)\mu_\ell\end{bmatrix}
\]
We show the desired result by induction on $n$. The base case $n = 0$ is trivial. For $n \geq 1$, by definition of $\overline{\mu}_1, \ldots, \overline{\mu}_m$, 
\begin{align*}
& \sum_{\substack{\ell \in \{1,\ldots,m\}\\\bm{j} \in J_n(\ell)}} (\vb_{j_1} \otimes \cdots \otimes \vb_{j_\ell} ) \overline{\mu}_\ell(n,\bm{j})\\
=\ & \sum_{\substack{\ell \in \{1,\ldots,m\}\\\bm{j} \in J_n(\ell)}}
(\begin{bmatrix} \va_{j_1}M & \va_{j_1} \end{bmatrix} \otimes \cdots \otimes \begin{bmatrix} \va_{j_\ell}M & \va_{j_\ell} \end{bmatrix} ) \overline{\mu}_\ell(n,\bm{j})\\
=\ & \sum_{\substack{\ell \in \{1,\ldots,m\}\\\bm{j} \in J_n(\ell)}}
\begin{bmatrix} (\va_1 \otimes \ldots \otimes \va_\ell)\, \mu_\ell(n,\bm{j})\, M & (\va_1 \otimes \ldots \otimes \va_\ell)\mu_\ell(n,\bm{j})\end{bmatrix}\\
=\ &\begin{bmatrix}\va_n\, M & \va_n\end{bmatrix}\qedhere
\end{align*}
\end{proof}

In a generalised multi-holonomic recurrence, the 
index set $I$ is a subset of $\Z \times \N^*$.  
This means that the shift for the sum $j_1 + \ldots + j_\ell$ (the difference between this sum and $n$) can be either positive or negative. We start by showing how to transform a generalised multi-holonomic sequences with negative shift to a multi-holonomic sequence.


\begin{proposition}
\label{prop:generalised guard to multi-holonomic}
Let $(\boldsymbol a_n)_{n=0}^\infty$ in $\K^d$ be a generalised multi-holonomic sequence where
$I \subseteq \mathbb N^+$ is a finite set, and for every $\boldsymbol{m} = (m_1,\ldots,m_\ell)$ in $I$, $\mu_{\boldsymbol{m}} \in \KK(x_0,\ldots,x_\ell)^{d^\ell \times d}$, and for all $n \geq 1$, 
\begin{gather}
    \boldsymbol a_n = \sum_{\substack{\bm{m} \in I\\\bm{j} \in G_n(\bm{m})}} 
(\boldsymbol a_{j_1} \otimes \cdots \otimes \boldsymbol a_{j_\ell} ) \cdot \mu_{\boldsymbol m}(n,\bm{j})
\label{eq:prop-GEN-HOL}
\end{gather}
Then there exists an unguarded multi-holonomic sequence $(\vb_n)_{n=0}^\infty$ of order $(M+1)\,d + 1$ where $M$ is the greater than any integer in $I$ and such that for all $i \in \{1, \ldots, d\}$, for all $n \in \N$, $\va_{n,i} = \vb_{n,i}$.
\end{proposition}

\begin{proof}
Let $M\in\mathbb N$ be such that $M>m_0,\ldots,m_\ell$ for all $(m_0,\ldots,m_\ell)\in I$
and let $\boldsymbol e_0,\ldots,\boldsymbol e_{M-1}$ be the canonical basis of $\K^{M}$. Let $D = d\,M+1$.
We will show that the sequence $(\overline{\va}_n)_{n=0}^\infty$ of vectors in $\K^{dM}$ defined by
\[ 
\overline{\va}_n := \begin{bmatrix} \boldsymbol a_n \otimes \boldsymbol e_{\min(n,M-1)}  & 1\end{bmatrix} \qquad(n\in\mathbb N)
\]
is a multi-holonomic.
To this end, given $\boldsymbol m=(m_0,\ldots,m_\ell) \in I$,  when $\ell \geq 1$, we define the linear maps
\[
\delta  : \K^{M^\ell} \rightarrow \K^M,\qquad
\varepsilon : \K^{M^{(\ell+1)}} \rightarrow \K,\qquad 
\widetilde{\mu}_{\boldsymbol m} : \K^{D^\ell} \rightarrow \K^{D}[x_0,\ldots,x_{\ell+1}]
\]
respectively by the following equations, where $j_1,\ldots,j_{\ell+1} \in \{0,\ldots,M-1\}$, and  $u_1,\ldots,u_{\ell+1} \in \K^d$,
\begin{align*}
    (\boldsymbol e_{j_1}\otimes \cdots \otimes \boldsymbol e_{j_l}) \, \delta  \, = \, & \boldsymbol e_{\min(j_1+\cdots+j_\ell+1,M-1)} \\
    (\boldsymbol e_{j_1} \otimes \cdots \otimes \boldsymbol e_{j_{\ell+1}}) \, \varepsilon  \, = \, & \begin{cases} 
        1 & \; \text{when }j_{\ell+1} = m_0 \wedge \bigwedge_{s=1}^\ell m_{s} \leq \sum_{\substack{i\in \{1,\ldots,\ell\}\setminus\{s\}}} j_i\\
        0 & \mbox{otherwise}
    \end{cases}
\end{align*}
and 
\begin{align*}
    &(\begin{bmatrix}\boldsymbol u_1\otimes \boldsymbol e_{j_1} & w_1 \end{bmatrix} \otimes \cdots \otimes \begin{bmatrix} \boldsymbol u_{\ell+1} \otimes \boldsymbol e_{j_{\ell+1}}& w_{\ell+1}\end{bmatrix}) \cdot \overline{\mu}_{\boldsymbol m}(x_0,\ldots,x_{\ell+1}) \, =\, \\ 
    &\qquad
    \begin{bmatrix}
    (\boldsymbol e_{j_1}\otimes \cdots \otimes \boldsymbol e_{j_{\ell+1}})\,\varepsilon \cdot 
     (\boldsymbol u_1\otimes \cdots \otimes\boldsymbol u_{\ell})\,\mu_{\boldsymbol m}(x_0,\ldots,x_{\ell}) \otimes 
    (\boldsymbol e_{j_1} \otimes\cdots \otimes  \boldsymbol e_{j_\ell}) \,\delta & 0
    \end{bmatrix} 
\end{align*}

Since $\boldsymbol{e}_0,\ldots,\boldsymbol{e}_{M-1}$ form a basis of $\K^M$, these linear maps are fully defined. 

By construction, for all $n \geq 1$, for all $j_1+\cdots + j_\ell=n-1 - m_0$, if
$j_1\leq n-1-m_1,\ldots,j_{\ell}\leq n-1-m_{\ell}$ and taking $j_{\ell+1} = m_0$ then for all $s \in \{1, \ldots, \ell\}$, $j_s \leq n-1-m_s$ implies $0 \leq (\sum_{\substack{i\in \{1,\ldots,\ell\}\setminus\{s\}}} j_i) - m_s$. Therefore, $(\boldsymbol e_{j_1} \otimes \cdots \otimes \boldsymbol e_{j_{\ell+1}}) \, \varepsilon = 1$ and so:
\[
(\overline{\boldsymbol a}_{j_1}\otimes\cdots\otimes \overline{\boldsymbol a}_{j_{\ell+1}}) \, \overline{\mu}_{\boldsymbol m}(j_1,\ldots,j_{\ell+1})
= \begin{bmatrix} {(\boldsymbol a_{j_1} \otimes \cdots \otimes \boldsymbol a_{j_\ell})\mu_{\boldsymbol m}}(j_1,\ldots,j_{\ell}) \otimes \boldsymbol e_{\min(j_1+\cdots+j_{\ell}+1,M-1)} & 0 \end{bmatrix}.
\]
Otherwise, when $j_{\ell+1} \neq m_0$ or $j_s > n -1 - m_s$ for some $s \in \{1, \ldots, \ell\}$, as $(\boldsymbol e_{j_1} \otimes \cdots \otimes \boldsymbol e_{j_{\ell+1}}) \, \varepsilon = 0$, we deduce that $(\overline{\boldsymbol a}_{j_1}\otimes\cdots\otimes \overline{\boldsymbol a}_{j_{\ell+1}}) \, \overline{\mu}_{\boldsymbol{m}}=0$.
Therefore, we deduce that:
\begin{align*}
    & \sum_{\substack{j_1+\cdots+j_{\ell+1} = n - 1}} 
    (\overline{\va}_{j_1} \otimes \cdots \otimes \overline{\va}_{j_{\ell+1}} ) \cdot \overline{\mu}_{\boldsymbol m}(n,j_1,\ldots,j_{\ell+1}) \\
    =\ & \sum_{\substack{j_1+\cdots+j_{\ell} = n - 1 +m_0 \\ j_1\leq n-1-m_1,\ldots,j_\ell \leq n-1-m_\ell}} 
    \begin{bmatrix} (\va_{j_1} \otimes \cdots \otimes \va_{j_{\ell}} ) \cdot \mu_{\boldsymbol m}(n,j_1,\ldots,j_{\ell}) \otimes \boldsymbol{e}_{\min(n,M-1)} & 0 \end{bmatrix}
\end{align*}

When $\ell =1$, that is $\bm{m} = (m_0)$, we have in fact that $\mu_{\bm{m}}$ is in fact a constant matrix. In such a case, we assume w.l.o.g. that there is a single such $\bm{m}$ the value of $m_0$ being unimportant (if there are several singleton in $I$, we can sum the constants). Thus, we consider the linear map $\overline{\mu}_{\bm{m}} : \K^D \rightarrow \K^D$ such that 
\[
\begin{bmatrix}\boldsymbol u\otimes \boldsymbol e_{j} & w \end{bmatrix} \cdot \overline{\mu}_{\bm{m}} = \begin{bmatrix}w\mu_{\bm{m}} \otimes \boldsymbol e_{j}\delta & w \end{bmatrix}
\]

Hence by the bilinearity of Kronecker product, we have for all $n \geq 1$,  
\begin{equation}
\overline{\va}_n = \sum_{\boldsymbol m = (m_1,\ldots,m_\ell) \in I }\sum_{j_1+\cdots+j_{\ell+1} = n - 1} 
(\overline{\va}_{j_1} \otimes \cdots \otimes \overline{\va}_{j_{\ell+1}} ) \cdot \overline{\mu}_{\boldsymbol m}(n,j_1,\ldots,j_{\ell+1})
\label{prop:eq min}
\end{equation}
To complete the proof, let $k$ be the size of the largest tuple in $I$. We partition $I$ into $I_1, \ldots, I_k$ such that $I_\ell$ is the set of $\bm{m} \in I$ such that $\overline{\mu}_{\bm{m}} \in \KK(x_0,\ldots,x_\ell)^{D^\ell \times D}$. We therefore obtain
\begin{align*}
\overline{\va}_n &= \sum_{\ell = 1}^k \sum_{\substack{\boldsymbol m \in I\\\bm{j} \in J_n(\ell)}} (\overline{\va}_{j_1} \otimes \cdots \otimes \overline{\va}_{j_\ell} ) \cdot \overline{\mu}_{\boldsymbol m}(n,\bm{j})\\
& = \sum_{\ell = 1}^k \sum_{\bm{j} \in J_n(\ell)} (\overline{\va}_{j_1} \otimes \cdots \otimes \overline{\va}_{j_\ell} ) \cdot \left(\sum_{\boldsymbol m \in I} \overline{\mu}_{\boldsymbol m}(n,\bm{j})\right)\,.
\end{align*}
We define the linear map $\gamma : \K^{D} \rightarrow \K^d$ such that for all $j \in \{0, \ldots, M-1\}$, $\begin{bmatrix} \bm{u} \otimes e_j & w \end{bmatrix} \gamma = \bm{u}$ and we conclude by applying \Cref{prop:linear map} with $(\overline{\va}_n)_{n=0}^\infty$ and $\gamma$.
\end{proof}

For the next proposition, recall that $G_n(s_0,\ldots,s_\ell)$ is the set of tuples $(j_1,\ldots, j_\ell) \in \N^\ell$ such that $j_1 + \ldots + j_\ell = n - 1 + s_0$ and for all $k \in \{1, \ldots, \ell\}$, $j_k \leq n-1-s_k$.


\begin{restatable}{proposition}{propgeneralisedmultiholonomic}
\label{prop:generalised to multiholonomic}
Let $(\boldsymbol a_n)_{n=0}^\infty$ be a generalised multi-holonomic sequence of elements of $\K^d$. There exists an unguarded multi-holonomic sequence $(\boldsymbol b_n)_{n=0}^\infty$ of order $D \geq d$ such that for all $i \in \{1, \ldots, d\}$, for all $n \in \N$, $\va_{n,i} = \vb_{n,i}$.
\end{restatable}

\begin{proof}
By definition of a generalised multi-holonomic sequence, we know that $(\va_n)_{n=0}^\infty$ satisfies for all $n \geq 1$, 
\[
\va_n = \sum_{\boldsymbol{m}\in I}\; \sum_{\bm{j}\in G_n(\bm{m})} 
(\boldsymbol a_{j_1} \otimes \cdots \otimes \boldsymbol a_{j_\ell} ) \cdot \mu_{\boldsymbol m}(n,\bm{j})
\]
with $I \subseteq \Z \times \N^*$. The main idea will be to create a new set $I'$ that produce the same sum but which fit the conditions of \Cref{prop:generalised guard to multi-holonomic}, that are only negative shift.

Let us take $M$ such that
\begin{equation}
\label{eq:prop:generalised to multiholonomic}
    M > \max_{\bm{m} \in I}\left\{ \sum_{m \in \bm{m}} |m| \right\}\,.
\end{equation}
We will define $(\vb_n)_{n=0}^\infty$ the sequence such that for all $n \in \N$, 
\[
\vb_n = \begin{bmatrix} \va_n & \ldots & \va_{n+M} \end{bmatrix}
\]
We define for all $i \in \{0, \ldots, M\}$ the linear maps $\pi_i: \K^{d(M+1)} \rightarrow \K^d$ for all $\boldsymbol{u}_0, \ldots, \boldsymbol{u}_M \in \K^d$, 
\[
\begin{bmatrix}
    \boldsymbol{u}_1 & \ldots & \boldsymbol{u}_M 
\end{bmatrix}\pi_i = \boldsymbol{u}_i
\]

Thus, for all $n \geq 1$, 
\[
\vb_n\pi_M = \sum_{\boldsymbol{m} \in I}\; \sum_{\bm{j}\in G_{n+M}(\bm{m})} 
(\boldsymbol a_{j_1} \otimes \cdots \otimes \boldsymbol a_{j_\ell} ) \cdot \mu_{\boldsymbol m}(n+M,\bm{j})
\]
Let us denote by $A(n,\bm{j})$ the expression $(\boldsymbol a_{j_1} \otimes \cdots \otimes \boldsymbol a_{j_\ell} ) \cdot \mu_{\boldsymbol m}(n+M,\bm{j})$.
Take $\boldsymbol{m}=(m_0,\ldots,m_\ell)\in I$. 
In the internal sum, given $(j_1, \ldots, j_\ell) \in G_{n+M}(\bm{m})$, we know that $j_1 + \ldots + j_\ell = n-1+M-m_0$ and $j_1 \leq n-1+M-m_1$, \ldots, $j_\ell \leq n-1+M-m_\ell$. We want to partition the tuples of $G_{n+M}(\bm{m})$ into $2^\ell$ subsets depending on whether each coordinate $j_s$ is either smaller or greater than $M-m_s$. Thus, given a subset $S \subseteq \{1, \ldots, \ell\}$, the restriction of $G_{n+M}(\bm{m})$ to $S$, denoted $G_{n+M}(\bm{m})_{|S}$ will corresponds to all tuples $(j_1, \ldots, j_\ell) \in G_{n+M}(\bm{m})$ such that $j_s \leq M - m_s$ if and only if $s \in S$. Therefore, we obtain:
\begin{align*}
\vb_n\pi_M &= \sum_{\boldsymbol{m} \in I}\; \sum_{\bm{j}\in G_{n+M}(\bm{m})} 
    (\boldsymbol a_{j_1} \otimes \cdots \otimes \boldsymbol a_{j_\ell} ) \cdot \mu_{\boldsymbol m}(n+M,\bm{j})
\\
&=\sum_{\bm{m}=(m_0,\ldots,m_\ell) \in I}\; \sum_{S \subseteq \{1,\ldots,\ell\}}\; \sum_{\bm{j} \in G_{n+M}(\bm{m})_{|S}} (\boldsymbol a_{j_1} \otimes \cdots \otimes \boldsymbol a_{j_\ell} ) \cdot \mu_{\boldsymbol m}(n+M,\bm{j})\\
&=\sum_{\bm{m}=(m_0,\ldots,m_\ell) \in I}\; \sum_{S \subseteq \{1,\ldots,\ell\}}\; \sum_{\bm{j} \in G_{n+M}(\bm{m})_{|S}} A(n,\bm{j})
\end{align*}

Let us look at one instance of $\bm{m} = (m_0,\ldots,m_\ell) \in I$ and $S \subseteq \{1, \ldots, \ell\}$. Let us define $k_1, \ldots, k_\ell$ such that $k_s = 0$ when $s \in S$ and $k_s = M - m_s$ otherwise. By definition, the internal sum can be split as follows:
\begin{align*}
& \sum_{\bm{j} \in G_{n+M}(\bm{m})_{|S}} A(n,\bm{j})\\
=\ & \sum_{\substack{\forall s \in S, j_s \leq M - m_s\\L = \sum_{s \in S} j_s}}\;
\sum_{\substack{\sum_{s \not\in S} j_s = n-1+M+m_0-L\\ \forall s \not\in S, M-m_s < j_s \leq n + M - 1 -m_s}}\; 
A(n,\bm{j}) \\
=\ & 
\sum_{\substack{\forall s \in S, j_s \leq M - m_s\\L = \sum_{s \in S} j_s}}\;
\sum_{\substack{\sum_{s \not\in S} j_s = n-1+M+m_0-L -(\ell-|S|)M + \sum_{s \not\in S} m_s \\ 
\forall s \not\in S', 0 < j_s \leq n-1}} A(n,j_1+k_1,\ldots,j_\ell+k_\ell)\notag
\end{align*}
Intuitively, we will build a new set $I' \subseteq \N^+$ with an element added for each $S = \{i_1,\ldots,i_k\} \subseteq \{1, \ldots, \ell\}$ and each $(j_{i_1},\ldots,j_{i_k})$ with $j_{i_1} \leq M - m_{i_1}, \ldots, j_{i_k} \leq M - m_{i_k}$. In particular, we will add (under some conditions described below) the element $\overline{\boldsymbol{m}} = (M+m_0-L-(\ell-|S|)M + \sum_{s\not\in S} m_s)$ associated to the transition matrix $\overline{\mu}_{\overline{\boldsymbol{m}}}$ such that:
\begin{align*}
(\boldsymbol{u}_{o_1} \otimes \ldots \otimes \boldsymbol{u}_{o_\epsilon}) \overline{\mu}_{\overline{\boldsymbol{m}}}(x_0,x_{o_1},\ldots,x_{o_\epsilon})\pi_M &= (\boldsymbol{v}_{1} \otimes \ldots \otimes \boldsymbol{v}_{\ell}) \mu_{\boldsymbol{m}}(x_0+M,c_1,\ldots,c_\ell)\\
(\boldsymbol{u}_{o_1} \otimes \ldots \otimes \boldsymbol{u}_{o_\epsilon}) \overline{\mu}_{\overline{\boldsymbol{m}}}(x_0,x_{o_1},\ldots,x_{o_\epsilon})\pi_i &= \boldsymbol{0}^d\qquad \forall i \in \{0,\ldots,M-1\}
\end{align*}
where $\{1, \ldots, \ell\} \setminus S = \{ o_1, \ldots, o_\varepsilon\}$ and for all $i \in \{1, \ldots, \ell\}$, if $i \in S$ then $\boldsymbol{v}_i = \va_{j_i}$ and $c_i= j_i$ else $\boldsymbol{v}_i = \boldsymbol{u}_i\pi_{M-m_i}$ and $c_i = x_i + M-m_i$.

Since $M$ satisfies \Cref{eq:prop:generalised to multiholonomic}, we deduce that $M - m_i > 0$ for all $i \in \{1, \ldots, \ell\}$ and so this transition matrix is fully and correctly defined. In particular, the projections $\pi_{M-m_i}$ exist. 

It remains to show that $N = M+m_0-\sum_{s\in S} j_s -(\ell-|S|)M + \sum_{s\not\in S} m_s$ is negative. We do a case analysis on $|S|$:
\begin{itemize}
\item When $|S| = \ell$: $N = M + m_0 - \sum_{s = 1}^m j_s$. However, we know that $n \in \N$ and by \Cref{eq:prop:generalised to multiholonomic}, we also have $N > 0$. Thus, $n \neq 1 - N$ which entails that the inner sum is always $0$.
\item When $|S| = \ell -1$: W.l.o.g., assume $1 \not\in S$. Hence $N = m_0 -\sum_{s = 2}^\ell j_s + m_1$. Hence, as the innermost sum contains the condition $j_1 \leq n-1$, the sum becomes in fact $0$ if $N > 0$.  Hence, we will only add the tuple $\overline{\boldsymbol{m}}$ to $I'$ if  $N \leq 0$.
\item When $|S| < \ell - 1$: Recall that by \Cref{eq:prop:generalised to multiholonomic}, 
\[
\max_{\bm{m} \in I}\left\{ \sum_{m \in \bm{m}} |m| \right\} < M\,.
\]
which entails $N \leq 0$.
\end{itemize}
Finally, we add to $I'$ an element $\boldsymbol{m} = (0,0)$ with the transition matrix $\overline{\mu}_{\overline{\boldsymbol{m}}}$ such that:
\begin{align*}
    \boldsymbol{u} \overline{\mu}_{\overline{\boldsymbol{m}}}(x_0,x_1) \pi_M &= \boldsymbol{0}^d\\
    \boldsymbol{u} \overline{\mu}_{\overline{\boldsymbol{m}}}(x_0,x_1) \pi_i &= \boldsymbol{u} \pi_{i+1} \qquad \forall i \in \{0,\ldots,M-1\}
    \end{align*}
This last element allow to \emph{shift} the values of $\va_{n+k}$ in the vector $\vb_{n}$. In other words, it will ensure that $\vb_{n+1}\pi_{i-1} = \vb_n\pi_i$.

By construction, we have shown that $(\vb_n)_{n=0}^\infty$ is a generalised multi-holonomic sequence with only negative shifts and such that for all $i \in \{1, \ldots, d\}$, for all $n \in \N$, $\va_{n,i} = \vb_{n,i}$. We conclude by applying \Cref{prop:generalised guard to multi-holonomic}.
\end{proof}

\begin{proposition}
\label{prop:multi to arity 2}
Let $(\va_n)_{n=0}^\infty$ be a multi-holonomic sequence of elements in $\K^d$ and arity $m$. There exists a multi-holonomic $(\vb_n)_{n=0}^\infty$ of arity at most $2$ such that for all $i \in \{1, \ldots, d\}$, for all $n \in \N$, $a_{n,i} = b_{n,i}$.
\end{proposition}

\begin{proof}
By definition, there exist for all $\ell \in \{1, \ldots, m\}$ a matrix of polynomials $\mu_\ell \in \KK(x)^{d^\ell \times d}$ such that for all $n \geq 1$, 
\begin{equation}
\label{eq:proof-arity 2}
\va_n = \sum_{\substack{\ell \in \{1, \ldots, m\}\\\bm{j} \in J_n(\ell)}} (\va_{j_1} \otimes \ldots \otimes \va_{j_\ell}) \cdot \mu_\ell(n)
\end{equation}
Let us assume that $m > 2$ as otherwise the result would directly hold with $(\va_n)_{n=0}^\infty = (\vb_n)_{n=0}^\infty$

We define new sequence $(\overline{\va}_n)_{n=0}^\infty$ such that for all $n \in \N$, $\overline{\va}_n = \va_{n+1}$. Moreover, we define $\sum_{\ell=2}^{m-1} d^\ell$ new sequences $c_{n,(j_1,\ldots,j_\ell)}$ for all $j_1,\ldots,j_\ell \in \{1, \ldots, d\}$ and $\ell \in \{2, \ldots, m-1\}$. Let us now define the value of these new sequences. For all $(j_1,j_2) \in \{1, \ldots, d\}$,
\begin{equation}
\label{eq:prop-arity2-1}
c_{n,(j_1,j_2)} := \sum_{k=0}^n a_{k,j_1} a_{n-k,j_2} = \overline{a}_{n-1,j_1}a_{0,j_2} + \sum_{k=0}^{n-1} a_{k,j_1} \overline{a}_{n-k-1,j_2}
\end{equation}
For all $\ell \in \{3, \ldots, m\}$, for all $j_1,\ldots,j_\ell \in \{1, \ldots, d\}$, for all $n \in \N$, 
\begin{equation}
\label{eq:prop-arity2-2}
c_{n,(j_1,\ldots,j_\ell)} = \sum_{k=0}^n c_{k,(j_1,\ldots,j_{\ell-1})} a_{n-k,j_\ell} = c_{n,(j_1,\ldots,j_{\ell-1})}a_{0,j_\ell} + \sum_{k=0}^{n-1} c_{k,(j_1,\ldots,j_{\ell-1})} \overline{a}_{n-k-1,j_\ell}
\end{equation}
We start by showing by induction on $\ell$ that for all $j_1, \ldots, j_\ell \in \{1, \ldots, d\}$, 
\begin{equation}
    \label{eq:prop-arity2-4}
c_{n,(j_1,\ldots,j_\ell)} = \sum_{(k_1,\ldots,k_\ell) \in J_{n+1}(\ell)} a_{k_1,j_1}\ldots a_{k_\ell,j_\ell}
\end{equation}
The base case $\ell = 2$ is given by definition of $c_{n,(j_1,j_2)}$. In the inductive step, we have:
\begin{align*}
&\sum_{(k_1,\ldots,k_\ell) \in J_{n+1}(\ell)} a_{k_1,j_1}\ldots a_{k_\ell,j_\ell} \\
=\ & \sum_{k_1+ \ldots +k_\ell = n} a_{k_1,j_1}\ldots a_{k_\ell,j_\ell}\\
=\ & \sum_{k=0}^n\ (\sum_{k_1+ \ldots +k_{\ell-1} = k} a_{k_1,j_1}\ldots a_{k_{\ell-1},j_{\ell-1}}) a_{n-k,j_\ell}\\
=\ & \sum_{k=0}^n\ c_{k,(j_1,\ldots,j_{\ell-1})} a_{n-k,j_\ell}\qquad \text{by inductive hypothesis}\\
=\ & c_{n,(j_1,\ldots,j_\ell)}
\end{align*}

We define the sequence $(\vb_n)_{n=0}^\infty$ of dimension $D = d(1 + \frac{d^\ell -1}{d-1})$ such that for all $n \in N$, 
\[
\vb_n = 
\begin{bmatrix}
    \va_n & \overline{\va}_n & c_{n,(1,1)} & c_{n,(1,2)} & \ldots & c_{n,(d,\ldots,d)}
\end{bmatrix}
\]
To simplify access to this vector, let us define $\alpha : \{1, \ldots, d\} \rightarrow \{1, \ldots, D\}$, $\beta : \{1, \ldots, d\} \rightarrow \{1, \ldots, D\}$ and $\gamma_\ell : \{1, \ldots, d\}^\ell \rightarrow \{1, \ldots, D\}$ for all $\ell \in \{2, \ldots, m-1\}$ such that 
\[
b_{n,\alpha(j)} = a_{n,j} \qquad b_{n,\beta(j)} = \overline{a}_{n,j} \qquad b_{n,\gamma_\ell(j_1,\ldots,j_\ell)} = c_{n,(j_1,\ldots,j_\ell)}
\]
Let us now show that $\vb_n$ is a multi-holonomic sequence of arity $2$. For all $n \geq 1$, for all $j \in \{1, \ldots, D\}$, we say that $b_{n,j}$ is partially multi-holonomic of arity 2 when there exists some $A_{i,j} \in \KK(x)$ and $B_{(i_1,i_2),j} \in \KK(x)$ for $i,i_1,i_2 \in \{1, \ldots, D\}$ such that
\begin{equation}
\label{eq:prop-arity2-3}
    b_{n,j} = \sum_{i = 1}^D b_{n-1,i} A_{i,j}(n) + \sum_{i_1,i_2 \in \{1, \ldots, D\}} \sum_{k = 0}^{n-1} b_{k,i_1}b_{n-k-1,i_2} B_{(i_1,i_2),j}(n)
\end{equation}
When $b_{n,j}$ is partially multi-holonomic of arity 2 for all $j \in \{1, \ldots, D\}$ then we directly obtain that $\vb_n$ is a multi-holonomic sequence of arity $2$. Indeed by creating two matrices $\mu \in \KK(x)^{D\times D}$ and $\mu' \in \KK(x)^{D^2\times D}$ such that 
\[
\mu_{i,j}(x) = A_{i,j}(x) \quad (i,j \in \{1, \ldots, D\}) \qquad \text{and} \qquad \mu'_{(i_1,i_2),j}(x) = B_{(i_1,i_2),j}(x) \quad (i_1,i_2,j \in \{1, \ldots, D\}) 
\]
we directly obtain that 
\[\vb_n = \vb_{n-1} \cdot \mu(n) + \sum_{k=0}^{n-1} (\vb_{k} \otimes \vb_{n-k-1}) \cdot \mu'(n)\]
which would allow us to conclude.

\medskip

Therefore, let us prove that for all $j \in \{1, \ldots, D\}$, $b_{n,j}$ is partially multi-holonomic of arity 2. We do a case analysis on the indices given by the function $\alpha, \beta$ and $\gamma_\ell$ for $\ell \in \{2, \ldots, m-1\}$.
\begin{itemize}
\item Indices $\alpha(j)$, $j \in \{1, \ldots, d\}$: By definition of $\overline{\va_n}$, we know that for all $n \geq 1$, for all $j \in \{1, \ldots, d\}$,
\[
    b_{n,\alpha(j)} = a_{n,j} = \overline{a}_{n-1,j} = b_{n-1,\beta(j)}\,.
\]
Hence the result holds.
\item We prove the result for the indices given by $\gamma_\ell$ by induction $\ell$. In the base case $\ell = 2$, we have for all $j_1,j_2 \in \{1, \ldots, d\}$, 
\begin{align*}
b_{n,\gamma(j_1,j_2)} &= c_{n,(j_1,j_2)} \\
&= \overline{a}_{n-1,j_1}a_{0,j_2} + \sum_{k=0}^{n-1} a_{k,j_1} \overline{a}_{n-k-1,j_2} \qquad \text{by \Cref{eq:prop-arity2-1}}\\
&= b_{n-1,\beta(j_1)} a_{0,j_2} + \sum_{k=0}^{n-1} b_{k,\alpha(j_1)} b_{n-k-1,\beta(j_2)}\,.
\end{align*}
Hence the result holds for the base case. In the inductive step $\ell > 2$, we have 
\begin{align*}
b_{n,\gamma(j_1,\ldots,j_\ell)} &= c_{n,(j_1,\ldots,j_\ell)} \\
&= c_{n,(j_1,\ldots,j_{\ell-1})}a_{0,j_\ell} + \sum_{k=0}^{n-1} c_{k,(j_1,\ldots,j_{\ell-1})} \overline{a}_{n-k-1,j_\ell} \qquad \text{by \Cref{eq:prop-arity2-2}}\\
&= b_{n,\gamma_{\ell-1}(j_1,\ldots,j_{\ell-1})}a_{0,j_\ell} + \sum_{k=0}^{n-1} b_{k,\gamma_{\ell-1}(j_1,\ldots,j_{\ell-1})} b_{n-k-1,\beta(j_\ell)}
\end{align*}
By inductive hypothesis on $\gamma_{\ell-1}$, we know that $b_{n,\gamma_{\ell-1}(j_1,\ldots,j_{\ell-1})}$ is partially multi-holonomic of arity 2 and we can replace $b_{n,\gamma_{\ell-1}(j_1,\ldots,j_{\ell-1})}$ by the sum given in \Cref{eq:prop-arity2-3}. This allows us to conclude that $b_{n,\gamma(j_1,\ldots,j_\ell)}$ is partially multi-holonomic of arity 2.
\item Indices $\beta(j)$, $j \in \{1, \ldots, d\}$: From \Cref{eq:proof-arity 2}, we know that for all $n \geq 1$, for all $j \in \{1, \ldots, d\}$
\begin{align*}
b_{n,\beta(j)} & = a_{n+1,j}\\
&= \sum_{\substack{\ell \in \{1, \ldots, m\}\\\bm{j} \in J_{n+1}(\ell)}} ((\va_{j_1} \otimes \ldots \otimes \va_{j_\ell}) \cdot \mu_\ell(n+1))_j\\
&= (\overline{\va}_{n-1} \mu_1(n+1))_j + \sum_{\substack{\ell \in \{2, \ldots, m\}\\\bm{j} \in J_{n+1}(\ell)}} ((\va_{j_1} \otimes \ldots \otimes \va_{j_\ell}) \cdot \mu_\ell(n+1))_j\\
&= \sum_{i = 1}^d b_{n-1,\beta(i)} \mu_1(n+1)_{i,j} + \sum_{\ell \in \{2, \ldots, m\}} \sum_{i_1,\ldots,i_\ell \in \{1, \ldots, d\}} \sum_{\bm{j} \in J_{n+1}(\ell)}  a_{j_1,i_1} \ldots a_{j_\ell,i_\ell} \mu_\ell(n+1)_{(i_1,\ldots,i_\ell),j}\\
&= \sum_{i = 1}^d b_{n-1,\beta(i)} \mu_1(n+1)_{i,j} + \sum_{\ell \in \{2, \ldots, m\}} \sum_{i_1,\ldots,i_\ell \in \{1, \ldots, d\}} c_{n,(i_1,\ldots,i_\ell)}  \mu_\ell(n+1)_{(i_1,\ldots,i_\ell),j} \qquad \text{by \Cref{eq:prop-arity2-4}}\\
&= \sum_{i = 1}^d b_{n-1,\beta(i)} \mu_1(n+1)_{i,j} + \sum_{\ell \in \{2, \ldots, m\}} \sum_{i_1,\ldots,i_\ell \in \{1, \ldots, d\}} b_{n,\gamma_\ell(i_1,\ldots,i_\ell)}  \mu_\ell(n+1)_{(i_1,\ldots,i_\ell),j}\,.
\end{align*}
Since we already proved that all $b_{n,\gamma_\ell(i_1,\ldots,i_\ell)}$ are partially multi-holonomic of arity 2, we can replace each $b_{n,\gamma_\ell(i_1,\ldots,i_\ell)}$ by their sum given in \Cref{eq:prop-arity2-3}. This allows us to conclude that $b_{n,\beta(j)}$ is partially multi-holonomic of arity 2.
\end{itemize}
As we have prove that for all $n \geq 1$, for all $j \in \{1, \ldots, D\}$, $b_{n,j}$ is partially multi-holonomic of arity 2, we conclude that $(\vb_n)_{n=0}^\infty$ is a multi-holonomic sequence of arity. Moreover, by construction, for all $n \in \N$, for all $i \in \{1, \ldots, d\}$, $a_{n,i} = b_{n,i}$. Hence we conclude.
\end{proof}

\thequivalencerecurrence*

\begin{proof}
The proof of 1 $\Rightarrow$ 2 and 2 $\Rightarrow$ 3 is direct by the definition. To prove 3 $\Rightarrow$ 2, take $(\vb_n)_{n=0}^\infty$ a generalised multi-holonomic sequence that recognises $(\va_n)_{n=0}^\infty$. By \Cref{prop:generalised to multiholonomic}, $(\va_n)_{n=0}^\infty$ is recognised by an unguarded multi-holonomic sequence. Then by \Cref{prop:unguarded to multi}, $(\va_n)_{n=0}^\infty$ is recognised by a multi-holonomic sequence. The proof of 2 $\Rightarrow$ 1 is given by \Cref{prop:multi to arity 2}.
\end{proof}

\section{Detailed proofs of Theorem~\ref{thm:MAIN1}}
\label{app:theorem1}
\propRDAimpliesCDA*

\begin{proof}
Let $f(x)$ be a RDA power series. By definition, there exists a system of differential equations of the form:
\[
y'_1 = \frac{P_1(y_1,\ldots,y_k)}{Q_1(y_1,\ldots,y_k)} \quad \ldots \quad y'_k = \frac{P_k(y_1,\ldots,y_k)}{Q_k(y_1,\ldots,y_k)}
\]
and $P_1,Q_1,\ldots,P_k,Q_k \in \K[y_1,\ldots,y_k]$ and $(y_1(x),\ldots,y_k(x))$ power series solutions of this system such that $y_1(x) = f(x)$ and for all $i \in \{1, \ldots, k\}$, $Q_i(y_1(0),\ldots,y_k(0)) \neq 0$. Since for all $i \in \{1, \ldots,k\}$, $Q_i$ is a polynomial, thus $Q_i(y_1(x),\ldots,y_k(x))$ is also power series. Moreover, as $Q_i(y_1(0),\ldots,y_k(0)) \neq 0$, we deduce that its inverse $z_i(x) = \frac{1}{Q_i(y_1(x),\ldots,y_k(x))}$ is also a power series.
However, if for all $i,j \in \{1, \ldots, k\}$, we denote by $R_{i,j}(y_1,\ldots,y_k)$ the polynomials $\frac{\partial Q_i(y_1,\ldots,y_k)}{\partial y_j}$, then we obtain that 
\begin{align*}
z'_i(x) = - \frac{1}{Q_i(y_1(x),\ldots,y_k(x))^2} \sum_{j=1}^k R_{i,j}(y_1(x),\ldots,y_k(x)) y_j'(x)\,.
\end{align*}
Therefore, $(y_1(x),\ldots,y_k(x),z_1(x),\ldots,z_k(x))$ are power series solutions of the system:
\begin{align*}
    y'_1 &= P_1(y_1,\ldots,y_k) z_1 \\
    &\vdots\\ 
    y'_k &= P_k(y_1,\ldots,y_k) z_k\\
    z'_1 &= -z_1^2 \sum_{j=1}^k R_{1,j}(y_1,\ldots,y_k)P_j(y_1,\ldots,y_k) z_j \\
    &\vdots\\
    z'_k &= -z_k^2 \sum_{j=1}^k R_{k,j}(y_1,\ldots,y_k)P_j(y_1,\ldots,y_k) z_j
\end{align*}
This conclude the proof that $y_1(x) = f(x)$ is CDA.
\end{proof}

Recall that given an automata $\A$ over $\Sigma$, $\vf_A(x) = \sum_{n=0}^\infty \sum_{\substack{t \in T_\Sigma \\\size{t} = n}} \hmu(t)x^n$ and $\widetilde{\vf}_A(x) = \sum_{n=0}^\infty \sum_{\substack{t \in T_\Sigma \\\size{t} = n}} \lambda(t)\hmu(t)x^n$.

\propautomatatorecurrence*

\begin{proof}
Notice that the only trees of size $0$ are the nullary function symbols, hence $\va_0 = \sum_{\sigma\in \Sigma_0} \mu(\sigma)$.
For all $n \geq 1$, splitting the sum into the possible shapes that trees can take, we obtain:
\begin{align*}
\va_n = \sum_{\substack{t\in T_\Sigma\\ \size{t}=n}} \hmu(t) = \sum_{\substack{k > 0\\\sigma \in \Sigma_k}}\; \sum_{\bm{n} \in J_n(k)}\; \sum_{\substack{t_1\in T_\Sigma\\ \size{t}=n_1}} \ldots \sum_{\substack{t_k\in T_\Sigma\\ \size{t}=n_k}} (\hmu(t_1) \otimes \ldots \otimes \hmu(t_k)) \cdot \mu(\sigma)(n)
\end{align*}
Using the bilinearity of the Kronecker product, we obtain:
\begin{align*}
\va_n &= \sum_{\substack{k > 0\\\sigma \in \Sigma_k}}\; \sum_{\bm{n} \in J_n(k)} \left(\sum_{\substack{t_1\in T_\Sigma\\ \size{t_1}=n_1}} \hmu(t_1) \otimes \ldots \otimes \sum_{\substack{t_k\in T_\Sigma\\ \size{t_k}=n_k}} \hmu(t_k)\right) \cdot \mu(\sigma)(n)\nonumber\\
&= \sum_{\substack{k > 0\\\sigma \in \Sigma_k}}\; \sum_{\bm{n} \in J_n(k)} (\va_{n_1} \otimes \ldots \otimes \va_{n_k}) \cdot \mu(\sigma)(n)
\end{align*}
This concludes the proof for $\va_n$.

Let us now focus on the labelled generating function. 
Let $t= \sigma(t_1,\ldots,t_\ell) \in T_\Sigma$ with $\ell \geq 1$ and $\sigma \in \Sigma_\ell$ such that size of $t$ is $n$ and $t_1,\ldots,t_\ell$ have respective sizes
$n_1,\ldots,n_\ell$. Hence $n = n_1 + \ldots + n_\ell + 1$. The number of labellings of $t$ is equal to the product of the multinomial coefficient
$\binom{n-1}{n_1,\ldots,n_\ell}$, which represents the number of $\ell$ partitions of size $n_1,\ldots,n_\ell$ in $n-1$ elements, with the respective numbers of labellings of $t_1,\ldots,t_\ell$.
It follows that
\[ \lambda(\sigma_\ell(t_1,\ldots,t_\ell))= \frac{1}{n} \lambda(t_1) \cdots \lambda(t_\ell) \, .\]
Moreover, for all $\sigma \in \Sigma_0$, $\lambda(\sigma) = 1$.

Therefore, similarly to above, we deduce that $\vb_0 = \sum_{\sigma \in \Sigma_0} \mu(\sigma)$ and for all $n \geq 1$, 
\begin{align*}
\vb_n 
&= \sum_{\substack{k > 0\\\sigma \in \Sigma_k}}\; \sum_{\bm{n} \in J_n(k)} \frac{1}{n}\left(\sum_{\substack{t_1\in T_\Sigma\\ \size{t_1}=n_1}} \lambda(t_1)\hmu(t_1) \otimes \ldots \otimes \sum_{\substack{t_k\in T_\Sigma\\ \size{t_k}=n_k}} \lambda(t_k)\hmu(t_k)\right) \cdot \mu(\sigma)(n)\\
&= \frac{1}{n}\sum_{\substack{k > 0\\\sigma \in \Sigma_k}}\; \sum_{\bm{n} \in J_n(k)} (\vb_{n_1} \otimes \ldots \otimes \vb_{n_k}) \cdot \mu(\sigma)(n)\qedhere
\end{align*}
\end{proof}


\begin{proposition}
\label{prop:multi to automaton}
Let $(\va_n)_{n=0}^\infty$ be a multi-holonomic sequence of order $d$ and arity $m$ defined by the matrices $\mu_1, \ldots, \mu_m$. Let $\Sigma = \{\sigma_0, \ldots, \sigma_m\}$ such that $\sigma_0, \ldots, \sigma_m$ have arity $0$, \ldots, $m$ respectively. 
\begin{itemize}
\item Let $\A = (d,\mu)$ be the differential tree automaton over $\Sigma$ such that $\mu(\sigma_0) = \va_0$ and for all $\ell \in \{1, \ldots, m\}$, $\mu(\sigma_\ell)(x) = \mu_\ell(x)$. Then we have $\vf_\A(x) = \sum_{n=0}^\infty \va_n x^n$.
\item Let $\A = (d,\mu)$ be the differential tree automaton over $\Sigma$ such that $\mu(\sigma_0) = \va_0$ and for all $\ell \in \{1, \ldots, m\}$, $\mu(\sigma_\ell)(x) = x \cdot \mu_\ell(x)$. Then we have $\widetilde{\vf}_\A(x) = \sum_{n=0}^\infty \va_n x^n$.
\end{itemize}
\end{proposition}

\begin{proof}
By definition, the sequence $(\va_n)_{n=0}^\infty$ satisfies for all $n \geq 1$, 
\[
\va_n = \sum_{\substack{\ell \in \{1, \ldots,m\}\\\bm{j} \in J_n(\ell)}} (\va_{j_1} \otimes \ldots \otimes \va_{j_\ell}) \cdot \sum_{\sigma \in \Sigma_\ell}\mu(\sigma)(n)
\]
Notice in this recurrence, the values of $\va_n$ are uniquely defined by the values of $\va_{n'}$ for $n' < n$. Thus, as by hypothesis, $\mu(\sigma_0) = \va_0$, the constant term of $\vf_\A$ is the same as $\va_0$, and by \Cref{prop:automata to recurrence}, the coefficients of $\vf_\A$ satisfy the same recurrence relation as $(\va_n)_{n=0}^\infty$, we deduce that $\vf_\A(x) = \sum_{n=0}^\infty \va_n x^n$. The same reasoning applies for the second statement of this proposition.
\end{proof}


\begin{restatable}{proposition}{propderive}
\label{prop:derive}
For all vectors of power series $\vf(x) = \sum_{n=0}^\infty \va_n x^n$, writing $0^0 = 1$, we have for all $i \in \N$, $\SDeriveN{\vf}{i}(x) = \sum_{n=0}^\infty n^i \va_n x^n$.
\end{restatable}

\begin{proof}
We prove this result by induction on $i$. The base case ($i = 0$) is trivial. In the inductive step ($i>0$), $\SDeriveN{\vf}{i-1}(x) = \sum_{n=0}^\infty n^{i-1} \va_n x^n$ by inductive hypothesis, hence 
\begin{align*}
&\SDerive{(\SDeriveN{\vf}{i-1})}(x) = x(\SDeriveN{\vf}{i-1})'(x) \\
=\ & x\sum_{n=1}^\infty n^i \va_n x^{n-1} = \sum_{n=1}^\infty n^i \va_n x^n
\end{align*} As $i>0$, we have $0^i = 0$ 
and so $\SDeriveN{\vf}{i}(x) = \sum_{n=0}^\infty n^i \va_n x^n$.
\end{proof}


\propCDAtoholonomic*

\begin{proof}
As a first step, if $P_1,\ldots,P_k$ contains the variable $x$ then we introduce a new variable $z$ with the equation $z' = 1$ and replacing every instance of $x$ in $P_1,\ldots,P_k$ by $z$. This increases the order by $1$ while simplifying the problem. We thus now consider that $P_1,\ldots,P_k \in \K[y_1,\ldots,y_k]$ and let $m$ be the maximal degree of all their monomials
    
Let $(y_1(x),\ldots,y_k(x))$ power series solution of \Cref{eq:prop:CDS implies multi} such that $y_1(x) = f(x)$.
Since $m$ is the maximal degree of all the monomials in $P_1,\ldots,P_k$, each $P_j$ has the form:
\[
P_j(y_1,\ldots,y_k) = \sum_{\ell=0}^m\; \sum_{\bm{i} \in \{1, \ldots, k\}^\ell} \alpha_{\bm{i},j}\; y_{i_1} \ldots y_{i_\ell} \, ,
\]
where the all coefficients $\alpha_{\bm{i},j}$ are in $\K$. The constant in the polynomial $P_j$ corresponds to the case when $\ell = 0$, that is represented by the coefficient $\alpha_{(),j}$. Note that the given system of equations can be transformed into:
\[
\SDerive{y_1} = xP_1(y_1,\ldots,y_k) \quad \ldots \quad
\SDerive{y_k} = xP_k(y_1,\ldots,y_k)
\]
Since $(y_1(x),\ldots,y_k(x))$ are power series, if we denote $y_i(x) = \sum_{n=0}^\infty a_{n,i}\,x^n$ for all $i \in \{1, \ldots, k\}$, then, relying on \Cref{prop:derive}, we deduce that the coefficients of the power series satisfy the following recurrence relations: For all $j \in \{1, \ldots, k\}$, for all $n\geq 1$, 
\begin{align}
a_{n,j} =& \frac{1}{n}\sum_{\ell=0}^m\; \sum_{\bm{j} \in J_n(\ell)} \sum_{\bm{i} \in \{1, \ldots, k\}^\ell}\; \alpha_{\bm{i},j}\, a_{j_1,i_1} \ldots a_{j_\ell,i_\ell}
\label{eq:CDAtoMultiHolonomic-Main}
\end{align}
Let us show that the sequence $(\va_n)_{n=0}^\infty = (\begin{bmatrix} a_{n,1} &  \ldots & a_{n,k} & 1\end{bmatrix})_{n=0}^\infty$ is multi-holonomic of degree $0$.
Define the matrices $\mu_1, \ldots, \mu_r$ as follows: for all $\forall \ell \in \{3, \ldots, m\}$, for all $i_1,\ldots,i_\ell, j \in \{1, \ldots, k+1\}$, 
\[
\begin{array}{@{}l@{}}
    (\mu_\ell)_{(i_1,\ldots,i_\ell),j}(x_0) = \left\{
        \begin{array}{@{\ }l@{\ }r}
            \frac{\alpha_{(i_1,\ldots,i_\ell),j}}{x_0} & \text{if }i_1,\ldots,i_\ell,j \leq k\\
            0 & \text{otherwise}
        \end{array}
    \right.\\[3mm]
    (\mu_\ell)_{(i_1,i_2),j}(x_0) = \left\{
        \begin{array}{lr}
            \frac{\alpha_{(i_1,i_2),j}}{x_0} & \text{if }i_1,i_2,j \leq k\\
            \frac{1}{x_0} & \text{if } i_1 = i_2 = j = k+1\\
            0 & \text{otherwise}
        \end{array}
    \right.\\[3mm]
   (\mu_1)_{(i),j}(x_0) = \left\{
        \begin{array}{lr}
            \frac{\alpha_{(i),j}}{x_0} & \text{if }i,j \leq k\\
            \frac{\alpha_{(),j}}{x_0} & \text{if } i = k+1, j \leq k\\
            0 & \text{otherwise}
        \end{array}
    \right.
\end{array}
\]

We now prove the desired property. Notice that by definition of the Kronecker product for all $\ell \in \{1, \ldots, m\}$, for all $j \in \{1, \ldots, k+1\}$, the $j$-th entry of $(\va_{j_1} \otimes \ldots \otimes \va_{j_\ell}) \cdot \mu_\ell(n)$ is the following sum:
\[
\sum_{\bm{i} \in \{1, \ldots, k+1\}^\ell} a_{j_1,i_1} \ldots a_{j_\ell,i_\ell} \cdot (\mu_\ell)_{\bm{i},j}(n)
\]
Therefore, by definition of $\mu_1,\ldots, \mu_m$, we have for $j \in \{1, \ldots,k\}$, the $j$-th entry of $(\va_{j_1} \otimes \ldots \otimes \va_{j_\ell}) \cdot \mu_\ell(n)$ is
\[
\frac{1}{n}\sum_{\bm{i} \in \{1, \ldots, k\}^\ell} a_{j_1,i_1}\ldots a_{j_\ell,i_\ell}\, \alpha_{\bm{i},j} 
\]
when $\ell \geq 2$ and 
\[
\frac{\alpha_{(),j}}{n} + \frac{1}{n}\sum_{i = 1}^k a_{j_1,i}\alpha_{(i),j}
\] 
when $\ell = 1$.
Furthermore, the $k+1$-th entry of $(\va_{j_1} \otimes \ldots \otimes \va_{j_\ell}) \cdot \mu_\ell(n) = 1$ when $\ell = 2$ and $0$ otherwise. Note that by convention, when $\ell = 0$, the two innermost sums of \Cref{eq:CDAtoMultiHolonomic-Main} are equal to $\alpha_{(),j}$.
Therefore, combining these equations with \Cref{eq:CDAtoMultiHolonomic-Main} gives us that for all $j \in \{1, \ldots, k\}$, 
\begin{align*}
a_{n,j} =& \frac{1}{n}\sum_{\ell = 0}^m\ \sum_{\bm{j} \in J_n(\ell)}\ \sum_{\bm{i} \in \{1, \ldots, k\}^\ell}\; 
 \alpha_{\tilde{i},j}\, a_{j_1,i_1}\ldots a_{j_\ell,i_\ell} \\
=& \sum_{\ell = 1}^m\ \sum_{\bm{j} \in J_n(\ell)}\; 
((\va_{n_1} \otimes \ldots \otimes \va_{n_\ell}) \cdot \mu_\ell(n))_j
\end{align*}
Furthermore, 
\begin{align*}
    &\sum_{\ell=1}^m\ \sum_{\bm{j} \in J_n(\ell)}\ ((\va_{n_1} \otimes \ldots \otimes \va_{n_\ell}) \cdot \mu_\ell(n))_{k+1}\\
    =\ &\frac{1}{n} \sum_{\bm{j} \in J_n(2)} 1\\
    =\ & 1 = \va_{n,k+1}
\end{align*}
This allows us to deduce that $(\va_n)_{n=0}^\infty = (\begin{bmatrix} a_{n,1} &  \ldots & a_{n,k} & 1\end{bmatrix})_{n=0}^\infty$ is multi-holonomic of with coefficients in matrices of the form $\frac{a}{x_0}$ with $a \in \K$. We conclude by applying \Cref{prop:multi to automaton}.
\end{proof}

Let $(\va_n)_{n=0}^\infty$ be a multi-holonomic of arity~$m$ and degree~$r$,   defined by  matrices $\mu_1,\ldots,\mu_m$ whom, without loss of generality, are assumed to share a common denominator $A(x) \in \K[x]$. Furthermore, as $(\va_n)_{n=0}^\infty$ has degree $r$, assume that
\[
A(x) = \sum_{i=0}^r c_i x^i
\]
for some $c_0,\ldots,c_r \in r$ and for all $\ell \in \{1, \ldots, m\}$, \[\mu_\ell(x) = \sum_{i=0}^r x^i \cdot \mu_{\ell,i}\]
for some $\mu_{\ell,i} \in \K^{d^\ell\times d}$. Recall that $J_n(\ell)$ is the set of tuples $(j_1,\ldots,j_\ell) \in \N^{\ell}$ such that $j_1 + \ldots + j_\ell = n-1$.


\propsystemofequations*

\begin{proof}
We perform the following computations:

\begin{align*}
&\sum_{i=0}^r c_i \SDeriveN{\vf}{i}(x) - c_0\va_0\\
=\ & c_0 \sum_{n=0}^\infty \va_n x^n - c_0\va_0 + \sum_{i=1}^u c_i \sum_{n=0}^\infty n^i \va_n x^n \qquad\text{by Prop.~\ref{prop:derive}}\\
=\ &c_0 \sum_{n=1}^\infty \va_n x^n + \sum_{i=1}^r c_i \sum_{n=1}^\infty n^i \va_n x^n \qquad\text{as $0^i = 0$ for $i\geq1$}\\
=\ &\sum_{i=0}^r c_i  \sum_{n=0}^\infty (n+1)^i \va_{n+1} x^{n+1}\\
=\ &\sum_{n=0}^\infty \sum_{i=0}^r c_i (n+1)^i \va_{n+1} x^{n+1}\\
=\ &\sum_{n=0}^\infty A(n+1) \va_{n+1} x^{n+1}\\
=\ &\sum_{n=0}^\infty \frac{A(n+1)}{A(n+1)} \sum_{\ell=1}^m\; \sum_{\bm{n} \in J_{n+1}(\ell)} (\va_{n_1} \otimes \ldots \otimes \va_{n_\ell}) \cdot \sum_{i=0}^r (n+1)^i \cdot \mu_{\ell,i} x^{n+1}\\
=\ &\sum_{n=0}^\infty \sum_{\ell=1}^m\; \sum_{\bm{n} \in J_{n+1}(\ell)} (\va_{n_1} \otimes \ldots \otimes \va_{n_\ell}) \cdot \sum_{i=0}^r (n_1+\ldots+n_\ell+1)^i \cdot \mu_{\ell,i} x^{n+1}\\
=\ & \sum_{n=0}^\infty \sum_{\ell=1}^m\;  \sum_{\bm{n} \in J_{n+1}(\ell)} (\va_{n_1} \otimes \ldots \otimes \va_{n_\ell}) \cdot \sum_{i=0}^r \sum_{\substack{\bm{j} \in J_{i+1}(\ell+1)}} \binom{i}{\bm{j}} n_1^{j_1}\ldots n_\ell^{j_\ell} \mu_{\ell,i}x^{n+1}\\
=\ & \sum_{\ell=1}^m\; \sum_{i=0}^r \sum_{\substack{\bm{j} \in J_{i+1}(\ell+1)}} \binom{i}{\bm{j}} \left( \sum_{n_1=0}^\infty n_1^{j_1}\va_{n_1}x^{n_1} \otimes \ldots \otimes \sum_{n_\ell=0}^\infty n_\ell^{j_\ell}\va_{n_\ell}x^{n_\ell} \right) \cdot \mu_{\ell,i}\ x\\
=\ & x \sum_{\ell=1}^m\; \sum_{i=0}^r \sum_{\substack{\bm{j} \in J_{i+1}(\ell+1)}} \binom{i}{\bm{j}}\; (\SDeriveN{\vf}{j_1}(x) \otimes \ldots \otimes \SDeriveN{\vf}{j_\ell}(x)) \cdot \mu_{\ell,i}
\end{align*}
This concludes the proof.
\end{proof}


\propHolonomicOneToRDA*

\begin{proof}
Assume that the sequence $(\va_n)_{n=0}^\infty$ has order $d$. Without loss of generality, we know that sequence $(\va_n)_{n=0}^\infty$ satisfies for all $n \geq 1$,
\[
\boldsymbol a_n = \frac{1}{n} \; \sum_{\ell=1}^m\; \sum_{\bm{j} \in J_n(\ell)} (\boldsymbol a_{j_1} \otimes \cdots \otimes \boldsymbol a_{j_k}) \cdot \mu_k(n)
\] 
where $\mu_k \in \K[x_0]^{d^k\times k}$ is of degree at most 1 for all $\ell \in \{1, \ldots, m\}$.
By \Cref{prop:system equations}, we know that $\vf(x)$ is solution of a system of differential equations of the form:
\[
\left\{
\begin{array}{l}
\SDerive{f_1} = x( Q_1(f_1,\ldots,f_d) + \sum_{i = 1}^d \SDerive{f_i }\cdot P_{1,i}(f_1,\ldots,f_d) )\\
\ldots\\
\SDerive{f_d} = x( Q_d(f_1,\ldots,f_d) + \sum_{i = 1}^d \SDerive{f_d}\cdot P_{d,i}(f_1,\ldots,f_d) )\\
\end{array}
\right.
\]
where all $Q_j$s and $P_{j,i}$s are polynomials in $\K[f_1,\ldots,f_d]$. We think of this system as a linear system with unknown $\SDerive{f_1}, \ldots, \SDerive{f_d}$ and with coefficients in the field $\K(f_1,\ldots,f_d)$. Consider the following matrix $M$:
\[
M=
\begin{bmatrix}
(1-xP_{1,1}) & -x P_{1,2} & \ldots & -x P_{1,d-1} & -x P_{1,d} \\
-x P_{2,1} & (1-xP_{2,2}) & \ldots & -x P_{2,d-1} & -x P_{2,d} \\
\vdots & \vdots & \vdots & \vdots & \vdots \\
-x P_{d,1} & \ldots  & \ldots & -x P_{d,d-1} & (1-x P_{d,d}) \\
\end{bmatrix}
\]
Thus, as $\det(M) = \sum_{\sigma \in S_d} \mathrm{sgn} \prod_{i=1}^d M_{i,\sigma(i)}$, we deduce that for all permutation $\sigma \in S_d$, if $\sigma$ is not the identity then $\prod_{i=1}^d M_{i,\sigma(i)} = x R$ for some polynomials $R \in \K[x,f_1,\ldots,f_d]$. Moreover, $\prod_{i=1}^d M_{i,i} = 1 + x R'$ for some polynomials $R' \in \K[x,f_1,\ldots,f_d]$. As such, $\det(M) = 1 + x(R' + R'') \in \K[f_1,\ldots,f_d]$ and $\det(0,f_1(0),\ldots,f_d(0)) \neq 0$. Therefore, we can compute $M^{-1} = \frac{1}{\det(M)}\mathrm{adj}(M)$ with $\mathrm{adj}(M) \in \K[x,f_1,\ldots,f_d]$ and we deduce that 
\[
\begin{bmatrix}
\SDerive{f_1}\\
\vdots\\
\SDerive{f_d}\\
\end{bmatrix}
= 
M^{-1}
\begin{bmatrix}
xQ_1\\
\vdots\\
xQ_d\\
\end{bmatrix}
= x \begin{bmatrix}
    U_1\\
    \vdots\\
    U_d\\
    \end{bmatrix}
\]
with $U_1,\ldots,U_d \in \K(x,f_1,\ldots,f_d)$ that are defined at $(0,f_1(0),\ldots,f_d(0))$. As $\SDerive{f}(x) = xf'(x)$ for all functions $f$, we deduce that $(x,f_1(x),\ldots,f_d(x))$ satisfies the system of differential equations:
\[
\left\{
\begin{array}{l}
t' = 1\\
f'_1 = U_1(t,f_1,\ldots,f_d)\\
\vdots \\
f'_d = U_d(t,f_1,\ldots,f_d)
\end{array}
\right.
\]
This concludes the proof that all power series in $\vf(x)$ are RDA.
\end{proof}

\thmainone*

\begin{proof}
The proof of \ref{enum:thm:MAIN1-RDA} $\Rightarrow$ \ref{enum:thm:MAIN1-ordinary} is given by \Cref{prop:RDA implies CDA,prop:CDA implies multi-holonomic degree 0}. The proof of \ref{enum:thm:MAIN1-RDA} $\Rightarrow$ \ref{enum:thm:MAIN1-labelled} is given by once again applying \Cref{prop:RDA implies CDA,prop:CDA implies multi-holonomic degree 0} to show that $f(x)$ is the ordinary generating function of a differential tree automaton in which the rational weight of every transition has the form $\frac{a}{x}$ for some $a \in \K$. We conclude by applying by applying \Cref{prop:automata to recurrence} followed by \Cref{prop:multi to automaton}. 
Similarly, the proof of \ref{enum:thm:MAIN1-labelled} $\Rightarrow$ \ref{enum:thm:MAIN1-ordinary} is given by applying \Cref{prop:automata to recurrence} followed by \Cref{prop:multi to automaton}. Finally, the proof of \ref{enum:thm:MAIN1-ordinary} $\Rightarrow$ \ref{enum:thm:MAIN1-RDA} is given by \Cref{prop:multi-holonomic degree 1 to RDA}.

\end{proof}

\section{Detailed proofs of Theorem~\ref{thm:MAIN2}}
\label{app:theorem2}
We detail in this section the proofs of the propositions needed to prove \Cref{thm:MAIN2}. Note that the proof of \Cref{prop:system equations} was already given in \Cref{app:theorem1}.


\propunicity*

\begin{proof}
We show by a simple inductive proof on $n$ that every $\overline{\va}_{n}$ are uniquely defined by $\overline{\va}_0$. The base case is trivial since $\overline{\va}_0$ is given. In the inductive step ($n>0$), we know from our inductive hypothesis that all $\overline{\va}_0,  \ldots, \overline{\va}_{n-1}$ are uniquely defined. 

Note that the $n^{th}$ coefficient of $\sum_{i=0}^r c_i \SDeriveN{\overline{\vf}}{i}(x) - c_0\overline{\va}_0 $ is $\sum_{i=0}^r c_i\, n^i\, \overline{\va}_n = A(n) \overline{\va}_{n}$ (since $n>0$). On the other hand, the $n^{th}$ coefficient of the right hand side of the \Cref{eq:system1} is the $(n-1)^{th}$ coefficient of
\[
    \sum_{\substack{\ell \in \{1, \ldots,m\}\\i \in \{0, \ldots,r\}\\\bm{j} \in J_{i+1}(\ell+1)}} \binom{i}{\bm{j}}  (\SDeriveN{\vf}{j_1} \otimes \ldots \otimes \SDeriveN{\vf}{j_\ell}) \cdot \mu_{\ell,i}
\]
which is determined uniquely by $\overline{\va}_0, \ldots, \overline{\va}_{n-1}$. Therefore, since $A(n)$ is never zero for all positive integer $n$ by hypothesis, $\overline{\va}_{n}$ is uniquely determined.

Finally, if $A(0) \neq 0$ then $c_0 \neq 0$. Notice by \Cref{prop:derive} that $\SDeriveN{\overline{\vf}}{i}(0) = 0$ for $i > 0$ and $\SDeriveN{\overline{\vf}}{0}(0) = \overline{\va}_0$. Moreover, as the right hand side of the \Cref{eq:system1} is guarded by $x$, we obtain that $c_0\,\overline{\va}_0 - c_0\,\va_0 = 0$ which implies $\overline{\va}_0 = \va_0$.
\end{proof}


\lemRecursion*

\begin{proof}
We follow ideas of~\cite{denef1984power}, which are in turn based on Hurwitz~\cite{Hurwitz1932}.

Let $k \in \mathbb N$ be the order of vanishing of the power series $\frac{\partial P}{\partial y^{(d)}} (x,f,\ldots,f^{(d)})$, that is, such that 
\begin{equation}
\frac{\partial P}{\partial y^{(d)}}\left(x,f,\ldots,f^{(d)}\right)=x^k \cdot \sum_{i=0}^\infty c_i x^i
\label{eq:vanishing}
\end{equation} 
with $c_0\neq 0$.

Differentiating $P$ to order $2k+2$ we have
 \begin{gather*}
        P^{(2k+2)} \; = \;  y^{(d+2k+2)}f_d + y^{(d+2k+1)}f_{d+1} + \cdots + y^{(d+k+2)}f_{d+k}+f_{d+k+1}
    \end{gather*}
    where  
    $f_{d},\ldots,f_{d+k+1} \in \K[x,y^{(\infty)}]$ are such that 
    $f_j$ has order at most $j$ and $f_d = \frac{\partial P}{\partial y^{(d)}}$. Continuing to differentiate, by Leibnitz's rule for all $n \in \mathbb N$ we have
    \begin{equation}
        P^{(2k+2+n)} = y^{(d+2k+2+n)}S_0(n) + y^{(d+2k+1+n)} S_1(n) + \cdots + y^{(d+k+2+n)}S_k(n)+ h_{d+k+1+n}
        \label{eq:Denef_eq}
    \end{equation}
where for all $j \in \{0,\ldots,k\}$ we have
\[
S_j(n):= \left(f_{d+j}+nf'_{d+j-1}+\cdots + \binom{n}{j}f_d^{(j)}\right) \, .
\]
and $h_{d+k+1+n} \in \K[x,y^{(\infty)}]$ has order at most $d+k+1+n$. We use the convention here that $\binom{n}{j} = 0$ when $n < j$.

Notice that $S_k(n)(0,f(0),\ldots,f^{(d+k)}(0))$ as a polynomial in $n$ has degree $k$ and the $k$-th coefficient is given by $\binom{n}{k}f_d^{(k)}(0,f(0),\ldots,f^{(d+k)}(0))$. Thus, since $f_d = \frac{\partial P}{\partial y^{(d)}}$ and by \Cref{eq:vanishing}, we deduce that $f_d^{(k)}(0,f(0),\ldots,f^{(d+k)}(0)) = k! c_0 \neq 0$. Thus, $S_k(n)(0,f(0),\ldots,f^{(d+k)}(0))$ as a polynomial in $n$ is not identically zero.

It implies that there exists a least element $r\in \{0,\ldots,k\}$ such that $S_r(n)(0,f(0),\cdots,f^{(d+r)}(0))$ is not identically zero as a polynomial  in $n$.  For this choice of $r$, setting $m:=2k+2$ and $s:=d+2k+1-r$, define:
\begin{align*}
    A(n) & := -S_r(n)(0,f(0),\ldots,f^{(d+r)}(0))\\
    Q_n & := P^{(m+n)} \mod y^{(s+n+1)}
\end{align*}
As $A(n)$ is not identically $0$, there exists $N \in \N$ such that for all $n \geq N$, $A(n) \neq 0$.
Since for all $j \in \{r+1, \ldots, k\}$, the order of $S_j(n)$ is at most $d+j$. Moreover, $h_{d+k+1+n}$ has order at most $d+k+1+n$. Therefore, by \Cref{eq:Denef_eq}, we deduce that 
\[
Q_n = \sum_{j=r+1}^k y^{(d+2k+2+n-j)}S_j(n) + h_{d+k+1+n}
\]
Finally, recall that $P(x,f(x),\ldots,f^{(d)}(x)) = 0$ and so $P^{(n+m)}(x,f(x),\ldots,f^{(d+m+n)}(x)) = 0$, and the polynomials $S_j(n)(0,f(0),\ldots,f^{(d+j)}(0))$ are identically zero for $j < r$. Thus, instantiating \Cref{eq:Denef_eq} at $0$, we obtain:
\[
Q_n(0,f(0),\ldots,f^{(n+s)}(0)) = f^{(n+s+1)}(0) \cdot A(n) \, .
\]
As for all $n \geq N$, $A(n) \neq 0$, the proof is complete.
\end{proof}


\begin{restatable}{proposition}{propMonomial}
Let 
    $M := y^{(i_1)} \cdots y^{(i_\ell)} \in  \K[y^{(\infty)}]$ be a monomial of order $d$ in $y$.  Then for all $n\in \mathbb N$
we have
\[ M^{(n)} = \sum_{\bm{j} \in J_{n+1}(\ell)} \binom{n}{\bm{j}} \, y^{(i_1+j_1)} \cdots y^{(i_\ell+j_\ell)} \,,  \]
\label{prop:monomial}
\end{restatable}
    
\begin{proof}
We prove the result by induction on $n$.
The base case is immediate.  The induction step is as follows.
We have
\[
\begin{array}{rcl}
    M^{(n+1)}  & = & 
    \displaystyle\Bigg( \, \sum_{\bm{j} \in J_{n+1}(\ell)} \binom{n}{\bm{j}} \, y^{(i_1+j_1)} \cdots y^{(i_\ell+j_\ell)}\Bigg)'
    \quad\text{(Induction hypothesis)}\\
    &=& \displaystyle \sum_{k=1}^\ell \,
    \sum_{\bm{j} \in J_{n+1}(\ell)} \binom{n}{\bm{j}} \, y^{(i_1+j_1)} \cdots
    y^{(i_k+j_k+1)} \cdots y^{(i_\ell+j_\ell)} \quad\text{(Leibnitz rule)}\\
    &=& \displaystyle \sum_{k=1}^\ell \,
    \sum_{\substack{j_1+\cdots+j_\ell=n+1\\j_1,\ldots,j_\ell \in \N, j_k > 0}} \binom{n}{j_1,\dots,j_k-1,\ldots,j_\ell} \, y^{(i_1+j_1)} \cdots y^{(i_\ell+j_\ell)}
    \end{array}
\] 
Taking the convention that $\binom{n}{j_1,\ldots,j_\ell} = 0$ when $j_i < 0$ for some $i$, we obtain:
\[
\begin{array}{rcl}
    M^{(n+1)} 
    &=& \displaystyle \sum_{k=1}^\ell \,
    \sum_{\bm{j} \in J_{n+2}(\ell)} \binom{n}{j_1,\dots,j_k-1,\ldots,j_\ell} \, y^{(i_1+j_1)} \cdots y^{(i_\ell+j_\ell)}\\
    &=& \displaystyle 
    \sum_{\bm{j} \in J_{n+2}(\ell)} \sum_{k=1}^\ell \, \binom{n}{j_1,\dots,j_k-1,\ldots,j_\ell} \, y^{(i_1+j_1)}\, \cdots y^{(i_\ell+j_\ell)}\\
    &=& \displaystyle 
    \sum_{\bm{j} \in J_{n+2}(\ell)} \binom{n+1}{\bm{j}} \, y^{(i_1+j_1)}\, \cdots y^{(i_\ell+j_\ell)}\quad\text{(Pascal's identity)}
\end{array}
\] 
This completes the proof.
\end{proof}


\begin{restatable}{proposition}{thDalgebraicToHolonomic}
\label{th:D-algebraic to multiholonomic}
Let $f \in \K[\![x]\!]$ be a differential algebraic power series. There exists a multi-holonomic sequence $(\va_n)_{n=0}^\infty$ such that $f(x) = \sum_{n=0} \va_{n,1} x^n$.
\end{restatable}

\begin{proof}
Let $f(x) = \sum_{n=0}^\infty u_n x^n$ be a differentially algebraic power series. Applying \Cref{prop:vanishing order,lem:RECUR1}, we deduce that there exist $P \in \K[x][y,\ldots,y^{(d)}]$ and $m,s,N \in \N$ and $A \in \K[x]$ such that for all $n \geq N$, 
\[
    f^{(n+s+1)}(0)  = \frac{1}{A(n)} \, Q_n(0,f(0),\ldots,f^{(n+s)}(0))
\]
where $Q_n = P^{(m+n)} \bmod y^{(n+s+1)}$.

By definition, $f^{k}(0) = k!\, u_k$ for all $k \in \N$. Let us denote by $I$ the set of tuples $(\alpha,i_1,\ldots,i_\ell)$ with $\alpha \in \K$
and $i_1 \leq \cdots \leq i_\ell \in \mathbb N$ such that $\alpha y^{(i_1)} \cdots y^{(i_\ell)}$ is a monomial of $P$. In other words $P = \sum_{(\alpha,i_1,\ldots,i_\ell) \in I} \alpha y^{(i_1)} \cdots y^{(i_\ell)}$. By \Cref{prop:monomial}, we deduce that:
\[
P^{(m+n)} = \sum_{(\alpha,i_1,\ldots,i_\ell) \in I}\; \sum_{\substack{j_1+\cdots+j_\ell=m+n\\j_1,\ldots,j_\ell \in \N}} \alpha  \binom{m+n}{j_1,\dots,j_\ell} \, y^{(i_1+j_1)} \cdots y^{(i_\ell+j_\ell)}
\]
As $Q_n = P^{(m+n)} \bmod y^{(n+s+1)}$, we obtain
\[
Q_n = \sum_{(\alpha,i_1,\ldots,i_\ell) \in I}\; \sum_{\substack{j_1+\cdots+j_\ell=m+n\\i_1+j_1, \ldots, i_\ell+j_\ell \leq n+s}} \alpha \binom{m+n}{j_1,\dots,j_\ell} \, y^{(i_1+j_1)} \cdots y^{(i_\ell+j_\ell)}
\]
Therefore, we obtain for all $n \geq N$, 
\[
u_{n+s+1} = \frac{(n+m)!}{(n+s+1)!\,A(n)} \sum_{(\alpha,i_1,\ldots,i_\ell) \in I} 
\sum_{\substack{j_1+\cdots+j_\ell=n+m\\i_1+j_1,\ldots,i_\ell+j_\ell\leq n+s}} 
\alpha \, (j_1+1)_{i_1} \cdots (j_\ell+1)_{i_\ell} \, u_{i_1+j_1} \cdots u_{i_\ell+j_\ell}
\]
Define the sequence $(\va_n)_{n=0}^\infty$ of order $N+s+m+1$ such that for all $n \in \N$, 
\[
\va_n = \begin{bmatrix} u_n & \ldots & u_{n+N+s+m}\end{bmatrix}\,.
\] 
Define the polynomials $B(n)$ and $C(n)$ such that if $s \geq m - 1$ then $B(n) = (n+N+s+m+1)\ldots(n+N+2m-1)A(n+N+m)$ and $C(n) = 1$ else $B(n) = A(n+N+m)$ and $C(n) = (n+N+2m)\ldots(n+N+s+m)$. Finally, for all $\boldsymbol{m} = (\alpha,i_1,\ldots,i_\ell) \in I$, define the polynomial $R_{\boldsymbol{m}}(x_1,\ldots,x_\ell) = \alpha(x_1+N+s+m-i_1+1)_{i_1}\ldots (x_\ell+N+s+m-i_\ell+1)_{i_\ell}$ and $M_{\boldsymbol{m}} = \ell(N+s+m) - (m + N + i_1 \ldots + i_\ell)$. 

Therefore, for all $n \geq 1$, 
\[
\va_{n+1,N+s+m+1} = \frac{C(n)}{B(n)} \sum_{\boldsymbol{m} = (\alpha,i_1,\ldots,i_\ell) \in I} 
\sum_{\substack{j'_1+\cdots+j'_\ell=n-M_{\boldsymbol{m}}\\j'_1,\ldots,j'_\ell \leq n-m}} 
R_{\boldsymbol{m}}(j'_1,\ldots,j'_\ell) \, \va_{j'_1,N+s+1} \cdots \va_{j'_\ell,N+s+1}
\]
Thus, we build set $I'$ by adding for each $\boldsymbol{m} = (\alpha,i_1,\ldots,i_\ell)$ a tuple $\overline{\boldsymbol{m}} = (-M_{\boldsymbol{m}},m,\ldots,m)$ and the transition matrix $\overline{\mu}_{\overline{\boldsymbol{m}}}(x_0,\ldots,x_\ell)$ such that for all $\boldsymbol{u}_1,\ldots, \boldsymbol{u}_\ell \in \K^{N+s+m+1}$, 
\begin{align*}
(\boldsymbol{u}_1 \otimes \ldots \otimes \boldsymbol{u}_\ell) \overline{\mu}_{\overline{\boldsymbol{m}}}(x_0,\ldots,x_\ell)_{N+s+m+1} &= \frac{C(x_0-1)R_{\boldsymbol{m}}(x_1,\ldots,x_\ell)}{B(x_0-1)}\boldsymbol{u}_{1,N+s+1} \ldots \boldsymbol{u}_{\ell,N+s+1}\\
(\boldsymbol{u}_1 \otimes \ldots \otimes \boldsymbol{u}_\ell) \overline{\mu}_{\overline{\boldsymbol{m}}}(x_0,\ldots,x_\ell)_{i} &= 0 \qquad \forall i \in \{1, \ldots, N+s+m\}
\end{align*}
To complete the generalised sequence, we add a final element $\overline{\boldsymbol{m}} = (0,0)$ and the associated transition matrix $\overline{\mu}_{\overline{\boldsymbol{m}}}(x_0,x_1)$ such that for all $\boldsymbol{u} \in \K^{N+s+m+1}$, 
\begin{align*}
\boldsymbol{u} \overline{\mu}_{\overline{\boldsymbol{m}}}(x_0,x_1)_{N+s+m+1} &= \boldsymbol{0}^d\\
\boldsymbol{u} \overline{\mu}_{\overline{\boldsymbol{m}}}(x_0,x_1)_i &= \boldsymbol{u}_{i+1} \qquad \forall i \in \{0,\ldots,M-1\}
\end{align*}
By construction, we have shown that $(\va_n)_{n=0}^\infty$ is a generalised multi-holonomic sequence with only negative shifts and such that for all $n \in \N$, $\va_{n,1} = u_n$. 
We conclude by applying \Cref{thm:MAIN3}.
\end{proof}


\thmaintwo*

\begin{proof}
From \Cref{prop:automata to recurrence}, we know that the ordinary generating function of any differential tree automaton $\A$ is the first component of the vector of power series $\vf_\A(x) = \sum_{n=0}^\infty \va_n x^n$. The latter induces a multi-holonomic sequence $(\va_n)_{n=0}^\infty$. Hence, from \Cref{thm:MAIN3}, we deduce that  $(\va_n)_{n=0}^\infty$ is recognised by a multi-holonomic sequence of arity 2 which concludes the proof of \ref{enum:thm:MAIN2-ordinary} $\Rightarrow$ \ref{enum:thm:MAIN2-recurrence}. 

Conversely, consider the sequence $(\vb_n)_{n=0}^\infty$ of elements in $\K^d$ satisfying for all $n \in \N$ the recurrence 
\begin{equation}
\label{eq:th3:recurrence-detailed}
\vb_{n} = \vb_{n-1} \cdot P(n) + \sum_{k=0}^{n-1} (\vb_k \otimes \vb_{n-k-1})\cdot Q(n)
\end{equation}
with $P$ and $Q$ having entries defined on positive integers. We define the alphabet $\Sigma = \{ \sigma_0, \sigma_1, \sigma_2 \}$ with $\sigma_0$, $\sigma_1$ and $\sigma_2$ having respectively arity $0$, $1$ and $2$. Finally, we define the differential tree automaton $\A = (d,\mu)$ over $\Sigma$ such that $\mu(\sigma_0) = \vb_0$, and
\[
    \mu(\sigma_1)(x) = P(x)\quad \text{and}\quad \mu(\sigma_2)(x) = Q(x)\,.
\]
Notice that $P(x) \in \KK(x)^{d\times d}$ and $Q(x) \in \KK(x)^{d^2\times d}$. From \Cref{prop:automata to recurrence}, writing $\vf_\A(x)$ as $\sum_{n=0}^\infty \va_n x^n$, we directly obtain that the sequence $(\va_n)_{n=0}^\infty$ satisfies the same recurrence relation as $(\vb_n)_{n=0}^\infty$, that is \Cref{eq:th3:recurrence-detailed}. Moreover, it also has the same initial value, that is $\va_0 = \vb_0$. These entail $(\vb_n)_{n=0}^\infty = (\va_n)_{n=0}^\infty$ which concludes the proof of \Cref{enum:thm:MAIN2-recurrence} $\Rightarrow$ \Cref{enum:thm:MAIN2-ordinary}.

The proof of \Cref{enum:thm:MAIN2-recurrence} $\Rightarrow$ \Cref{enum:thm:MAIN2-D-algebraic} is given by \Cref{th:differentially algebraic}. Finally, the proof of \Cref{enum:thm:MAIN2-D-algebraic} $\Rightarrow$ \Cref{enum:thm:MAIN2-recurrence} is given by \Cref{th:D-algebraic to multiholonomic}.
\end{proof}

\section{Algebraic operations on differential tree automata}
\label{sec:operations}

This appendix is dedicated to the proof of the following theorem.


\begin{theorem}[Closure properties]

\label{prop:closure}
The class of formal tree series recognisable by differential tree automata is closed under addition, scalar  multiplication  and product.

The class of generating functions of  differential tree automata is closed under addition, scalar  multiplication, product, derivative, integral, inverse, forward shift and backward shift.
\end{theorem}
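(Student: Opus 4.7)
The plan is to prove each closure property by an explicit effective construction, organised into four groups.

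\emph{Formal tree series.} For pointwise sum of $\A_i = (d_i,\mu_i)$ over the same alphabet, I take the direct-sum automaton of dimension $d_1+d_2$ with block-diagonal transitions, so $\widetilde{\mu}(t) = (\widetilde{\mu}_1(t),\widetilde{\mu}_2(t))$, and apply a change of basis $\mu \mapsto (T^{-1})^{\otimes k}\mu\, T$, where $T$ is the invertible matrix with first column $e_1+e_{d_1+1}$ and identity elsewhere; this makes the output coordinate $\sem{\A_1}{t}+\sem{\A_2}{t}$. Scalar product by $c\in\Q$ is analogous with $T = \mathrm{diag}(c,1,\ldots,1)$. For pointwise product, I use the Kronecker tensor construction: dimension $d_1d_2$, $\mu(\sigma) = \mu_1(\sigma)\otimes\mu_2(\sigma)$ for $\sigma\in\Sigma_0$, and $\mu(\sigma) = P_k^\top(\mu_1(\sigma)\otimes\mu_2(\sigma))$ for $\sigma\in\Sigma_k$, where $P_k$ is the shuffle permutation realising $(u_1\otimes\cdots\otimes u_k)\otimes(v_1\otimes\cdots\otimes v_k)\, P_k = (u_1\otimes v_1)\otimes\cdots\otimes(u_k\otimes v_k)$. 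A tree induction gives $\widetilde{\mu}(t) = \widetilde{\mu}_1(t)\otimes\widetilde{\mu}_2(t)$, so $\sem{\A}{t} = \sem{\A_1}{t}\sem{\A_2}{t}$.

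\emph{Sum, scalar, and Cauchy product of generating functions.} The first two reduce to the tree-series constructions when $\A_1,\A_2$ share an alphabet; for different alphabets, embed into $\Sigma_1\sqcup\Sigma_2$ with block constraints that zero out any tree mixing symbols from both alphabets. For the Cauchy product, I introduce a fresh binary symbol $*$ over $\Sigma_1\sqcup\Sigma_2\sqcup\{*\}$ and build an automaton whose only non-zero trees are $*(t_1,t_2)$ with $t_i\in T_{\Sigma_i}$, weighted by $\sem{\A_1}{t_1}\sem{\A_2}{t_2}$. Since $|*(t_1,t_2)| = |t_1|+|t_2|+1$, this yields generating function $xf_{\A_1}(x)f_{\A_2}(x)$; a single backward shift completes the construction.

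\emph{Forward shift, integral, derivative, inverse.} Forward shift $f\mapsto xf$ uses a fresh unary symbol $\tau$ and dimension $d+1$ with a leading output coordinate: pure $\Sigma$-trees store $\widetilde{\mu}(t)$ in the inner block with output coordinate zero; applying $\tau$ moves $\widetilde{\mu}(t)_1$ into the output and zeros the inner block, so subsequent applications of $\tau$ or any other symbol contribute zero. Integration is obtained by the same construction with $\mu'(\tau)$ multiplied by $1/x_0$, a function in $\QQ(x_0)$ since $x_0$ has no positive integer root. For the derivative, I first realise $\Theta f_\A = xf'_\A = \sum n a_n x^n$ via a dimension-$2d$ auxiliary automaton tracking the pair $(|t|\widetilde{\mu}(t),\widetilde{\mu}(t))$ (compositional by $|\sigma(t_1,\ldots,t_k)| = 1+\sum|t_i|$ and bilinearity of Kronecker product); a backward shift then yields $f'_\A$. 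For the multiplicative inverse, I use the round-trip through rational dynamical systems: \Cref{th:differentially algebraic} and \Cref{prop:rds} yield an RDS $y_1' = Q_1,\ldots,y_k' = Q_k$ with $y_1 = f_\A$; adjoining $z' = -Q_1z^2$ (valid when $f_\A(0)\neq 0$) gives an RDS whose first component is $1/f_\A$, and \Cref{prop:diff-algebraic rational into automaton} converts it back to an automaton.

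\emph{Main obstacle.} The critical remaining operation is the backward shift, on which both the Cauchy product and the derivative depend. The natural ``strip the root'' intuition is not compositional for alphabets with internal symbols of arity at least two, since a single tree cannot be decomposed into the multiple subtrees of a hypothetical parent. My plan is to proceed via the D-algebraic round-trip: by \Cref{th:differentially algebraic} $f_\A$ is D-algebraic, and substituting $f = f(0) + xg$, $f' = g + xg'$, and so on, into a minimal differential polynomial for $f_\A$ yields a differential polynomial for $g = (f_\A - f_\A(0))/x$, so $g$ is D-algebraic and by \Cref{prop:rds} a solution of some RDS. The subtle point is that this RDS need not be \emph{rationally dynamically algebraic}, because the minimality argument in \Cref{prop:rds} only guarantees that a denominator polynomial $S$ is not identically zero along $g$, not that $S(g(0),\ldots,g^{(n)}(0))\neq 0$. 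Overcoming this requires a refinement that treats the possible vanishing of $S$ at zero by case analysis, for instance by reordering the derivatives used in the state vector or by a further change of variables in the RDS; I expect this to be the technical heart of the closure proof.
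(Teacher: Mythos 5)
Your constructions for the tree-series closures, for sum and scalar multiple of generating functions, for the Cauchy product modulo a shift (fresh binary symbol giving $x f_{\A_1} f_{\A_2}$), for the forward shift, for $\frac{1}{x}\int_0^x f$, and for $\Theta f_\A = x f'_\A$ all match the paper's appendix in substance. The genuine gap is exactly where you locate it: the backward shift, on which your Cauchy product and derivative (and, in the paper, also the inverse) depend, is left unproved. Your proposed route---extract a differential polynomial for $f_\A$ via \Cref{th:differentially algebraic}, substitute $f = f(0) + xg$, and convert the resulting equation for $g$ back to an automaton via \Cref{prop:rds} and \Cref{prop:diff-algebraic rational into automaton}---does not close, because \Cref{prop:rds} only produces a rational dynamical system, not an RDA one: the separant $S$ may vanish at $(g(0),\ldots,g^{(n)}(0))$, and \Cref{prop:diff-algebraic rational into automaton} requires the denominators to be defined at the initial point. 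No reordering of derivatives or generic change of variables is known to repair this, and the round trip would additionally require an \emph{effective} differential polynomial for $f_\A$, whereas \Cref{th:differentially algebraic} is obtained via the Denef--Lipshitz approximation theorem. Your inverse construction goes through the same round trip and inherits the same hole; the paper instead builds the inverse directly from the convolution recurrence $b_0 = 1/a_0$, $b_n = -\frac{1}{a_0}\sum_{i=0}^{n-1} a_{i+1} b_{n-1-i}$ using a fresh binary symbol, so it never touches the RDS machinery.

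The idea you are missing is the paper's direct, purely combinatorial construction of the backward shift (\Cref{prop:closure gen devise}). After normalising to an arity-distinct alphabet $\Sigma = \{g_k\}_{k=0}^r$, one enlarges it with fresh symbols $h_{k,i}$ of arity $i$ for each $k \ge 1$ and $0 \le i \le k$, and defines an injective map $\Gamma$ from trees of positive size to trees over the new alphabet: for $t = g_k(t_1,\ldots,t_k)$, let $i$ be the index of the last child that is not the nullary symbol $g_0$, absorb the trailing copies of $g_0$ into the symbol by setting $\Gamma(t) = h_{k,i}(t_1,\ldots,t_{i-1},\Gamma(t_i))$, with base case $\Gamma(g_k(g_0,\ldots,g_0)) = h_{k,0}$. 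This decreases the size by exactly one, and the weights of the $h_{k,i}$ are chosen (padding with $\mu(g_0)$ via the mixed-product property of the Kronecker product, and shifting the size arguments $x_0 \mapsto x_0+1$, $x_i \mapsto x_i + 1$) so that a doubled-dimension automaton stores $\widetilde{\mu}(t')$ on the trees $\Gamma(t')$ and kills everything else. Injectivity of $\Gamma$ then gives $\sum_{\|u\|=n}\sem{\A'}{u} = \sum_{\|t\|=n+1}\sem{\A}{t}$, i.e.\ the backward shift, entirely inside the automaton model and with no appeal to the differential-algebraic machinery.
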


For the class of formal tree series recognisable by  differential tree automata, the closure properties are given by the following propositions:
\begin{itemize}
\item addition: \Cref{lem:closure sem addition}
\item scalar  multiplication: \Cref{lem:closure sem scalar product}
\item product: \Cref{lem:closure sem product}
\end{itemize}

For the class of generating functions of differential tree automata, the closure properties are given by the following propositions:
\begin{itemize}
\item addition and scalar  multiplication: \Cref{lem:closure gen addition scalar}
\item product: \Cref{cor:closure gen product complete}
\item derivation and integral: \Cref{cor:closure gen derive and integral}
\item inverse: \Cref{prop:closure gen inverse}
\item backward shift: \Cref{lem:closure gen product}
\item forward shift: \Cref{prop:closure gen devise}
\end{itemize}


\subsection{The class of formal tree series recognisable by  differential tree automata}

In the proofs of this section, we denote by $\bm{1} = (1, \ldots, 1)$ the vector of 1. 


\begin{proposition}
\label{lem:vector e1}
Let $\A = (d,\mu)$ be a differential tree automaton over $\Sigma$. Let $\vbeta(x) \in \Q(x)^{d\times 1}$ defined on all non-negative integers. There exists an automaton $\A'$ over $\Sigma$ such that for all $t \in T_\Sigma$, $\sem{\A'}{t} = \hmu(t)\vbeta(\size{t})$.
\end{proposition}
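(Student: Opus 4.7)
The plan is to construct $\A'=(d+1,\mu')$ whose first coordinate holds the scalar $\hmu(t)\vbeta(\size{t})$ while the remaining $d$ coordinates reproduce $\hmu(t)$ verbatim; reading off the first coordinate will then yield $\sem{\A'}{t}=\hmu(t)\vbeta(\size{t})$. Concretely, I aim to prove by induction on $\size{t}$ the invariant
\[
\hmu'(t) \;=\; \begin{bmatrix} \hmu(t)\,\vbeta(\size{t}) & \hmu(t)_1 & \cdots & \hmu(t)_d\end{bmatrix}.
\]
For leaves, the definition $\mu'(\sigma):=\begin{bmatrix}\mu(\sigma)\vbeta(0) & \mu(\sigma)\end{bmatrix}$ directly delivers the invariant since $\size{\sigma}=0$ and $\hmu(\sigma)=\mu(\sigma)$.

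For $\sigma\in\Sigma_k$ with $k\geq 1$, let $P_0\in\Q^{(d+1)\times d}$ be the projection matrix that drops the first coordinate (so $\hmu'(t_i)P_0$ extracts the last $d$ entries of $\hmu'(t_i)$). By the inductive hypothesis $\hmu'(t_i)P_0=\hmu(t_i)$, and the mixed-product property of the Kronecker product then gives
\[
(\hmu'(t_1)\otimes\cdots\otimes\hmu'(t_k))\cdot P_0^{\otimes k} \;=\; \hmu(t_1)\otimes\cdots\otimes\hmu(t_k).
\]
So I define
\[
\mu'(\sigma)(x_0,\ldots,x_k)\;:=\;P_0^{\otimes k}\cdot\begin{bmatrix}\mu(\sigma)(x_0,\ldots,x_k)\,\vbeta(x_0) & \mu(\sigma)(x_0,\ldots,x_k)\end{bmatrix},
\]
a $(d+1)^k\times(d+1)$ matrix. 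Substituting this into the recursive defining equation of $\hmu'$ and using the children's invariants gives $\hmu'(t)=\begin{bmatrix}\hmu(t)\vbeta(\size{t}) & \hmu(t)\end{bmatrix}$, so the invariant propagates and the induction closes.

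The only genuine technical point, and hence the main obstacle, is that $\mu'(\sigma)$ must have its entries in the restricted subring $\QQ(x_0,\ldots,x_k)$ rather than merely in $\Q(x_0,\ldots,x_k)$. The $\mu(\sigma)$-block is already of this form by hypothesis. The new denominators come from $\vbeta(x_0)$: a single polynomial $R(x_0)\in\Q[x_0]$ works as a common denominator of $\vbeta_1,\ldots,\vbeta_d$, and by the assumption that $\vbeta$ is defined at every non-negative integer, $R$ has no non-negative (a fortiori no positive) integer root. Absorbing $R(x_0)$ into the pre-existing $x_0$-denominator of $\mu(\sigma)$ keeps the combined $x_0$-denominator free of positive integer roots, while leaving the $x_i$-denominators for $i\geq 1$ untouched. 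Hence $\mu'$ is a bona fide weight function and $\A'$ is a holonomic tree automaton with $\sem{\A'}{t}=\hmu(t)\vbeta(\size{t})$ for all $t\in T_\Sigma$.
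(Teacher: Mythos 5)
Your construction is correct and is essentially the paper's own proof: the same augmented automaton of dimension $d+1$ with the invariant $\hmu'(t)=\begin{bmatrix}\hmu(t)\vbeta(\size{t}) & \hmu(t)\end{bmatrix}$, where your matrix $P_0^{\otimes k}\cdot\begin{bmatrix}\mu(\sigma)\vbeta(x_0) & \mu(\sigma)\end{bmatrix}$ is exactly the paper's entrywise definition written compactly, and the mixed-product step replaces the paper's explicit index bookkeeping. The remark on absorbing the denominator of $\vbeta$ into the $x_0$-denominator to stay inside $\QQ$ is a point the paper leaves implicit, and you handle it correctly.
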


\begin{proof}
We build the   differential tree automaton by incrementing the dimension of $\A$ by 1 and by building the weight function $\mu'$ such that for all $t \in T_\Sigma$, $\hmu'(t) = \begin{bmatrix} \hmu(t)\vbeta(\size{t}) & \hmu(t) \end{bmatrix}$. Formally, for all $a \in \Sigma_0$, we define $\mu'(\sigma) = \begin{bmatrix} \mu(\sigma)\beta(0) & \mu(\sigma)\end{bmatrix}$; and for all $\sigma \in \Sigma_k$ with $k>0$, $\mu'(\sigma) \in \K[x]^{(d+1)^k \times (d+1)}$ and, denoting $M(x) = \mu(\sigma)(x) \vbeta(x)$, we have:
\begin{itemize}
\item for all $j \in \{2, \ldots, d+1\}$, for all $\boldsymbol{i} = (i_1,\ldots,i_k) \in \{1,\ldots,d+1\}^k$, 
\[
\mu'(\sigma)_{\boldsymbol{i},j} = 
\left\{\begin{array}{lr}
    \mu(\sigma)_{\bm{i} - \bm{1},j-1} &  \text{if }\forall \ell,  i_\ell \in \{2,\ldots,d+1\}\\
    0 & \text{otherwise }
\end{array}
\right.
\]
\item for all $ \boldsymbol{i}=(i_1,\ldots,i_k) \in \{2,\ldots,d+1\}^k$, 
\[
\mu'(\sigma)_{\boldsymbol{i},1} = 
\left\{\begin{array}{lr}
    M_{\bm{i} - \bm{1},1} & \text{if }\forall \ell, i_\ell \in \{2,\ldots,d+1\}\\
    0 & \text{otherwise}
\end{array}
\right.
\]
\end{itemize}

\medskip

Take $\A' = (d',\mu')$ with $d' = d+1$, we prove by induction on the structure of trees that for all $t \in T_\Sigma$, $\hmu'(t) = \begin{bmatrix} \hmu(t)\vbeta(\size{t}) & \hmu(t) \end{bmatrix}$. In the base case, $t$ is a leaf $\sigma \in \Sigma_0$, meaning that $\hmu'(\sigma) = \mu'(\sigma) = \begin{bmatrix} \mu(\sigma)\beta(0) & \mu(\sigma)\end{bmatrix} = \begin{bmatrix} \hmu(\sigma)\beta(0) & \hmu(\sigma)\end{bmatrix}$. 

\medskip

In the inductive step, $t = \sigma(t_1,\ldots,t_k)$ for some $\sigma \in \Sigma_k$ and $t_1,\ldots,t_k \in T_\Sigma$. Applying our inductive hypothesis on $t_1,\ldots,t_k$, we have for all $i \in \{1, \ldots, k\}$, $\hmu'(t_i) = \begin{bmatrix} \hmu(t_i)\vbeta(\size{t_i}) & \hmu(t_i) \end{bmatrix}$. Hence, 
\[
\hmu'(t) = (\begin{bmatrix} \hmu(t_1)\vbeta(\size{t_1}) & \hmu(t_1) \end{bmatrix} \otimes \ldots \otimes \begin{bmatrix} \hmu(t_k)\vbeta(\size{t_k}) & \hmu(t_k) \end{bmatrix}) \mu'(\sigma)(\size{t})
\]
By construction, for all $j \in \{1, \ldots, d+1\}$, for all $ \boldsymbol{i}=(i_1,\ldots,i_k) \in \{1,\ldots,d+1\}^k$, if $i_\ell = 1$ for some $\ell$ then $\mu'(\sigma)_{\boldsymbol{i},j}(x) = 0$. Thus, for all $j \in \{2, \ldots, d+1\}$, \begin{align*}
\mu'(t)_j &= \sum_{\boldsymbol{i} \in \{2,\ldots,d+1\}^k} \left(\mu'(t_1)_{i_1} \ldots \mu'(t_k)_{i_k} \right) \mu'(\sigma)_{\boldsymbol{i},j}(\size{t})\\
&= \sum_{\boldsymbol{i} \in \{2,\ldots,d+1\}^k} \left( \mu(t_1)_{i_1 - 1} \ldots \mu(t_k)_{i_k-1}\right) \mu(\sigma)_{\bm{i} - \bm{1},j-1}(\size{t})\\
&= \sum_{\boldsymbol{i} \in \{1,\ldots,d\}^k} \left( \mu(t_1)_{i_1 } \ldots \mu(t_k)_{i_k}\right) \mu(\sigma)_{\boldsymbol{i},j-1}(\size{t})\\
&= \left( (\hmu(t_1) \otimes \ldots \otimes \hmu(t_k) ) \mu(\sigma)(\size{t}) \right)_{j-1}\\
&= \mu(t)_{j-1}\\
\end{align*}
Similarly, we also have:
\begin{align*}
\mu'(t)_1 &= \sum_{\boldsymbol{i} \in \{2,\ldots,d+1\}^k}
    \left( \mu(t_1)_{i_1 - 1} \ldots \mu(t_k)_{i_k-1}\right)     M_{\bm{i} - \bm{1},1}(\size{t})\\
&= \sum_{\boldsymbol{i} \in \{1,\ldots,d\}} \left( \mu(t_1)_{i_1 } \ldots \mu(t_k)_{i_k}\right) M_{\boldsymbol{i},1}(\size{t})\\
&= (\hmu(t_1) \otimes \ldots \otimes \hmu(t_k) ) \mu(\sigma)(\size{t}) \vbeta(\size{t})\\
&= \hmu'(t) \vbeta(\size{t})
\end{align*}
This concludes the proof of $\hmu'(t) = \begin{bmatrix} \hmu(t)\vbeta(\size{t}) & \hmu(t) \end{bmatrix}$ for all $t \in T_\Sigma$. As $\sem{\A'}{t} = \mu'(t)_1$, we conclude that $\sem{\A'}{t} = \hmu(t)\vbeta(\size{t}) = \sem{\A}{t}$ for all $t \in T_\Sigma$.
\end{proof}


\begin{proposition}
\label{lem:closure sem addition}
Let $\A_1$ and $\A_2$ be two differential tree automata over $\Sigma$. There exists a differential tree automaton $\A$ over $\Sigma$ such that for all $t \in T_\Sigma$, $\sem{\A}{t} = \sem{\A_1}{t} + \sem{\A_2}{t}$.
\end{proposition}

\begin{proof}
Assume that $\A_1 = (d_1,\mu_1)$ and $\A_2 = (d_2,\mu_2)$.
We build the automaton $\A' = (d,\mu)$ of dimension $d = d_1 + d_2$ such that
\[
\text{for all }t \in T_\Sigma, \hmu(t) = \begin{bmatrix} \hmu_1(t) & \hmu_2(t)\end{bmatrix}
\]
and we will then conclude by applying \Cref{lem:vector e1} with the automaton $\A'$ and the vector $\vbeta(x)$ defined as follows to obtain the desired automaton $\A$.
\[
\vbeta(x) = 
\begin{bmatrix} 
    1 \\ 
    \vzero_{(d_1-1)\times 1} \\
    1 \\ 
    \vzero_{(d_2-1)\times 1} \\
\end{bmatrix}
\]

The definition of $\hmu$ is fairly straightforward: For all $\sigma \in \Sigma_0$, $\mu(\sigma) = \begin{bmatrix} \mu_1(\sigma) & \mu_2(\sigma)\end{bmatrix}$. Moreover, for all $\sigma \in \Sigma_k$ with $k>0$, the matrix $\mu(\sigma)$ is in $\Q[x]^{d^k \times d}$ such that for all $\bm{i}= (i_1,\ldots,i_k) \in \{1, \ldots, d\}^k$,
\begin{itemize}
\item for all $j \in \{1, \ldots, d_1\}$,
\[
\mu(\sigma)_{\bm{i},j} = 
\left\{\begin{array}{lr}
    \mu_1(\sigma)_{\bm{i},j} & \text{if }\forall \ell, i_\ell \in \{ 1, \ldots, d_1 \}\\ 
    0 & \text{otherwise}
\end{array}
\right.
\]
\item for all $j \in \{ d_1+1, \ldots d\}$,
\[
\mu(\sigma)_{\bm{i},j} = 
\left\{\begin{array}{lr}
    \mu_2(\sigma)_{\bm{i} - d_1\bm{1},j-d_1} & \text{if }\forall \ell. i_\ell \in \{ d_1+1, \ldots d \}\\ 
    0 & \text{otherwise}
\end{array}
\right.
\]
\end{itemize}
The proof of $\hmu(t) = \begin{bmatrix} \hmu_1(t) & \hmu_2(t)\end{bmatrix}$ for all $t \in T_\Sigma$ is done by induction on the structure of the tree $t$ and follows by construction of $\mu$. The base case, that is $t = a$ with $a \in \Sigma_0$, is trivial by construction of $\mu(a)$. In the inductive step, $t = \sigma(t_1,\ldots,t_k)$ for some $\sigma \in \Sigma_k$ and $t_1,\ldots,t_k \in T_\Sigma$. By construction, for all $j \in \{1,\ldots,d_1\}$, for all $\bm{i}=(i_1,\ldots,i_k) \in \{1,\ldots,d\}^k$, $\mu(\sigma)_{\bm{i},j} = 0$ when there exists $\ell$ such that $i_\ell \not\in \{1,\ldots,d_1\}$. Thus, by applying our inductive hypothesis on $t_1,\ldots,t_k$, we obtain that for all $j \in \{1, \ldots, d_1\}$, 
\begin{align*}
\mu(t)_j &= \sum_{\bm{i} \in \{1,\ldots,d\}^k} 
\mu(t_1)_{i_1} \ldots  \mu(t_k)_{i_k}
\mu(\sigma)_{\bm{i},j}(\size{t})\\
&= \sum_{\bm{i} \in \{1,\ldots,d_1\}^k} \mu(t_1)_{i_1} \ldots  \mu(t_k)_{i_k} \mu(\sigma)_{\bm{i},j}(\size{t})\\
&= \sum_{\bm{i}\in \{1,\ldots,d_1\}^k} \mu_1(t_1)_{i_1} \ldots  \mu_1(t_k)_{i_k}\mu_1(\sigma)_{\bm{i},j}(\size{t})\\
&= \mu_1(t)_{j}
\end{align*}
Similarly, we have by construction that for all $j \in \{d_1+1,\ldots,d\}$, for all $\bm{i} \in \{1,\ldots,d\}^k$, $\mu(\sigma)_{\bm{i},j} = 0$ when there exists $\ell$ such that $i_\ell \not\in \{d_1 + 1,\ldots,d\}$. Hence, by applying our inductive hypothesis on $t_1,\ldots,t_k$, we obtain that for all $j \in \{d_1+1, \ldots, d\}$, 
\begin{align*}
\mu(t)_j &= \sum_{\bm{i} \in \{1, \ldots, d \}^k} \mu(t_1)_{i_1} \ldots  \mu(t_k)_{i_k} \mu(\sigma)_{\bm{i},j}(\size{t})\\
&= \sum_{\bm{i} \in \{d_1+1,\ldots,d\}^k} \mu_2(t_1)_{i_1-d_1} \ldots  \mu_2(t_k)_{i_k-d_1} \mu_2(\sigma)_{\bm{i} - d_1 \bm{1},j-d_1}(\size{t})\\
&= \sum_{\bm{i} \in \{1,\ldots,d_2\}^k} \mu_2(t_1)_{i_1}\ldots \mu_2(t_k)_{i_k} \mu_2(\sigma)_{\bm{i},j-n}(\size{t})\\
&= \mu_2(t)_{j-d_1}\qedhere
\end{align*}
\end{proof}


\begin{proposition}
\label{lem:closure sem scalar product}
Let $\A$ be a differential tree automaton over $\Sigma$. Let $\alpha \in \Q$. There exists a differential tree automaton $\A'$ over $\Sigma$ such that for all $t \in T_\Sigma$, $\sem{\A'}{t} = \alpha \sem{\A}{t}$.
\end{proposition}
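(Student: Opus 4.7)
The plan is to obtain this closure property as an immediate consequence of \Cref{lem:vector e1}. Recall that $\sem{\A}{t} = \hmu(t)_1$ is the first coordinate of $\hmu(t)$, so scaling the semantics by $\alpha$ amounts to pairing $\hmu(t)$ with the column vector $\alpha\,\ve_1$, where $\ve_1 = (1,0,\ldots,0)^T \in \Q^{d\times 1}$ denotes the first standard basis vector.

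Concretely, I would instantiate \Cref{lem:vector e1} with the constant vector $\vbeta(x) := \alpha\,\ve_1$. Viewed as an element of $\Q(x)^{d \times 1}$, this vector has no denominators, so it is defined at every non-negative integer and the hypothesis of the lemma holds trivially. Its conclusion then supplies a holonomic tree automaton $\A'$ over $\Sigma$ such that
\[
\sem{\A'}{t} \;=\; \hmu(t)\, \vbeta(\size{t}) \;=\; \alpha\, \hmu(t)_1 \;=\; \alpha\, \sem{\A}{t}
\]
for every $t \in T_\Sigma$, which is exactly the desired identity.

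There is essentially no obstacle: the non-trivial construction---namely, incrementing the dimension of $\A$ by one and wiring a fresh coordinate so as to accumulate the inner product $\hmu(t)\,\vbeta(\size{t})$ through the Kronecker-product recursion---has already been carried out inside the proof of \Cref{lem:vector e1}. The only point to check is that constant vectors are admissible parameters for that lemma, which is immediate since $\Q \subset \Q(x)$ and constant rational functions have no poles. An alternative, more direct route would be to keep the dimension equal to $d$ and scale by $\alpha$ only the first column of $\mu(a)$ for the nullary symbols $a \in \Sigma_0$; however, this local modification interacts awkwardly with the Kronecker products inside internal nodes and does not straightforwardly give the scaling on the first coordinate alone, so the reduction to \Cref{lem:vector e1} is cleaner.
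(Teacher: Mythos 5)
Your proposal is correct and is exactly the paper's own argument: the paper likewise proves this proposition by a one-line application of \Cref{lem:vector e1} with $\vbeta(x) = \alpha\,\ve_1$. Your additional remark that constant vectors are admissible parameters (having no poles at non-negative integers) is a valid, if routine, sanity check.
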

\begin{proof}
It suffices to apply \Cref{lem:vector e1} on $\A$ with the vector $\vbeta(x) = \alpha\,\ve_1$, where $\ve_1$ is the canonical vector.
\end{proof}


\begin{proposition}
\label{lem:closure sem product}
Let $\A_1$ and $\A_2$ be two differential tree automata over $\Sigma$. There exists a differential tree automaton $\A$ over $\Sigma$ such that for all $t \in T_\Sigma$, $\sem{\A}{t} = \sem{\A_1}{t} \cdot  \sem{\A_2}{t}$.
\end{proposition}

\begin{proof}
We take $\A_1 = (d_1,\mu_1)$ and $\A_2 = (d_2,\mu_2)$. We build the automaton $\A = (d,\mu)$ of dimension $d = d_1d_2$ such that for all $t \in T_\Sigma$, 
\[
\hmu(t) = 
\begin{bmatrix} 
\hmu_1(t)_{1}\hmu_2(t)_{1} & \ldots & \hmu_1(t)_{1}\hmu_2(t)_{d_2} & \hmu_1(t)_{2}\hmu_2(t)_{1} & \ldots & \hmu_1(t)_{d_1}\hmu_2(t)_{d_2}
\end{bmatrix}
\]
For that purpose, we construct the function $\mu$ as follows. For all $a \in \Sigma_0$, for all $i \in \{1, \ldots, d_1\}$, for all $j \in \{1, \ldots, d_2\}$, $\mu(a)_{(i,j)} = \mu_1(a)_{i} \mu_2(a)_{j}$. For all $\sigma \in \Sigma_k$ with $k > 0$, for all $i'\in \{1, \ldots, d_1\}$,  $\bm{i} \in \{1, \ldots, d_1\}^k$ for all $j' \in \{1, \ldots, d_2\}$, $\bm{j} \in \{1, \ldots, d_2\}^k$
\[
\mu(\sigma)_{((i_1,j_1),\ldots,(i_k,j_k)),(i',j')} = \mu_1(\sigma)_{\bm{i},i'} \cdot \mu_2(\sigma)_{\bm{j},j'}
\]
By definition, $\hmu(\sigma(t_1,\ldots,t_k)) = (\hmu(t_1) \otimes \ldots \otimes \hmu(t_k)) \mu(\sigma)(\size{t})$. Therefore, applying our inductive hypothesis on each $t_1,\ldots, t_k$, we obtain that for all $i' \in \{1, \ldots, d_1\}$, for all $j' \in \{1, \ldots, d_2\}$, 
\begin{align*}
\hmu&(t)_{(i',j')} = \sum_{\substack{\bm{i} \in \{1,\ldots,d_1\}^k\\\bm{j} \in \{1,\ldots,d_2\}^k}} \mu(t_1)_{i_1,j_1} \ldots \mu(t_k)_{i_k,j_k} \cdot \mu(\sigma)(\size{t})_{((i_1,j_1),\ldots,(i_k,j_k)),(i',j')}\\
&= \sum_{\substack{\bm{i} \in \{1,\ldots,d_1\}^k\\\bm{j} \in \{1,\ldots,d_2\}^k}} 
\mu_1(t_1)_{i_1,j_1}\mu_2(t_1)_{i_1,j_1} \ldots \mu_1(t_k)_{i_k,j_k}\mu_2(t_k)_{i_k,j_k}
 \cdot \mu_1(\sigma)(\size{t})_{\bm{i},i'} \cdot \mu_2(\sigma)(\size{t})_{\bm{j},j'}\\
&= \sum_{\substack{\bm{i} \in \{1,\ldots,d_1\}^k\\\bm{j} \in \{1,\ldots,d_2\}^k}} 
\left(\mu_1(t_1)_{i_1,j_1} \ldots \mu_1(t_k)_{i_k,j_k}
 \cdot \mu_1(\sigma)(\size{t})_{\bm{i},i'}\right) \cdot \left(\mu_2(t_1)_{i_1,j_1}\ldots\mu_2(t_k)_{i_k,j_k}  \mu_2(\sigma)(\size{t})_{\bm{j},j'}\right)\\
&= \left((\hmu_1(t_1)\otimes \ldots \otimes \hmu_1(t_k)) \mu_1(\sigma)(\size{t})\right)_{i} \cdot \left((\hmu_2(t_1)\otimes \ldots \otimes \hmu_2(t_k)) \mu_2(\sigma)(\size{t})\right)_{j}\\
&=\mu_1(t)_{i} \cdot \mu_2(t)_{j}
\end{align*}
We conclude by noticing that $\sem{\A}{t} = \mu(t)_1 = \mu_1(t)_{1}\mu_2(t)_{1} = \sem{\A_1}{t} \cdot \sem{\A_2}{t}$.
\end{proof}

\subsection{Operations on the generating functions}

We start by showing that when considering generating functions, we can always restrict ourselves to alphabets $\Sigma$ where two different function symbols must have different arity. In other words, there can only be one nullary function symbol, one unary function symbol, one binary function symbol, etc. We say in this case that $\Sigma$ is \emph{arity distinct}.

\begin{proposition}
\label{prop:arity distinct}
Let $\A$ be a differential tree automaton over $\Sigma$. There exists a differential tree automaton $\A'$ over arity distinct $\Sigma'$ such that $f_{\A}(x) = f_{\A'}(x)$.
\end{proposition}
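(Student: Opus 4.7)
The plan is to collapse all function symbols of a given arity into a single representative whose weight matrix is the sum of the weights of the symbols it represents. The construction will preserve generating functions because \Cref{eq:GEN} aggregates tree values by size only, so merging symbols of the same arity introduces no change that survives the summation over trees of a fixed size.

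Concretely, given $\A = (d, \mu)$ over $\Sigma$, I would build $\A' = (d, \mu')$ over the arity distinct alphabet $\Sigma'$ defined by $\Sigma'_k := \{\sigma_k\}$ whenever $\Sigma_k \neq \emptyset$ (and $\Sigma'_k := \emptyset$ otherwise). The weight function would be
\[
\mu'(\sigma_k) := \sum_{g \in \Sigma_k} \mu(g),
\]
understood pointwise in $\Q^{1 \times d}$ when $k = 0$ and in $\QQ(x_0,\ldots,x_k)^{d^k \times d}$ when $k \geq 1$. Both rings are closed under addition---the denominator condition defining $\QQ$ is preserved when one passes to a common denominator---so $\A'$ is a legitimate holonomic tree automaton.

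The core verification is an induction on $n$ showing that the coefficient vectors $\va'_n := \sum_{t \in T_{\Sigma'},\,|t| = n} \hmu'(t)$ of $\A'$ coincide with the coefficient vectors $\va_n := \sum_{t \in T_\Sigma,\,|t| = n} \hmu(t)$ of $\A$. The base case is immediate: if $\Sigma_0 = \emptyset$ then $T_\Sigma = T_{\Sigma'} = \emptyset$ and both generating functions vanish; otherwise $\va'_0 = \mu'(\sigma_0) = \sum_{a \in \Sigma_0} \mu(a) = \va_0$. For the inductive step I would apply \Cref{eq:an} to $\A'$, substitute $\mu'(\sigma_k)(n,n_1,\ldots,n_k) = \sum_{g \in \Sigma_k} \mu(g)(n,n_1,\ldots,n_k)$, push this sum outside using bilinearity of the Kronecker product, and use the inductive hypothesis $\va'_{n_i} = \va_{n_i}$ to recover exactly the expression for $\va_n$ given by \Cref{eq:an} applied to $\A$. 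The equality $f_{\A'}(x) = f_\A(x)$ then follows directly from \Cref{eq:GEN}, since both sides have $(\va_n)_1$ as the coefficient of $x^n$.

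I do not anticipate any real difficulty: the essential observation is that summing $\mu$-values over $\Sigma_k$ commutes with the recursion in \Cref{eq:an}. One subtlety worth flagging is that this construction does not preserve the underlying formal tree series---trees of $T_\Sigma$ whose root symbols differ but share arity are identified in $T_{\Sigma'}$---but the proposition only claims equality of generating functions, so this identification is precisely what is wanted.
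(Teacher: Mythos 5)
Your proposal is correct and matches the paper's own proof essentially verbatim: the paper likewise introduces one symbol $h_k$ per arity with $\mu'(h_k)=\sum_{g\in\Sigma_k}\mu(g)$ and verifies $\va'_n=\va_n$ by induction on $n$ via \Cref{eq:an} and bilinearity of the Kronecker product. The only (inessential) difference is that the paper defines $h_k$ for every $k$ up to the maximal arity rather than only for nonempty $\Sigma_k$.
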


\begin{proof}
By \Cref{prop:automata to recurrence}, we know that, writing  $\vf_\A(x) = \sum_{n=0}^\infty \va_n x^n$, the sequence $(\va_n)_{n=0}^\infty$ satisfies for all $n \geq 1$,
\[
\va_n = \sum_{\ell = 1}^m \sum_{\bm{j}\in J_n(\ell)} (\va_{j_1} \otimes \ldots \otimes \va_{j_\ell})\cdot \sum_{\sigma \in \Sigma_\ell}(n)
\]
Thus, defining the matrices $\mu_1, \ldots, \mu_m$ as $\mu_i(x) = \sum_{\sigma \in \Sigma_i} \mu(\sigma)(x)$ for all $i \in \{0, \ldots, m\}$ we obtain that $(\va_n)_{n=0}^\infty$ is multi-holonomic. We conclude by applying \Cref{prop:multi to automaton}.
\end{proof}


\begin{proposition}
\label{prop:same alphabet}
Let $\A_1$ and $\A_2$ be two differential tree automata over $\Sigma_1$ and $\Sigma_2$ respectively. There exist an alphabet $\Sigma$ and two differential tree automata $\A'_1$ and $\A'_2$ over $\Sigma$ such that $f_{\A_1}(x) = f_{\A'_1}(x)$ and $f_{\A_2}(x) = f_{\A'_2}(x)$.
\end{proposition}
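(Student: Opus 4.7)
The plan is to reduce to the arity-distinct case using \Cref{prop:arity distinct}, then merge the two alphabets into a single arity-distinct alphabet by padding each automaton with zero-weight transitions for any arity it did not originally use.

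First, I would apply \Cref{prop:arity distinct} to $\A_1$ and $\A_2$ to obtain arity-distinct alphabets $\widetilde{\Sigma}_1,\widetilde{\Sigma}_2$ and holonomic tree automata $\widetilde{\A}_1,\widetilde{\A}_2$ with $f_{\widetilde{\A}_i}(x)=f_{\A_i}(x)$ for $i\in\{1,2\}$. Next, let $K$ be the set of all arities appearing in $\widetilde{\Sigma}_1\cup\widetilde{\Sigma}_2$ and set $\Sigma:=\{h_k:k\in K\}$, where each $h_k$ has arity $k$; this $\Sigma$ is arity-distinct and contains (up to renaming) both $\widetilde{\Sigma}_1$ and $\widetilde{\Sigma}_2$.

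For each $i\in\{1,2\}$, write $\widetilde{\A}_i=(d_i,\widetilde{\mu}_i)$, and define $\A'_i=(d_i,\mu'_i)$ over $\Sigma$ by setting $\mu'_i(h_k):=\widetilde{\mu}_i(g)$ when $\widetilde{\Sigma}_i$ contains the (necessarily unique) symbol $g$ of arity $k$, and $\mu'_i(h_k):=\boldsymbol{0}$ (of dimension $1\times d_i$ when $k=0$ and $d_i^{k}\times d_i$ when $k\geq1$) otherwise. It then remains to prove $f_{\A'_i}(x)=f_{\widetilde{\A}_i}(x)$.

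The key observation, proved by a one-line induction on $t\in T_\Sigma$, is that if any symbol $h_k$ with $\mu'_i(h_k)=\boldsymbol{0}$ occurs anywhere in $t$, then the value of $t$ under the induced map of $\A'_i$ is the zero vector: indeed the offending subtree already evaluates to $\boldsymbol{0}$, and any Kronecker product involving a zero factor is zero, which then propagates up to the root. Consequently only trees in $T_\Sigma$ built exclusively from symbols corresponding to elements of $\widetilde{\Sigma}_i$ contribute to $f_{\A'_i}(x)$, and these trees are in an evident size- and weight-preserving bijection with $T_{\widetilde{\Sigma}_i}$ given by the arity-indexed renaming $h_k\leftrightarrow g$. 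This yields $f_{\A'_i}(x)=f_{\widetilde{\A}_i}(x)=f_{\A_i}(x)$, as desired. There is no serious obstacle here; the only care required is matching the dimensions of the zero matrices used in the padding, which is handled by construction.
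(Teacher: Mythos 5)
Your proposal is correct and follows essentially the same route as the paper: reduce to arity-distinct alphabets via the cited proposition, then pad each automaton with zero-weight matrices for the missing arities so that both live over a common arity-distinct alphabet (up to renaming). The only difference is that you spell out the zero-propagation argument that the paper dismisses as trivial, which is a harmless elaboration.
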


\begin{proof}
Take $\A_1 = (d_1,\mu_1)$ and $\A_2 = (d_2,\mu_2)$. By \Cref{prop:arity distinct}, we can also assume that $\Sigma_1$ and $\Sigma_2$ are arity distinct. 

Let $r_1$ and $r_2$ the maximum arity of symbols in $\Sigma_1$ and $\Sigma_2$. Let $\{k_1,\ldots,k_\ell\}$ be the set of arity in $\{0,\ldots, r_2\}$ such that for all $i \in \{1, \ldots, \ell\}$, $\Sigma_2$ contains a symbol of arity $k_i$ but not $\Sigma_1$. We define $\Sigma'_1$ by extending $\Sigma_1$ with the fresh symbols $\sigma_{k_1},\ldots,\sigma_{k_\ell}$ of arity $k_1,\ldots, k_\ell$ respectively and we extend $\A_1$ into $\A'_1 = (d_1,\mu'_1)$ such that $\mu'_1(\sigma_{k_i}) = \vzero_{d_1^{k_i} \times d_1}$ for all $i \in \{1, \ldots, \ell\}$. We trivially have $f_{\A_1}(x) = f_{\A'_1}(x)$. 
    
We extend in a similar fashion $\Sigma_2$ and $\A_2$ by computing $\{k'_1,\ldots,k'_{\ell'}\}$ to be the set of arities in $\{0,\ldots, r_1\}$ such that for all $i \in \{1, \ldots, \ell'\}$, $\Sigma_1$ contains a symbol of arity $k'_i$ but not $\Sigma_2$. This yields an automaton $\A'_2$ over $\Sigma'_2$ such that $f_{\A'_2}(x) = f_{\A_2}(x)$. As $\Sigma_1$ and $\Sigma_2$ are arity distinct, we have by construction that $\Sigma'_1$ is a renaming of $\Sigma'_2$. Therefore, we can fully rename the automaton $\A'_1$ to be over $\Sigma'_2$ with $f_{\A_1}(x) = f_{\A'_1}(x)$.
\end{proof}


\begin{corollary}
\label{lem:closure gen addition scalar}
Let $\A_1$ and $\A_2$ be two differential tree automata. Let $\alpha \in \K$.
\begin{itemize}
\item there exists a differential tree automaton such that $f_{\A}(x) = f_{\A_1}(x) + f_{\A_2}(x)$.
\item there exists a differential tree automaton such that $f_{\A}(x) = \alpha f_{\A_1}(x)$.
\end{itemize}
\end{corollary}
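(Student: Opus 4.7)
The plan is to reduce both closure properties to the corresponding closure properties on formal tree series, which were already established (\Cref{lem:closure sem addition,lem:closure sem scalar product}).

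First I would observe that the automata $\A_1$ and $\A_2$ may be over different alphabets, so I invoke \Cref{prop:same alphabet} to replace them with automata $\A'_1$ and $\A'_2$ over a common alphabet $\Sigma$ whose generating functions agree with $f_{\A_1}(x)$ and $f_{\A_2}(x)$ respectively. This reduction is free of charge and puts us in the setting of the closure lemmas on formal tree series.

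Next, for the addition case, I would apply \Cref{lem:closure sem addition} to $\A'_1$ and $\A'_2$ to produce a holonomic tree automaton $\A$ over $\Sigma$ such that $\sem{\A}{t} = \sem{\A'_1}{t} + \sem{\A'_2}{t}$ for all $t \in T_\Sigma$. Summing over trees of size $n$ and then over $n \in \N$ gives, by linearity of finite sums, $f_\A(x) = f_{\A'_1}(x) + f_{\A'_2}(x) = f_{\A_1}(x) + f_{\A_2}(x)$. For the scalar case, the same idea works: \Cref{lem:closure sem scalar product} produces a holonomic tree automaton $\A$ with $\sem{\A}{t} = \alpha \sem{\A_1}{t}$, and pulling the scalar $\alpha$ out of the double sum defining $f_\A(x)$ yields $f_\A(x) = \alpha f_{\A_1}(x)$.

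There is essentially no obstacle here: the only subtlety is the alignment of alphabets, handled by \Cref{prop:same alphabet}; everything else is linearity of the sum in the definition of the generating function in Equation~\eqref{eq:GEN}. In particular, no new construction on matrices is required, since the heavy lifting has already been done at the level of formal tree series.
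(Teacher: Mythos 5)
Your proposal is correct and matches the paper's proof, which is stated simply as ``Direct from \Cref{prop:same alphabet,lem:closure sem addition,lem:closure sem scalar product}''; you cite exactly these three results and fill in the (routine) linearity argument that the paper leaves implicit.
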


\begin{proof}
Direct from \Cref{prop:same alphabet,lem:closure sem addition,lem:closure sem scalar product}.
\end{proof}


\begin{proposition}
\label{lem:closure gen derive}
Let $\A$ be a differential tree automaton. There exists a differential tree automaton $\A'$ such that $f_{\A'}(x) = \SDerive{f_{\A}}(x) = x f'_\A(x)$.
\end{proposition}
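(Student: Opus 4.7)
The plan is to reduce this to a direct application of Proposition~\ref{lem:vector e1}. Recall that by that proposition, for any vector $\vbeta(x) \in \Q(x)^{d \times 1}$ that is defined on all non-negative integers, there is an automaton whose formal tree series sends $t \mapsto \hmu(t)\vbeta(\size{t})$. The idea is to choose $\vbeta(x)$ so that the resulting series, after summing over trees of a given size, produces exactly the $n$-th coefficient $n \cdot a_n$ of $\SDerive{f_\A}$.

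More precisely, write $\A = (d,\mu)$ and $f_\A(x) = \sum_{n=0}^\infty a_n x^n$ where $a_n = \sum_{t : \size{t} = n} \sem{\A}{t}$. Since
\[
\SDerive{f_\A}(x) = x f'_\A(x) = \sum_{n=0}^\infty n\, a_n x^n = \sum_{n=0}^\infty \left( \sum_{t : \size{t} = n} \size{t} \cdot \sem{\A}{t} \right) x^n,
\]
it suffices to produce an automaton $\A'$ over $\Sigma$ for which $\sem{\A'}{t} = \size{t} \cdot \sem{\A}{t}$ for every $t \in T_\Sigma$. Applying Proposition~\ref{lem:vector e1} to $\A$ with the vector $\vbeta(x) := x \cdot \ve_1 \in \Q(x)^{d \times 1}$ (where $\ve_1$ denotes the first canonical basis vector), we obtain such an $\A'$: by that proposition $\sem{\A'}{t} = \hmu(t)\vbeta(\size{t}) = \size{t} \cdot \hmu(t)_1 = \size{t} \cdot \sem{\A}{t}$, as required. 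Note that $\vbeta(x) = x \cdot \ve_1$ is a polynomial vector, hence defined on all non-negative integers, so the hypothesis of Proposition~\ref{lem:vector e1} is satisfied.

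Finally, summing over trees of each size gives $f_{\A'}(x) = \sum_{n=0}^\infty \big( \sum_{t : \size{t} = n} \size{t}\, \sem{\A}{t} \big) x^n = \SDerive{f_\A}(x)$, which completes the argument. There is essentially no obstacle here: the whole content of the proof is already encapsulated in Proposition~\ref{lem:vector e1}, which supplies the automaton construction that multiplies the tree-series value of every input tree by a prescribed rational function of its size.
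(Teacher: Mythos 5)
Your proposal is correct and is essentially identical to the paper's own proof: both apply Proposition~\ref{lem:vector e1} with $\vbeta(x) = x\,\ve_1$ to obtain an automaton computing $t \mapsto \size{t}\,\sem{\A}{t}$, then sum over trees of each size to identify the generating function with $\SDerive{f_\A}(x)$. The only cosmetic difference is that the paper cites Proposition~\ref{prop:derive} for the identity $\SDerive{f}(x) = \sum_{n} n\,a_n x^n$, which you verify directly.
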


\begin{proof}
Let $\A = (d,\mu)$ over the alphabet $\Sigma$. Let us consider the vector  $\vbeta(x) = \begin{bmatrix} x \\ \vzero_{(d-1)\times 1} \end{bmatrix}$. By \Cref{lem:vector e1}, there exists a differential tree automaton $\A'$ such that $\sem{\A'}{t} = \hmu(t)\vbeta(\size{t}) = \size{t} \hmu(t)_1 = \size{t} \sem{\A}{t}$ for all $t \in T_\Sigma$.
Hence, we directly obtain that $f_{\A'}(x) = \sum_{n=0} n \sum_{\substack{t \in T_\Sigma\\\size{t} = n}} \sem{\A}{t} x^n$. By \Cref{prop:derive}, we conclude that $f_{\A'}(x) = \SDerive{f_{\A}}(x)$.
\end{proof}


\begin{proposition}
\label{lem:closure gen integral}
Let $\A$ be a differential tree automaton. There exists a differential tree automaton $\A'$ such that $f_{\A'}(x) = \frac{1}{x}\int_{0}^{x} f_{\A}(x) dx$.
\end{proposition}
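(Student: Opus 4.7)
The idea is very short and parallels the proof of \Cref{lem:closure gen derive}: divide each contribution at size $n$ by $n+1$ using the vector construction from \Cref{lem:vector e1}.

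First I would compute the target series coefficient-by-coefficient. Writing $f_\A(x)=\sum_{n=0}^\infty c_n x^n$ with $c_n = \sum_{t\in T_\Sigma,\,\size{t}=n} \sem{\A}{t}$, integration gives $\int_0^x f_\A(u)\,du = \sum_{n=0}^\infty \frac{c_n}{n+1} x^{n+1}$, so that
\[
\tfrac{1}{x}\int_0^x f_\A(u)\,du \;=\; \sum_{n=0}^\infty \frac{c_n}{n+1}\,x^n\,.
\]
Thus it suffices to construct a holonomic tree automaton $\A'$ over $\Sigma$ satisfying $\sem{\A'}{t} = \frac{1}{\size{t}+1}\sem{\A}{t}$ for every $t\in T_\Sigma$.

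Next I would invoke \Cref{lem:vector e1} with $\A = (d,\mu)$ and the vector
\[
\vbeta(x) \;=\; \begin{bmatrix} \tfrac{1}{x+1} \\ \vzero_{(d-1)\times 1} \end{bmatrix}\,.
\]
Note that $\tfrac{1}{x+1}$ lies in $\QQ(x)$ since the denominator $x+1$ has no nonnegative integer root, so $\vbeta$ is defined on all nonnegative integers as required. The lemma produces an automaton $\A'$ with $\sem{\A'}{t} = \hmu(t)\,\vbeta(\size{t}) = \tfrac{1}{\size{t}+1}\,\hmu(t)_1 = \tfrac{1}{\size{t}+1}\sem{\A}{t}$.

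Finally I would read off the generating function:
\[
f_{\A'}(x) \;=\; \sum_{n=0}^\infty \Bigl(\sum_{t:\size{t}=n} \sem{\A'}{t}\Bigr) x^n \;=\; \sum_{n=0}^\infty \frac{c_n}{n+1}\,x^n \;=\; \tfrac{1}{x}\int_0^x f_\A(u)\,du\,,
\]
which completes the proof. There is no real obstacle here: the only thing to verify is that the chosen $\vbeta$ has entries in the admissible subring of rational functions, which holds because $x+1$ has no nonnegative integer root, so \Cref{lem:vector e1} applies directly.
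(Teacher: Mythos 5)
Your proof is correct and follows exactly the same route as the paper: both invoke \Cref{lem:vector e1} with $\vbeta(x) = \begin{bmatrix} \tfrac{1}{1+x} \\ \vzero_{(d-1)\times 1}\end{bmatrix}$ to scale the weight of each tree $t$ by $\tfrac{1}{1+\size{t}}$ and then read off the generating function. Your additional check that $\tfrac{1}{x+1}$ is defined on all nonnegative integers is a welcome (if implicit in the paper) verification of the hypothesis of that lemma.
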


\begin{proof}
Let $\A = (d,\mu)$ over an alphabet $\Sigma$. Let us consider the vector $\vbeta(x) = \begin{bmatrix} \frac{1}{1+x} \\ \vzero_{(d-1)\times 1} \end{bmatrix}$. By \Cref{lem:vector e1}, there exists a differential tree automaton $\A'$ such that $\sem{\A'}{t} = \hmu(t) \vbeta(\size{t}) = \frac{1}{1+\size{t}} \hmu(t)_1 = \frac{1}{1+\size{t}}\sem{\A}{t}$. Hence, we directly obtain that:
\[
f_{\A'}(x) = \sum_{n=0} \frac{1}{1+n} \sum_{\substack{t \in T_\Sigma\\\size{t} = n}} \sem{\A}{t} x^n = \frac{1}{x}\int_{0}^{x} f_{\A}(x) dx\qedhere
\]
\end{proof}


\begin{proposition}
\label{lem:closure gen product}
Let $\A_1$ and $\A_2$ be two differential tree automata.
\begin{itemize}
\item There exists a differential tree automaton $\A$ such that $f_{\A}(x) = x \cdot f_{\A_1}(x)$.
\item There exists a differential tree automaton $\A$ such that $f_{\A}(x) = x \cdot f_{\A_1}(x) \cdot f_{\A_2}(x)$.
\end{itemize}
\end{proposition}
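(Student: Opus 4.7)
The plan is to build $\A$ by a direct construction using a fresh root symbol: a unary $\sigma$ for the first item (wrapping an $\A_1$-tree), and a binary $\sigma$ for the second item (pairing an $\A_1$-tree with an $\A_2$-tree). The state space will be block-decomposed: slot~$1$ will be the ``output'' slot that $f_\A$ reads, and separate blocks of slots will carry the $\A_i$-computations. The weight matrices will be designed so that only trees of the intended shape can route a nonzero value into slot~$1$.

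For the first item, let $\Sigma = \Sigma_1 \sqcup \{\sigma\}$ with $\sigma$ fresh and unary, and set $\A = (d_1+1,\mu)$. The slots $\{2,\ldots,d_1+1\}$ will mirror $\A_1$: for every $g \in \Sigma_1$ the matrix $\mu(g)$ is obtained by shifting $\mu_1(g)$ into the block of these slots (both for indices and values), with zero rows/columns outside that block and with slot~$1$ untouched by $\Sigma_1$ symbols. The new symbol has a single nonzero entry $\mu(\sigma)_{2,1}(x_0,x_1) = 1$. A structural induction shows that $\hmu(t) = \begin{bmatrix} 0 & \hmu_1(t)_1 & \cdots & \hmu_1(t)_{d_1} \end{bmatrix}$ for $t \in T_{\Sigma_1}$, and that only trees of the form $\sigma(t_1)$ with $t_1 \in T_{\Sigma_1}$ can produce a nonzero value in slot~$1$, with $\hmu(\sigma(t_1))_1 = \sem{\A_1}{t_1}$. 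Since $\size{\sigma(t_1)} = \size{t_1}+1$, summing over sizes yields $f_\A(x) = x \cdot f_{\A_1}(x)$.

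For the second item, take $\Sigma = \Sigma_1 \sqcup \Sigma_2 \sqcup \{\sigma\}$ (after renaming if needed, which preserves generating functions), $\A = (d_1+d_2+1,\mu)$, and assign slots $\{2,\ldots,d_1+1\}$ to the $\A_1$-computation, slots $\{d_1+2,\ldots,d_1+d_2+1\}$ to the $\A_2$-computation, and slot~$1$ as the output. The $\Sigma_1$ and $\Sigma_2$ symbols embed $\mu_1$ and $\mu_2$ into their respective blocks as in the previous paragraph; the fresh binary symbol $\sigma$ has a single nonzero entry $\mu(\sigma)_{(2,\,d_1+2),\,1}(x_0,x_1,x_2) = 1$. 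Induction then shows that the only trees producing a nonzero value in slot~$1$ are those of the form $\sigma(t_1,t_2)$ with $t_1 \in T_{\Sigma_1}$, $t_2 \in T_{\Sigma_2}$, and that for such trees $\hmu(\sigma(t_1,t_2))_1 = \sem{\A_1}{t_1}\cdot \sem{\A_2}{t_2}$. Grouping by $\size{t_1}$ and $\size{t_2}$ and using $\size{\sigma(t_1,t_2)} = \size{t_1}+\size{t_2}+1$ gives exactly the Cauchy product $x \cdot f_{\A_1}(x) \cdot f_{\A_2}(x)$.

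The only subtle point is verifying that no ``unintended'' tree in $T_\Sigma$ contributes to slot~$1$, including trees lying wholly in $T_{\Sigma_i}$, trees in which $\sigma$ is nested, and trees in which $\sigma$ appears strictly below a $\Sigma_i$-symbol. This is forced by the block structure of $\mu$: a $\Sigma_i$-symbol only consumes children in the $i$-th block and writes into that block, while $\sigma$ outputs only in slot~$1$ and consumes precisely in slots $2$ and $d_1+2$, which are themselves never written to by $\sigma$. A routine structural induction propagates these vanishing conditions through all trees, and is the only nontrivial step of the argument.
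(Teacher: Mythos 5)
Your construction is correct and essentially identical to the paper's: a fresh unary (resp.\ binary) root symbol whose weight matrix has a single nonzero entry routing the product of the first coordinates of the embedded $\A_1$- and $\A_2$-blocks into a dedicated output slot, with a structural induction showing that only trees of the intended shape contribute. (Your entry $\mu(\sigma)_{(2,\,d_1+2),\,1}=1$ is in fact the correct placement; the paper's displayed matrix for the binary case contains an indexing typo at this point.)
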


\begin{proof}
Without loss of generality, let us assume that $\A_1$ and $\A_2$ are automata over distinct alphabet $\Sigma^1$ and $\Sigma^2$ (we can always rename the function symbols otherwise). Let $\A_1 = (d_1,\mu_1)$ and $\A_2 = (d_2,\mu_2)$.

We start by building $\A$ such that $f_{\A}(x) = x \cdot f_{\A_1}(x)$. Let us take a new unary symbol $\sigma_1/1$ not already in $\Sigma^1$. We build the automaton $\A$ to have dimension $d_1 + 1$ over $\Sigma = \{ \sigma_1 \} \cup \Sigma^1$ with the weight function $\mu$ such that for all $\sigma \in \Sigma^1_0$, $\mu(\sigma) = \begin{bmatrix} 0 & \mu_1(\sigma) \end{bmatrix}$; and for all $k>0$, for all $\sigma \in \Sigma^1_k$, for all $ \bm{i} \in \{1, \ldots, d_1+1\}^k$, $ j \in \{1, \ldots, d_1+1\}$ 
\[
\mu(\sigma)_{\bm{i},j} = 
\left\{\begin{array}{lr}
    \mu_1(\sigma)_{\bm{i - 1},j-1} & \text{if }\forall \ell, i_\ell,j \in \{ 2, \ldots, d_1+1 \}\\
    0 & \text{otherwise}
\end{array}
\right.
\]
and
\[
\mu(\sigma_1)(x) = 
\begin{bmatrix} 0 & \vzero_{1\times d_1}\\
    1& \vzero_{1\times d_1}\\
    \vzero_{(d_1 - 1)\times 1} & \vzero_{(d_1 - 1)\times d_1}\\
\end{bmatrix}
\]
With such a construction, we can show the following property: for all $t \in T_\Sigma$, 
\begin{itemize}
\item if $t \in T_{\Sigma^1}$ then $\hmu(t) = \begin{bmatrix} 0 & \hmu_1(t) \end{bmatrix}$; 
\item if $t = \sigma_1(t')$ with $t' \in T_{\Sigma^1}$ then $\hmu(\sigma_1(t')) = \begin{bmatrix} \mu_1(t')_1 & \vzero_{1\times d_1} \end{bmatrix}$
\item and $\hmu(t) = \vzero_{1\times (d_1+1)}$ otherwise 
\end{itemize}
We prove these properties by induction on the structure of $t$. In the base case, $t$ is necessarily a nullary symbol $\sigma \in \Sigma^1_0$. By definition $\mu(\sigma) = \begin{bmatrix} 0 & \mu_1(\sigma) \end{bmatrix}$ hence the result holds. In the inductive step, we do a case analysis on $t$: 
\begin{itemize}
\item Case $t = \sigma(t_1,\ldots,t_k)$ with $\sigma \in \Sigma^1_k$ and $t_1,\ldots,t_k \in T_{\Sigma^1}$: By inductive hypothesis on $t_1,\ldots,t_k$, we have that for all $i \in \{1, \ldots, d_1+1\}$, 
\[
\mu(t)_i = (\begin{bmatrix} 0 & \hmu_1(t_1) \end{bmatrix} \otimes \ldots \otimes \begin{bmatrix} 0 & \hmu_1(t_k) \end{bmatrix}) \mu(\sigma)(\size{t})_i
\]
Thus, by definition of $\mu(\sigma)$, we have $\mu(t)_1 = 0$ and  for all $i \in \{2, \ldots, d_1+1\}$, \[\mu(t)_i = \left(\hmu_1(t_1) \otimes \ldots \otimes \hmu_1(t_k)\right) \mu_1(\sigma)(\size{t})_{i-1}\,.\] We thus obtain:
\[
    \hmu(\sigma(t_1,\ldots,t_k)) = \begin{bmatrix} 0 & \hmu_1(\sigma(t_1,\ldots,t_k)) \end{bmatrix}
\]
\item Case $t = \sigma(t_1,\ldots,t_k)$ with $\sigma \in \Sigma^1_k$ and there exists $\ell \in \{1, \ldots,k\}$ such that $t_\ell \not\in T_{\Sigma^1}$: By inductive hypothesis on $t_\ell$, we know that for all $i \in \{2, \ldots, d_1+1\}$, $\hmu(t_\ell)_i = 0$. Thus, we deduce from the definition of $\mu(\sigma)$ that:
\[
(\hmu(t_1) \otimes \ldots \otimes \hmu(t_k)) \mu(\sigma)(\size{t}) = \vzero_{1\times(d_1+1)}
\]
\item Case $t = \sigma_1(t')$ with $t' \in T_{\Sigma^1}$: By inductive hypothesis on $t'$, we know that $\hmu(t') = \begin{bmatrix} 0 & \hmu_1(t') \end{bmatrix}$. Thus, following the definition of $\mu(\sigma_1)$, we directly have that:
\[
\hmu(t) = \hmu(t') \mu(\sigma_1)(\size{t}) = \begin{bmatrix} \hmu_1(t)_1 & \vzero_{1\times d_1} \end{bmatrix}
\]
\item Case $t = \sigma_1(t')$ with $t' \not\in T_{\Sigma^1}$: By inductive hypothesis on $t'$, we know that for all $i \in \{2, \ldots, d_1+1\}$, $\mu(t')_i = 0$. Hence, by definition of $\mu(u)$, we obtain:
\[
\hmu(t) = \hmu(t') \mu(\sigma_1)(\size{t}) = \vzero_{1\times(d_1+1)}
\]
\end{itemize}
Let us now conclude the proof by computing the generating series. First notice that for all  $t \in T_\Sigma$, $\mu(t)_1 \neq 0$ implies $t = \sigma_1(t')$ with $t' \in T_{\Sigma^1}$. Hence,
\[
f_{\A}(x) = \sum_{n=0}^\infty \sum_{\substack{t \in T_{\Sigma}\\\size{t} = n}} \mu(t)_1  x^n
= \sum_{n=1}^\infty \sum_{\substack{t' \in T_{\Sigma^1}\\\size{t'} = n-1}} \mu(\sigma_1(t'))_1  x^n
= x\sum_{n=1}^\infty \sum_{\substack{t' \in T_{\Sigma^1}\\\size{t'} = n-1}} \mu_1(t')_1  x^{n-1} 
= x \cdot f_{\A_1}(x)
\]

\medskip

We now build a differential tree automaton $\A$ such that $f_\A(x) = x \cdot f_{\A_1}(x) \cdot f_{\A_2}(x)$. The proof is in fact very similar to the above proof. Instead of considering a new unary symbol, we consider a new binary symbol $\sigma_2$ (hence arity 2) not already in $\Sigma^1$ and $\Sigma^2$. We build the automaton $\A$ over $\Sigma = \{ \sigma_2 \} \cup \Sigma^1 \cup \Sigma^2$ with dimension $d = d_1 + d_2 + 1$, and the weight function $\mu$ such that: 
\begin{itemize}
    \item for all $\sigma \in \Sigma^1_0$, $\mu(\sigma) = \begin{bmatrix} 0 & \mu_1(\sigma) & \vzero_{1 \times d_2}\end{bmatrix}$
    \item for all $\sigma \in \Sigma^2_0$, $\mu(\sigma) = \begin{bmatrix} 0 & \vzero_{1 \times d_1} & \mu_2(\sigma) \end{bmatrix}$
    \item for all $k>0$, for all $\sigma \in \Sigma^1_k$, for all $\bm{i} \in \{1,\ldots,d\}^k,j \in \{1,\ldots,d\}$,
\[
\mu(\sigma)_{\bm{i},j} = 
\left\{\begin{array}{lr}
    \mu_1(\sigma)_{\bm{i - 1},j-1} & \text{if }\forall \ell, i_\ell,j \in \{ 2, \ldots, d_1+1 \}\\
    0 & \text{otherwise}
\end{array}
\right.
\]
\item for all $k>0$, for all $\sigma \in \Sigma^2_k$, for all $\bm{i} \in \{1,\ldots,d\}^k,j \in \{1,\ldots,d\}$, denoting $n = d_1 + 1$, 
\[
\mu(\sigma)_{\bm{i},j} = 
\left\{\begin{array}{lr}
    \mu_1(\sigma)_{\bm{i } -n \bm{1},j-n} & \text{if }\forall \ell, i_\ell,j \in \{ d_1+2, \ldots, d \}\\
    0 & \text{otherwise}
\end{array}
\right.
\]
\item $\mu(u)_{(1,1),0} = 1$ and otherwise $\mu(u)_{(i_1,i_2),j} = 0$. In other words:
\[
\mu(u) = 
\begin{bmatrix} 
    \vzero_{(d_1+1)\times (d_1+d_2)} & \vzero_{(d_1+1)\times (d_1+d_2)}\\
    1& \vzero_{1\times (d_1+d_2)}\\
    \vzero_{(d^2 - d_1 -2)\times 1} & \vzero_{(d^2 - d_1 -2)\times (d_1+d_2)}\\
\end{bmatrix}
\]
\end{itemize}
This construction naturally entails a similar property: for all $t \in T_\Sigma$,
\begin{itemize}
    \item if $t \in T_{\Sigma^1}$ then $\hmu(t) = \begin{bmatrix} 0 & \hmu_1(t) & \vzero_{1\times d_2}\end{bmatrix}$; 
    \item if $t \in T_{\Sigma^2}$ then $\hmu(t) = \begin{bmatrix} 0 & \vzero_{1\times d_1} & \hmu_2(t) \end{bmatrix}$; 
    \item if $t = \sigma_2(t_1,t_2)$ with $t_1 \in T_{\Sigma^1}$ and $t_2 \in T_{\Sigma^2}$ then \[ 
    \hmu(\sigma_2(t_1,t_2)) = \begin{bmatrix} \mu_1(t_1)_1 \cdot \mu_2(t_2)_1 & \vzero_{1\times (d_1 + d_2)} \end{bmatrix}\]
    \item and $\hmu(t) = \vzero_{1\times (d_1+d_2+1)}$ otherwise 
\end{itemize}
The proof of this property is done once again by induction on the structure of $t$ and is very similar to the above unary case, hence we omit the details. We now conclude the main proof by computing the generating series: Once again, notice that for all $t \in T_\Sigma$, $\hmu(t)_1 \neq 0$ implies $t = \sigma_2(t_1,t_2)$. Therefore, we have:
\begin{align*}
f_{\A}(x) &= \sum_{n=0}^\infty\; \sum_{\substack{t \in T_{\Sigma}\\\size{t} = n}} \mu(t)_1  x^n\\
&= \sum_{n=1}^\infty\; \sum_{k=0}^{n-1}\; \sum_{\substack{t_1 \in T_{\Sigma^1}\\\size{t_1} = k}}\; \sum_{\substack{t_2 \in T_{\Sigma^2}\\\size{t_2} = n-1-k}} \mu_1(t_1)_1 \cdot \mu_2(t_2)_1  x^n \\
&= x \sum_{n=1}^\infty\; \sum_{k=0}^{n-1}\; (\sum_{\substack{t_1 \in T_{\Sigma^1}\\\size{t_1} = k}} \mu_1(t_1)_1x^k) \cdot  (\sum_{\substack{t_2 \in T_{\Sigma^2}\\\size{t_2} = n-1-k}} \mu_2(t_2)_1x^{n-1-k}) \\
&= x \cdot (\sum_{n=0}^\infty \sum_{\substack{t_1 \in T_{\Sigma^1}\\\size{t_1} = n}} \mu_1(t_1)_1 x^n ) \cdot ( \sum_{n=0}^\infty \sum_{\substack{t_2 \in T_{\Sigma^2}\\\size{t_2} = n}} \mu_2(t_2)_1 x^n )\\
&= x \cdot f_{\A_1}(x) \cdot f_{\A_2}(x)\qedhere
\end{align*}
\end{proof}


\begin{proposition}
\label{prop:closure gen devise}
Let $\A$ be a differential tree automaton. Let $f_\A = \sum^\infty_{n=0} a_nx^n$. There exists a differential tree automaton $\A'$ such that $f_{\A'}(x) = \frac{f_\A(x) - a_0}{x} = \sum^\infty_{n=0} a_{n+1}x^n$.
\end{proposition}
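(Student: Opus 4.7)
Plan: The goal is a direct construction of $\A'$ from $\A = (d, \mu)$ such that $\sum_{|t|=n} \hmu'(t)$ has $\va_{n+1}$ in its first coordinate, so that $f_{\A'}(x) = (f_\A(x) - a_0)/x$. By \Cref{prop:arity distinct} I may assume $\Sigma$ is arity-distinct with unique nullary symbol $\sigma_0$. The central difficulty is a one-index mismatch: the standard automaton recursion \Cref{eq:an} expresses $\sum_{|t|=n}\hmu'(t)$ as a sum over tuples with $n_1 + \cdots + n_k = n - 1$, whereas the natural recursion for $\va_{n+1}$ sums over tuples with $n_1 + \cdots + n_k = n$.

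I would enlarge the state space to dimension $3d$, partitioning it into three blocks of dimension $d$, and aim for the invariant
\[ \sum_{|t| = n} \hmu'(t) = \begin{bmatrix}\va_{n+1} & \va_n & c_n\end{bmatrix} \, , \]
with $c_0 := \va_0$ and $c_n := \vzero$ for $n \geq 1$. The second block just re-runs $\A$'s standard recursion; the third block acts as a \emph{leaf indicator}, non-zero only at leaves; and the first block is what carries the target shifted series. The initial condition is implemented by setting $\mu'(\sigma_0) := \begin{bmatrix}\va_1 & \va_0 & \va_0\end{bmatrix}$, where $\va_1$ is precomputed from $\A$'s recursion.

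The heart of the construction is defining $\mu'(g)$ for non-nullary $g \in \Sigma_k$ so that block one of the recursion yields $\va_{n+1}$. I exploit the combinatorial fact that any tuple $(m_1, \ldots, m_k)$ of non-negatives summing to $n \geq 1$ has a unique smallest index $j$ with $m_j \geq 1$; subtracting one from $m_j$ yields a tuple summing to $n-1$ with $m_1 = \cdots = m_{j-1} = 0$. Accordingly I would make $\mu'(g)$ non-zero in the first output block only at input block-patterns of the form $(3, \ldots, 3, 1, 2, \ldots, 2)$ (a single switch from block three to block one at position $j$), with rational-function weight $\mu(g)(x_0+1, 0, \ldots, 0, x_j+1, x_{j+1}, \ldots, x_k)$. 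For the second output block I would use the all-block-two input pattern with the original weight $\mu(g)(x_0, x_1, \ldots, x_k)$, and the third output block would be identically zero.

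The main obstacle is purely bookkeeping: one must carefully specify the contents of a weight matrix indexed by $3^k$ input-block patterns and three output blocks, and check that all substitutions keep the entries in $\QQ(x_0, \ldots, x_k)$ (which they do, since shifting $x_0 \mapsto x_0+1$ and $x_j \mapsto x_j + 1$ inherits the no-integer-root conditions from the original denominators, and evaluating $Q_i$ at $0$ is non-zero). With the invariant set up this way, the leaf indicator automatically annihilates any contribution in which $m_i \neq 0$ for some $i < j$, so summing over $j \in \{1, \ldots, k\}$ visits each tuple summing to $n$ exactly once and produces $\va_{n+1}$; the inductive verification of the invariant is then a direct unfolding of \Cref{eq:an}, from which $f_{\A'}(x) = \sum_n (\va_{n+1})_1 x^n = (f_\A(x) - a_0)/x$ follows immediately.
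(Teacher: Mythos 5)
Your construction is correct, and it takes a genuinely different route from the paper's. The paper keeps dimension $2d$ but enlarges the \emph{alphabet}: it introduces new symbols $h_{k,i}$ and an injective, size-decrementing tree transformation $\Gamma$ (which recurses into the last non-leaf child and strips trailing nullary symbols), then proves a tree-by-tree invariant stating that $\hA'$ assigns to $\Gamma(t')$ the value $\hmu(t')$ and annihilates everything outside the image of $\Gamma$. You instead keep the alphabet fixed, pay for it with dimension $3d$, and work purely at the level of the aggregated coefficient vectors via \Cref{eq:an}: your leaf-indicator block enforces that the children preceding the distinguished index $j$ have size zero, and the decomposition of a composition of $n$ by its \emph{first} nonzero part plays the role that the paper's stripping of \emph{trailing} leaves plays (note the paper's matrices $M^{k,\ell}$ perform the mirror-image substitution $\mu(g_k)(x_0+1,x_1,\ldots,x_{\ell-1},x_\ell+1,0,\ldots,0)$ with the zeros at the end). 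The trade-off: the paper's tree-level invariant is stronger --- it exhibits the shifted series as an actual formal tree series over a new alphabet --- whereas your argument establishes only the generating-function identity, which is all the proposition asserts; in exchange, your verification is a single induction on $n$ using \Cref{eq:an} and avoids both the auxiliary alphabet and the injectivity argument for $\Gamma$. Your checks on the ring $\QQ$ (that $Q_0(x_0+1)$ keeps no positive integer root, $Q_j(x_j+1)$ keeps no nonnegative integer root, and $Q_i(0)\neq 0$) are exactly the ones needed and mirror those implicit in the paper's substitutions.
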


\begin{proof}
Let $\A = (d,\mu)$ over $\Sigma$. Thanks for \Cref{prop:arity distinct}, we assume that $\Sigma$ is arity distinct. In other words, we can assume that $\Sigma = \{ \sigma_k \}_{k=0}^r$ for some $r$ with $\sigma_0, \ldots, \sigma_r$ of arity $0, \ldots, r$ respectively. The difficulty of this proof is that we need to \emph{decrease} the size of the trees of $\A$ by 1. We thus consider a new alphabet $\Sigma' = \Sigma \cup \bigcup_{k=1}^r \bigcup_{i=0}^k \{ \alpha_{k,i}\}$ where each $\alpha_{k,i}$ has arity $i$. Intuitively a term $\alpha_{k,i}(t_1,\ldots,t_{i-1},u)$ will have the same value in $\A'$ as the term $\sigma_k(t_1,\ldots,t_{i-1},t_i,\sigma_0,\ldots,\sigma_0)$ in $\A$ with $t_i$ having the same value as $u$ but with the size decreased by 1. Formally, we build an injective transformation $\Gamma$ from trees $t$ of $T_\Sigma$ of size $\size{t}>0$ to $T_{\Sigma'}$ as follows: For all $t \in T_\Sigma$, if $t = \sigma_k(t_1,\ldots,t_k)$ with $k>0$ and $i = \min(\{k\} \cup \{ j-1 \geq 0\mid \forall k \geq \ell \geq j, t_\ell = \sigma_0\})$ then
\[
\Gamma(t) = \alpha_{k,i}(t_1,\ldots,t_{i-1},\Gamma(t_i))
\]
A simple inductive proof allows us to show that for all $t \in T_\Sigma$, $\size{t} = \size{\Gamma(t)} + 1$. For this, notice that when $\size{t} = 1$, $t$ must be of the form $t = \sigma_k(\sigma_0,\ldots,\sigma_0)$ for some $k$. Hence, $\Gamma(t) = \alpha_{k,0}$.

We will build the automata $\A'$ to have dimension $d' = 2d$ over $\Sigma'$ with the weight function $\mu'$ such that for all $t \in T_{\Sigma'}$, 
\begin{itemize}
\item if $t \in T_\Sigma$ then $\hmu'(t) = \begin{bmatrix} \vzero_{d \times 1} & \hmu(t)\end{bmatrix}$
\item if there exists $t' \in T_\Sigma$ such that $\Gamma(t') = t$ then $\hmu'(t) = \begin{bmatrix} \hmu(t') & \vzero_{d \times 1}\end{bmatrix}$
\item $\hmu'(t) = \vzero_{2d \times 1}$ otherwise.
\end{itemize}
To achieve this property, we build $\mu'$ as follows:
\begin{itemize}
\item $\mu'(\sigma_0) = \begin{bmatrix} \vzero_{d \times 1} & \mu(\sigma_0)\end{bmatrix}$
\item for all $k \in \{1, \ldots, r\}$, for all $ \bm{i} \in \{1, \ldots, 2d\}^k  , j \in \{1, \ldots, 2d\}$,
\[
\mu'(\sigma_k)_{\bm{i},j} = 
\left\{\begin{array}{lr}
    \mu(\sigma)_{\bm{i} - d\bm{1},j-d} & \text{if }\forall \ell. i_\ell,j \in \{ d+1, \ldots, 2d \}\\
    0 & \text{otherwise}
\end{array}
\right.
\]
\item for all $k \in \{1, \ldots, r\}$, $\mu'(\alpha_{k,0}) = \begin{bmatrix} \hmu(\sigma_k(\sigma_0,\ldots,\sigma_0)) & \vzero_{d\times 1}\end{bmatrix}$.
\item for all $k \in \{1, \ldots, r\}$, for all $\ell \in \{1, \ldots, k\}$, denoting 
\[
M^{k,\ell}(x) = (I_{d^\ell} \otimes {\text{\Large $\otimes$}}_{j=\ell+1}^k \mu(\sigma_0))\mu(\sigma_k)(x + 1)
\]
we define for all for all $i_1,\ldots, i_\ell, j \in \{1, \ldots, 2d\}$,
\[
\mu'(\alpha_{k,\ell})_{\bm{i},j} = 
\left\{\begin{array}{l}
    M^{k,\ell}_{(i_1-d,\ldots,i_{\ell-1}-d,i_\ell),j} \quad
    \text{if }i_1,\ldots,i_{\ell-1} \in \{ d+1, \ldots, 2d \}, i_\ell,j \in \{1,\ldots,d\}\\
    0 \hfill\text{otherwise}
\end{array}
\right.
\]
\end{itemize}
We now prove the desired property by induction on the structure of $t$. In the base case, $t$ is a nullary symbol. Hence, either $t = \sigma_0$ or $t = \alpha_{k,0}$ for some $k \in \{1, \ldots, r\}$. In the former case, we have by definition that $\mu'(\sigma_0) = \begin{bmatrix} \vzero_{d \times 1} & \mu(\sigma_0)\end{bmatrix}$ hence the result directly holds. In the latter case, by definition of $\alpha_{k,0}$, we know that $\Gamma(\sigma_k(\sigma_0,\ldots,\sigma_0)) = \alpha_{k,0}$. Moreover, by definition, we have $\mu'(\alpha_{k,0}) = \begin{bmatrix} \hmu(\sigma_k(\sigma_0,\ldots,\sigma_0)) & \vzero_{d\times 1}\end{bmatrix}$.

\medskip

In the inductive step, we have that either $t = \sigma_k(t_1,\ldots,t_k)$ or $t = \alpha_{k,i}(t_1,\ldots,t_i)$ for some $k>0$ and $i>0$. 
\begin{itemize}
\item Case $t = \sigma_k(t_1,\ldots,t_k)$ when $t_1,\ldots,t_k \in T_\Sigma$: By inductive hypothesis on each $t_\ell$s, we have that for all $j \in \{1,\ldots,2d\}$, 
\[
\hmu'(t)_j = (\begin{bmatrix} \vzero_{1\times d} & \hmu(t_1) \end{bmatrix} \otimes \ldots \otimes \begin{bmatrix} \vzero_{1\times d} & \hmu(t_k) \end{bmatrix}) \mu'(\sigma_k)(\size{t})_j
\]
Thus, by definition of $\mu'(\sigma_k)$, we have $\mu'(t)_j = 0$ for $j \in \{1, \ldots, d\}$ and $\mu'(t)_j = (\hmu(t_1) \otimes \ldots \otimes \hmu(t_k)) \mu(\sigma_k)(\size{t})_{j-d}$ when $j \in \{d+1,\ldots,2d\}$. We deduce that $\hmu'(t) = \begin{bmatrix} \vzero_{d \times 1} & \hmu(t)\end{bmatrix}$.
\item Case $t = \sigma_k(t_1,\ldots,t_k)$ and there exists $\ell \in \{1, \ldots, k\}$ such that $t_\ell \in T_\Sigma$: By inductive hypothesis on $t_\ell$, we know that for all $j \in \{d+1,\ldots,2d\}$, $\mu(t_\ell)_j = 0$. Thus we deduce from the definition of $\mu'(\sigma_k)$ that 
\[
(\hmu'(t_1) \otimes \ldots \otimes \hmu'(t_k)) \mu(\sigma_k)(\size{t}) = \vzero_{1\times 2d}
\]
\item Case $t = \alpha_{k,i}(t_1,\ldots,t_i)$ and $t_1,\ldots,t_{i-1} \in T_\Sigma$ and there exists $t'_i \in T_\Sigma$ such that $\Gamma(t'_i) = t_i$: By inductive hypothesis on each $t_\ell$s, we know that for all $j \in \{1, \ldots, 2d\}$, 
\[
\mu'(t)_j = (\begin{bmatrix} \vzero_{1\times d} & \hmu(t_1) \end{bmatrix} \otimes \ldots \otimes \begin{bmatrix} \vzero_{1\times d} & \hmu(t_{i-1}) \end{bmatrix} \otimes \begin{bmatrix} \hmu(t'_i) & \vzero_{1\times d} \end{bmatrix}) \mu'(\alpha_{k,i})(\size{t})_j
\]
Recall that as $\Gamma(t'_i) = t_i$, we have $\size{t'_i} = \size{t_i} + 1$. Denoting $t' = \sigma_k(t_1,\ldots,t_{i-1}, t'_i,\sigma_0,\ldots,\sigma_0)$, we thus obtain $\size{t'} = \size{t} + 1$. Hence unfolding the definition of $\mu'(\alpha_{k,i})$ gives us for all $j \in \{d+1,\ldots, 2d\}$, $\mu'(t)_j = 0 $ and for all $j \in \{1,\ldots, d\}$,
\begin{align*}
\mu'(t)_j &= (\hmu(t_1) \otimes \ldots \otimes \hmu(t_{i-1}) \otimes \hmu(t'_i)) \cdot M^{k,i}_j(\size{t})\\
&=(\hmu(t_1) \otimes \ldots \otimes \hmu(t_{i-1}) \otimes \hmu(t'_i)) \cdot (I_{d^i} \otimes {\text{\Large $\otimes$}}_{\ell=i+1}^k \mu(\sigma_0)) \cdot \mu(\sigma_k)(\size{t'})_j
\end{align*}
Notice that $A = (\hmu(t_1) \otimes \ldots \otimes \hmu(t_{i-1}) \otimes \hmu(t'_i)) \in \K^{1\times d^i}$ and $B = {\text{\Large $\otimes$}}_{\ell=i+1}^k \mu(\sigma_0) \in \Q^{1\times d^{k-1}}$. Hence, by the mixed-product property of the Kronecker product, we obtain that $A(I_{d^i} \otimes B) = (AI_{d^i}) \otimes (I_1B) = A \otimes B$. Therefore:
\begin{align*}
\mu'(t)_j &= (\hmu(t_1) \otimes \ldots \otimes \hmu(t_{i-1}) \otimes \hmu(t'_i) \otimes \text{\Large $\otimes$}_{\ell=i+1}^k \mu(\sigma_0)) \cdot \mu(\sigma_k)(\size{t'})_j\\
&= \mu(\sigma_k(t_1,\ldots,t_{i-1},t'_i,\sigma_0,\ldots,\sigma_0))_j
\end{align*}
Notice that by definition of $\Gamma$, we have $\Gamma(t') = t$ which allows us to conclude that $t'\in T_\Sigma$, and $t = \Gamma(t')$, and $\hmu'(t) = \begin{bmatrix} \hmu(t') & \vzero_{d \times 1} \end{bmatrix}$.
\item Case $t = \alpha_{k,i}(t_1,\ldots,t_i)$ and either $t_\ell \not\in T_\Sigma$ with $\ell \in \{1, \ldots, i-1\}$ or for all $t'_i \in T_\Sigma$, $\Gamma(t'_i) \neq t_i$: In the former case, by inductive hypothesis on $t_\ell$, we deduce that for all $j \in \{d+1,\ldots, 2d\}$, $\mu'(t_\ell)_j = 0$. In the latter case, applying our inductive hypothesis on $t_i$ gives us that for all $j \in \{1,\ldots, d\}$, $\hmu'(t_i)_j = 0$. In both cases, we deduce from the definition of $\mu'(\alpha_{k,i})$ that 
\[
(\hmu'(t_1) \otimes \ldots \otimes \hmu'(t_i)) \mu'(\alpha_{k,i})(\size{t}) = \vzero_{1\times 2d}
\]
\end{itemize}
This conclude the proof of the desired property. Let us now conclude the main result by computing the power series. Thanks to our desired property, notice that for all $t \in T_{\Sigma'}$, $\sem{\A'}{t} \neq 0$ implies that there exists $t' \in T_\Sigma$ such that $\Gamma(t') = t$ and $\sem{\A'}{t} = \sem{\A}{t'}$. As $\Gamma$ is injective, we deduce that for all $n \in \N$, 
\[
\sum_{\substack{t \in T_{\Sigma'}\\ \size{t} = n}} \sem{\A'}{t} =
\sum_{\substack{t \in T_{\Sigma'}, \size{t} = n \\ \exists t' \in T_\Sigma. \Gamma(t') = t}} \sem{\A'}{t} = \sum_{\substack{t' \in T_{\Sigma}\\ \size{t'} = n+1}} \sem{\A}{t'}
\]
which allows us to conclude.
\end{proof}


\begin{corollary}
\label{cor:closure gen derive and integral}
Let $\A$ be a differential tree automaton. Let $f_A(x) = \sum_{n=0}^\infty a_nx^n$.
\begin{itemize}
\item There exists a differential tree automaton $\A'$ such that $f_{\A'}(x) = f_{\A}'(x)$.
\item There exists a differential tree automaton $\A'$ such that $f_{\A'}(x) = \int_{0}^{x} f_{\A}(x) dx$.
\end{itemize}
\end{corollary}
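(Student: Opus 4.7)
The plan is to obtain both statements by composing the closure constructions that have already been established, without introducing any new automaton construction.

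For the derivative, the key observation is that $f_\A'(x) = \sum_{n=0}^\infty (n+1)\,a_{n+1}\,x^n$, which we can rewrite as $f_\A'(x) = \frac{\SDerive{f_\A}(x)}{x}$, since $\SDerive{f_\A}(x) = x\,f_\A'(x) = \sum_{n=1}^\infty n\,a_n\,x^n$ has zero constant term. First I would invoke \Cref{lem:closure gen derive} to obtain an automaton $\A_1$ with $f_{\A_1}(x) = \SDerive{f_\A}(x)$. Then, since the constant coefficient of $f_{\A_1}$ is $0$, applying \Cref{prop:closure gen devise} yields an automaton $\A'$ with
\[
f_{\A'}(x) \;=\; \frac{f_{\A_1}(x) - f_{\A_1}(0)}{x} \;=\; \frac{\SDerive{f_\A}(x)}{x} \;=\; f_\A'(x) \, .
\]

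For the integral, the analogous identity is
\[
\int_0^x f_\A(t)\,dt \;=\; x \cdot \left( \frac{1}{x}\int_0^x f_\A(t)\,dt \right) .
\]
First I would invoke \Cref{lem:closure gen integral} to obtain an automaton $\A_1$ with $f_{\A_1}(x) = \frac{1}{x}\int_0^x f_\A(t)\,dt$. Then, applying the first bullet of \Cref{lem:closure gen product} (multiplication by $x$) to $\A_1$ yields an automaton $\A'$ with $f_{\A'}(x) = x \cdot f_{\A_1}(x) = \int_0^x f_\A(t)\,dt$.

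Both arguments are immediate from already-proved results, so I do not expect any real obstacle; the only subtlety is verifying the constant term vanishes in the derivative case so that \Cref{prop:closure gen devise} may legitimately be applied to divide by $x$, but this is automatic from the presence of the factor $x$ in $\SDerive{f_\A}$.
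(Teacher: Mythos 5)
Your proposal is correct and follows essentially the same route as the paper: the derivative is obtained by composing \Cref{lem:closure gen derive} with \Cref{prop:closure gen devise} (valid because $x f_\A'(x)$ has zero constant term), and the integral by composing \Cref{lem:closure gen integral} with the multiplication-by-$x$ construction of \Cref{lem:closure gen product}. No gaps.
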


\begin{proof}
From \Cref{lem:closure gen derive}, we have a $f_{\A''}(x) = xf'_\A(x)$ for some $\A''$ and $f_{\A''}(x)$ has no constant term. Hence by \Cref{prop:closure gen devise}, we have $f_{\A'}(x) = \frac{f_{\A''}(x)}{x} = f'_\A(x)$. From \Cref{lem:closure gen integral,lem:closure gen product}, we directly obtain that $f_{\A'}(x) = \int_{0}^{x} f_{\A}(x) dx$ for some $\A'$.
\end{proof}


\begin{corollary}
\label{cor:closure gen product complete}
Let $\A_1$, $\A_2$ be two differential tree automata. There exists a differential tree automaton $\A$ such that $f_{\A}(x) = f_{\A_1}(x) \cdot f_{\A_2}(x)$.
\end{corollary}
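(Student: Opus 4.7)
The plan is to derive this corollary by composing two closure results already established in the appendix, namely \Cref{lem:closure gen product} (which produces a product of generating functions scaled by $x$) and \Cref{prop:closure gen devise} (which allows dividing by $x$, i.e.\ performing a forward shift when the constant coefficient vanishes).

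First I would apply \Cref{lem:closure gen product} to $\A_1$ and $\A_2$ to obtain a holonomic tree automaton $\A''$ whose generating function is
\[
f_{\A''}(x) \;=\; x \cdot f_{\A_1}(x) \cdot f_{\A_2}(x).
\]
The key observation is that $f_{\A''}(x)$ has no constant term: writing $f_{\A''}(x) = \sum_{n=0}^\infty c_n x^n$, we have $c_0 = 0$ since the expression is divisible by $x$ as a power series.

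Next I would apply \Cref{prop:closure gen devise} to $\A''$, which yields a holonomic tree automaton $\A$ such that
\[
f_{\A}(x) \;=\; \frac{f_{\A''}(x) - c_0}{x} \;=\; \frac{f_{\A''}(x)}{x} \;=\; f_{\A_1}(x) \cdot f_{\A_2}(x),
\]
as desired. No step here requires new combinatorial or algebraic argument; the claim is a direct two-line consequence, mirroring the way \Cref{cor:closure gen derive and integral} combines \Cref{lem:closure gen derive,lem:closure gen integral,lem:closure gen product,prop:closure gen devise}. Hence there is no real obstacle: the proof should consist of exactly the invocation above, and the only thing to check is that the forward-shift lemma is applicable, which follows from the trivial observation that multiplying a power series by $x$ kills its constant term.
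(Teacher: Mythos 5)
Your proof is correct and follows essentially the same route as the paper: apply \Cref{lem:closure gen product} to obtain an automaton with generating function $x\, f_{\A_1}(x) f_{\A_2}(x)$, observe the constant term vanishes, and then apply the shift of \Cref{prop:closure gen devise}. (The paper's one-line proof cites \Cref{lem:closure sem product}, but that appears to be a typo for \Cref{lem:closure gen product}, since the Hadamard product of tree series does not yield the Cauchy product of generating functions; your citation is the intended one.)
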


\begin{proof}
Direct from \Cref{lem:closure sem product,prop:closure gen devise}.
\end{proof}

\begin{proposition}
\label{prop:closure gen inverse}
Let $\A$ be a differential tree automaton. Let $f_\A(x) = \sum_{n=0}^\infty a_nx^n$. If $a_0 \neq 0$ then there exists a differential tree automaton $\A'$ such that $f_{\A'}(x) = \frac{1}{f_{\A}(x)}$.
\end{proposition}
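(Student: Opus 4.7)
I will give a direct construction of $\A'$ by writing $g := 1/f_\A$ as a geometric series and realising each term combinatorially via right-combed binary trees. Let $b_n$ denote the $n$-th coefficient of $g$; the functional equation $f_\A \cdot g = 1$ gives $b_0 = 1/a_0$ and $b_n = -\tfrac{1}{a_0}\sum_{j=0}^{n-1} a_{j+1}\, b_{n-1-j}$ for $n \geq 1$. Unrolling this recurrence yields $g(x) = \tfrac{1}{a_0}\sum_{k\geq 0}\bigl(-\hat f(x)/a_0\bigr)^k$, where $\hat f(x) := (f_\A(x)-a_0)/x = \sum_{n\geq 0} a_{n+1}x^n$ lies in $\Q[\![x]\!]$. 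By Proposition~\ref{prop:closure gen devise}, there is a holonomic tree automaton $\B = (d_\B,\mu_\B)$ over an alphabet $\Sigma_\B$ with $f_\B = \hat f$, so $\sum_{t \in T_{\Sigma_\B},\, |t|=n} \sem{\B}{t} = a_{n+1}$.

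I then take $\A' = (d_\B+1,\mu')$ over $\Sigma' := \Sigma_\B \cup \{\oplus,\bot\}$, where $\oplus$ is a fresh binary symbol and $\bot$ a fresh nullary one. The first coordinate is an accumulator; the remaining $d_\B$ coordinates shadow $\B$. Define
\[
\mu'(a) = \begin{bmatrix} 0 & \mu_\B(a)\end{bmatrix} \text{ for nullary } a\in\Sigma_\B, \qquad
\mu'(\bot) = \begin{bmatrix} 1/a_0 & \vzero_{1\times d_\B}\end{bmatrix},
\]
lift each $g \in (\Sigma_\B)_k$ to act only on the shadow block (setting $\mu'(g)_{(i_1,\ldots,i_k),j} := \mu_\B(g)_{(i_1-1,\ldots,i_k-1),j-1}$ when all indices lie in $\{2,\ldots,d_\B+1\}$, and $0$ otherwise), and set $\mu'(\oplus)_{(i_1,i_2),j} = -1/a_0$ precisely when $(i_1,i_2,j)=(2,1,1)$ and $0$ otherwise.

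A structural induction on trees---modelled on the proof of Proposition~\ref{lem:closure gen product}---then establishes that (i) for $t \in T_{\Sigma_\B}$, $\hmu'(t) = \begin{bmatrix} 0 & \hmu_\B(t)\end{bmatrix}$; (ii) $\hmu'(\bot) = \begin{bmatrix} 1/a_0 & \vzero\end{bmatrix}$; (iii) for $t = \oplus(T,U)$ with $T \in T_{\Sigma_\B}$ and $U \in T_{\Sigma'}$, $\hmu'(t)_1 = -\tfrac{1}{a_0}\hmu_\B(T)_1\cdot \hmu'(U)_1$ and the remaining coordinates vanish; (iv) every other tree of $T_{\Sigma'}$ evaluates to $\vzero$. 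Summing over $\A'$-trees of size $n$ picks out exactly the right-combed trees $\oplus(T_1,\oplus(T_2,\ldots,\oplus(T_k,\bot)\ldots))$, producing the total weight $\tfrac{1}{a_0}\sum_{k\geq 0}(-1/a_0)^k\sum_{m_1+\cdots+m_k=n,\,m_i\geq 1} a_{m_1}\cdots a_{m_k}$, which is exactly $b_n$.

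The main obstacle is the bookkeeping in claim~(iv): I must verify that pure $\B$-trees (viewed inside $\A'$) contribute $0$ to the accumulator, and that an $\oplus$-subtree whose left child is not a pure $\B$-tree, or whose right child is not an $\A'$-accumulator tree, also contributes $0$. Both facts are forced by the precise nonzero pattern of $\mu'(\oplus)$, which demands $(i_1,i_2) = (2,1)$ and thereby types the left argument into the shadow $\B$-block and the right argument into the accumulator coordinate; the rest of the induction proceeds exactly as in the binary case of Proposition~\ref{lem:closure gen product}.
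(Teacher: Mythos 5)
Your construction is essentially the paper's: both reduce to the backward shift $\sum_n a_{n+1}x^n$ via Proposition~\ref{prop:closure gen devise}, adjoin one accumulator coordinate and a fresh binary symbol whose only nonzero weight is $-1/a_0$ at index $((2,1),1)$, and seed the accumulator with $1/a_0$ (the paper merely folds the seed into the existing nullary symbol instead of a fresh $\bot$, and verifies the aggregate recurrence $\va'_n=\begin{bmatrix} b_n & \vc_n\end{bmatrix}$ by induction on $n$ rather than enumerating the right-combs). One small slip: the displayed identity $g=\frac{1}{a_0}\sum_{k}\bigl(-\hat f/a_0\bigr)^k$ should read $g=\frac{1}{a_0}\sum_{k}\bigl(-(f_\A-a_0)/a_0\bigr)^k=\frac{1}{a_0}\sum_{k}\bigl(-x\hat f/a_0\bigr)^k$ (as written the summands have nonzero constant terms, so the sum is not $x$-adically convergent), but your final coefficient formula and the construction itself are correct.
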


\begin{proof}
It is well known that when $a_0 \neq 0$, $f_A(x)$ admits an multiplicative inverse power series $f^{-1}_\A(x) = \sum_{n=0}^\infty b_nx^n$ where the coefficients of $f^{-1}_\A(x)$ satisfy the following property:
\[
\left\{
\begin{array}{l}
b_0 = \frac{1}{a_0}\\
b_n = - \frac{1}{a_0} \sum_{i=0}^{n-1} a_{i+1}b_{n-1-i}\quad \forall n > 0 
\end{array}
\right.
\]
Using \Cref{prop:closure gen devise}, we first build the automaton $\A''$ such that $f_{\A''}(x) = \sum^\infty_{n=0} a_{n+1}x^n$. By \Cref{prop:arity distinct}, we can assume that $\A'' = (d,\mu)$ over some alphabet $\Sigma$ that is arity distinct. Let us denote $\vf(x) = \sum_{n=0} \vc_nx^n$ the vector of power series corresponding to $\A''$, that is $\vc_n = \sum_{\substack{t \in T_\Sigma\\\size{t}=n}} \mu(t)$. Notice that $\vc_{n,1} = a_{n+1}$ for $n \in \N$.

We build the automaton $\A' = (d',\mu')$ over $\Sigma' = \Sigma \cup \{ \beta\}$ with $d' = d+1$, and $\beta$ a fresh binary function symbol and the weight function $\mu'$ such that for all $n \geq 0$, $\va'_n = \sum_{\substack{t \in T_{\Sigma'}\\\size{t}=n}} \mu'(t) = \begin{bmatrix} b_n & \vc_n \end{bmatrix}$. For that purpose, let us define $\mu'$ such that 
\begin{itemize}
\item $\mu'(a) = \begin{bmatrix} \frac{1}{a_0} & \mu(a) \end{bmatrix}$ where $a$ is the unique nullary function symbol in $\Sigma$ (recall that $\Sigma$ is arity distinct)
\item for all $k > 0$, for all $\sigma \in \Sigma_k$, for all $ \bm{i} \in \{1,\ldots,d+1\}^k ,j \in \{1,\ldots,d+1\} $, 
\[
\mu'(\sigma)_{\bm{i},j} = 
\left\{\begin{array}{lr}
    \mu(\sigma)_{\bm{i - 1},j-1} &  \text{if }\forall \ell. i_\ell,j \in \{2,\ldots,d+1\}\\
    0 & \text{otherwise }
\end{array}
\right.
\]
\item 
\[\mu'(\beta) = 
\begin{bmatrix}
0 & \vzero_{1\times d}\\
-\frac{1}{a_0} & \vzero_{1\times d}\\
\vzero_{({d'}^2-2) \times 1} & \vzero_{({d'}^2-2) \times d}
\end{bmatrix}
\]
\end{itemize}

In the inductive step $n > 0$, from our inductive hypothesis, we have
\begin{align*}
\va'_n &=  \sum_{\substack{k>0\\\sigma \in \Sigma'_k}}\; \sum_{\substack{n_1,\ldots,n_k \in \N \\ n_1 +\ldots n_k = n-1}} (\va'_{n_1} \otimes \ldots \otimes \va'_{n_k}) \mu'(\sigma)(n)\\
&= \sum_{\substack{k>0\\\sigma \in \Sigma'_k}}\; \sum_{\substack{n_1,\ldots,n_k \in \N \\ n_1 +\ldots n_k = n-1}} (\begin{bmatrix} b_{n_1} & \vc_{n_1} \end{bmatrix} \otimes \ldots \otimes \begin{bmatrix} b_{n_k} & \vc_{n_k} \end{bmatrix}) \mu'(\sigma)(n)\\
&= \sum_{\substack{k>0\\\sigma \in \Sigma_k}} \sum_{\substack{n_1,\ldots,n_k \in \N \\ n_1 +\ldots n_k = n-1}} (\begin{bmatrix} b_{n_1} & \vc_{n_1} \end{bmatrix} \otimes \ldots \otimes \begin{bmatrix} b_{n_k} & \vc_{n_k} \end{bmatrix}) \mu'(\sigma)(n)\\
&\quad + \sum_{i=0}^{n-1} (\begin{bmatrix} b_i & \vc_i \end{bmatrix} \otimes \begin{bmatrix} b_{n-1-i} & \vc_{n-1-i} \end{bmatrix}) \mu'(\beta)(n)
\end{align*}
Following the definition of $\mu'(\beta)(x)$, we deduce that
\[
(\begin{bmatrix} b_i & \vc_i \end{bmatrix} \otimes \begin{bmatrix} b_{n-1-i} & \vc_{n-1-i} \end{bmatrix}) \mu'(\beta)(n) = \begin{bmatrix} -\frac{1}{a_0}\vc_{i,1}\,b_{n-1-i} & \vzero_{1\times d}\end{bmatrix} = \begin{bmatrix} -\frac{a_{n+1}b_{n-1-i}}{a_0} & \vzero_{1\times d}\end{bmatrix}
\]
Moreover, by definition of $\mu'(\sigma)(x)$ for all $k>0$, for all $\sigma \in \Sigma_k$, we deduce that:
\[
(\begin{bmatrix} b_{n_1} & \vc_{n_1} \end{bmatrix} \otimes \ldots \otimes \begin{bmatrix} b_{n_k} & \vc_{n_k} \end{bmatrix}) \mu'(\sigma)(n) = \begin{bmatrix} 0 & (\vc_1 \otimes \ldots \otimes \vc_k) \mu(\sigma)(n) \end{bmatrix}
\]
We therefore obtain:
\begin{align*}
\va'_n &= \sum_{\substack{k>0\\\sigma \in \Sigma_k}}\; \sum_{\substack{n_1,\ldots,n_k \in \N \\ n_1 +\ldots n_k = n-1}} \begin{bmatrix} 0 & (\vc_1 \otimes \ldots \otimes \vc_k) \mu(\sigma)(n,n_1,\ldots,n_k) \end{bmatrix}  + \sum_{i=0}^{n-1} \begin{bmatrix} -\frac{a_{n+1}b_{n-1-i}}{a_0} & \vzero_{1\times d}\end{bmatrix}\\
&= \begin{bmatrix} 0 & \vc_n \end{bmatrix} + \begin{bmatrix} b_n & \vzero_{1\times d}\end{bmatrix}\\
&= \begin{bmatrix} b_n & \vc_n \end{bmatrix}
\end{align*}
As for all $n \geq 0$, $\va'_{n,1} = b_n$, we conclude that $f_{\A'}(x) = f^{-1}_{\A}(x)$.
\end{proof}

\end{document}